	\tikzstyle{frame} = [draw, -latex]
	\tikzstyle{lineUD} = [draw]
	\tikzstyle{line} = [draw, -latex']
	\tikzstyle{line2} = [draw, -latex', dashdotted]
	\tikzstyle{line3} = [draw, -latex', dashed]
	\tikzstyle{line3UD} = [draw, dashed]
	\tikzstyle{place} = [circle, draw=black, fill=white, thick, inner sep=2pt, minimum size=1mm]
	\tikzstyle{placeRed} = [circle, draw=red, fill=red, thick, inner sep=2pt, minimum size=1mm]
	\tikzstyle{vertex} = [circle, draw=black, fill=black, thick, inner sep=2pt, minimum size=1mm]
\newtheorem{definition}{Definition}
\newtheorem{theorem}{Theorem}
\newtheorem{lemma}{Lemma}
\newtheorem{remark}{Remark}
\newtheorem{proposition}{Proposition}
\newtheorem{assumption}{Assumption}
\newcommand{\mbf}{\mathbf}
\newcommand{\myemph}{\emph}
\DeclareMathOperator{\rank}{{rank}}
\DeclareMathOperator{\vspan}{{span}}
\DeclareMathOperator{\vnull}{{Null}}
\DeclarePairedDelimiter \card{\lvert}{\rvert}
\DeclarePairedDelimiter \norm{\|}{\|}
\providecommand\given{}
\newcommand\SetSymbol[1][]{\nonscript\;#1\vert\nonscript\;
\mathopen{}\allowbreak}
\DeclarePairedDelimiterX\set[1]\{\}{%
\renewcommand\given{\SetSymbol[\delimsize]}
#1
}
\begin{document}
%
\title{
Generalized weak rigidity: Theory, and local and global convergence of formations
}

%
%

\author{
Seong-Ho Kwon and Hyo-Sung Ahn
\thanks{S.-H. Kwon and H.-S. Ahn are with Distributed Control \& Autonomous Systems Lab. (DCASL), School of Mechanical Engineering, Gwangju Institute of Science and Technology (GIST), Gwangju 61005, Korea. E-mails: seongho@gist.ac.kr; hyosung@gist.ac.kr}
}

\maketitle

\begin{abstract}
This paper proposes a generalized weak rigidity theory, and aims to apply the theory to formation control problems with a gradient flow law. The generalized weak rigidity theory is utilized in order to characterize desired rigid formations by a general set of pure inter-agent distances and angles. As the first result of its applications,  this paper provides analysis of locally exponential stability for a formation control system with pure distance/angle constraints in $2$- and $3$-dimensional spaces. Then, as the second result, it is shown that if there are three agents in $2$-dimensional space then almost globally exponential stability is ensured for a formation control system with pure distance/angle constraints.
Through numerical simulations, the validity of analyses is illustrated.
\end{abstract}

\section{Introduction} \label{Sec:introduce}
Based on rigidity theories, distributed formation control has been investigated under the networked multi-agent systems \cite{oh2015survey, anderson2008rigid}. In formation control problems, the rigidity theories have been key concepts  to characterize a rigid formation shape for a framework with specific constraints, such as distances, bearings, subtended angles, etc., where the theories can briefly be classified according to types of constraints; for example, distance-based rigidity theory, bearing-based rigidity theory, angle-based rigidity theory and mixed rigidity theory. 

In particular, based on use of the distance-based rigidity (distance rigidity) theory \cite{asimow1978rigidity,asimow1979rigidity,roth1981rigid,C:Hendrickson:SIAM1992}, formation control problems have been extensively studied \cite{krick2009stabilisation,cortes2009global,sun2014finite,sun2016exponential,dorfler2010geometric,dimarogonas2008stability,cai2014rigidity}, where a rigid formation is characterized by constraints of inter-agent distances. In formation control with the distance rigidity theory, each agent is required to sense relative positions of its neighbors.
In terms of the bearing-based rigidity (bearing rigidity) theory \cite{whiteley1996some,franchi2012decentralized,zhao2016bearing}, 
inter-agent bearings are used to achieve a unique formation shape with which
formation control problems have been also studied \cite{zhao2016bearing,zhao2017translational}.
This approach makes use of measurement of relative bearings or positions of its neighbors in formation control.
In recent years, formation control problems with the angle-based and mixed rigidity theories have attracted much research interest \cite{anderson2008rigid,bishop2015distributed, park2017rigidity,jing2018weak,kwon2018infinitesimal,kwon2018infinitesimal_inp.,cao2019bearing}.

This paper particularly focuses on formation control based on the mixed rigidity theory with distances and angles, where the mixed rigidity theory with distance and angle information is called weak rigidity theory \cite{park2017rigidity,jing2018weak,kwon2018infinitesimal_inp.,kwon2018infinitesimal}.
In fact, the weak rigidity theory has been interpreted in a different way according to publications, that is,
the weak rigidity theories studied in publications \cite{park2017rigidity,jing2018weak,kwon2018infinitesimal_inp.,kwon2018infinitesimal} are conceptually similar in the sense that angle information is used; but there are differences among the theories. 
In this paper, to distinguish the existing works, we call the theories of  Park et al. (2017)\cite{park2017rigidity}, Jing et al. (2018)\cite{jing2018weak} and Kwon et al. (2018)\cite{kwon2018infinitesimal_inp.} \textit{basic weak rigidity theory}, \textit{type-1 weak rigidity theory} and  \textit{type-2 weak rigidity theory}, respectively.
%

In the work based on the basic weak rigidity theory\cite{park2017rigidity}, the authors introduce the weak rigidity theory for the first time, where the theory is studied with some special cases in the $2$-dimensional space. In accordance with the definition of the basic weak rigidity theory, a rigid formation has to be composed of triangular formations, and each triangular formation should have two adjacency distance constraints to define a subtended angle constraint.
For example, as shown in Fig.~\ref{Fig:diff_figs_a}, two distance constraints for a subtended angle constraint should be defined for the triangular formation. 
Based on the type-1 weak rigidity theory\cite{jing2018weak},
inner products of inter-agent relative positions are used as angle constraints to characterize rigid formations, where the inner products are distinct from the cosines of the angles among agents, i.e., they are different from inner products of inter-agent relative bearings. In this approach, it is remarkable that an inner product includes distance and angle information simultaneously and
thus it could include redundant information when characterizing rigid formations; for example, considering two inner products $z_{21}^\top z_{31}$ and $z_{13}^\top z_{23}$, where $z_{ij}$ denotes a relative position from agent $j$ to agent $i$, we can observe that the Euclidean norm of $z_{13}$ is redundantly used.
In recent years, the type-2 weak rigidity theory\cite{kwon2018infinitesimal,kwon2018infinitesimal_inp.} has been introduced, where the concept of the type-2 weak rigidity theory is extended from the basic weak rigidity theory but distinguished from the type-1 weak rigidity theory by types of constraints.
Compared with the type-1 weak rigidity theory, the type-2 weak rigidity theory involves pure distance/angle constraints without any redundant information; for example, see Fig.~\ref{Fig:diff_figs_b}. Moreover, one can achieve a rigid formation with only pure angle constraints that does not need any distance constraints as shown in Fig.~\ref{Fig:diff_figs_c} whereas one cannot with the type-1 weak rigidity theory.
The comparison between the type-1 and type-2 weak rigidity theories is again highlighted in Remark~\ref{Remark:comparison} in Section~\ref{Sec:Weak_rigidity}.

\begin{figure}[]
\centering
\subfigure[Triangular formation characterized by two distance constraints and one angle constraint subtend by the two distance constraints.]{ \label{Fig:diff_figs_a}
\qquad \begin{tikzpicture}[scale=.7]
\node[place] (node1) at (0,2) [label=above:$1$] {};
\node[place] (node2) at (-2,0) [label=left:$2$] {};
\node[place] (node3) at (2,0) [label=right:$3$] {};

\draw[lineUD] (node1)  -- node [above left] {$d_{12}$} (node2);
\draw[lineUD] (node1)  -- node [above right] {$d_{13}$} (node3);
\pic [draw, -, "${\theta}^1_{23}$", angle eccentricity=1.7, angle radius=0.4cm] {angle = node2--node1--node3};
\end{tikzpicture}\qquad%
} \quad
\subfigure[Triangular formation characterized by one distance constraint and two angle constraints.]{ \label{Fig:diff_figs_b}
\qquad\begin{tikzpicture}[scale=0.7]
\node[place] (node1) at (0,2) [label=above:$1$] {};
\node[place] (node2) at (-2,0) [label=left:$2$] {};
\node[place] (node3) at (2,0) [label=right:$3$] {};

\draw[lineUD] (node1)  -- node [above left] {$d_{12}$} (node2);
\draw[dashed] (node2)  -- (node3);
\draw[dashed] (node1)  -- (node3);

\pic [draw, -, "${\theta}^1_{23}$", angle eccentricity=1.7, angle radius=0.4cm] {angle = node2--node1--node3};
\pic [draw, -, "${\theta}^3_{12}$", angle eccentricity=1.7, angle radius=0.4cm] {angle = node1--node3--node2};
\end{tikzpicture}\qquad%
} \quad
\subfigure[Triangular formation characterized by two angle constraints.]{ \label{Fig:diff_figs_c}
\qquad\begin{tikzpicture}[scale=0.7]
\node[place] (node1) at (0,2) [label=above:$1$] {};
\node[place] (node2) at (-2,0) [label=left:$2$] {};
\node[place] (node3) at (2,0) [label=right:$3$] {};

\draw[dashed] (node1)  -- node {} (node2);
\draw[dashed] (node2)  -- node {} (node3);
\draw[dashed] (node1)  -- node {} (node3);

\pic [draw, -, "${\theta}^1_{23}$", angle eccentricity=1.7, angle radius=0.4cm] {angle = node2--node1--node3};
\pic [draw, -, "${\theta}^3_{12}$", angle eccentricity=1.7, angle radius=0.4cm] {angle = node1--node3--node2};
\end{tikzpicture}\qquad%
}
\caption{Triangular formations with different constraints. The symbol $d_{ij}$ denotes a distance constraint between vertices $i$ and $j$, and the symbol $\theta_{ij}^{k}$ denotes an angle constraint subtended by edges $(i,k)$ and $(j,k)$. The dashed lines indicate virtual edges which are not distance constraints. 
} 
\end{figure}
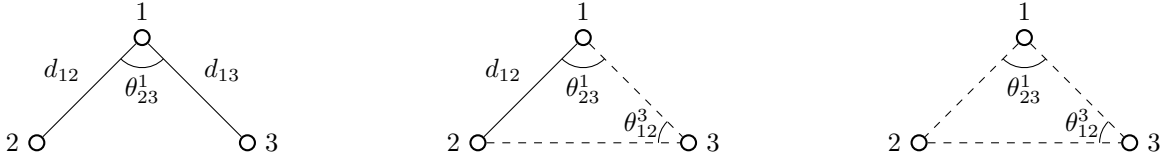

Based on the type-1 weak rigidity theory, the studies on multi-agent formation control in the $d$-dimensional space are almost completed in Jing et al. (2018)\cite{jing2018weak}. 
On the other hand, there are still many tasks that need to be studied in the case of the type-2 weak rigidity theory in $d$-dimensional space. 
In this sense, this paper aims to explore the type-2 weak rigidity theory and to apply the theory to formation control. In this paper, to differentiate between the weak rigidity theories, the extended concept from the type-2 weak rigidity theory is named \textit{generalized weak rigidity}.
Consequently, the main contributions of this paper are summarized as follows. 
First, we introduce the concepts of \textit{generalized weak rigidity} and \textit{generalized infinitesimal weak rigidity} in the $2$- and $3$-dimensional spaces. 
These concepts are used to examine whether or not a given formation with pure distance/angle constraints is globally or locally  rigid.
We then show that both concepts are generic properties.
Moreover,  it is shown that the generalized weak rigidity theory is a weaker condition than the classic distance rigidity theory.
Second, we apply the generalized weak rigidity theory to formation control with a gradient flow law. Based on the generalized weak rigidity theory, we provide analysis of locally exponential stability on a $n$-agent formation control system in the $2$- and $3$-dimensional spaces, and further analysis of almost globally exponential stability on a $3$-agent formation control system in the $2$-dimensional space.

The rest of this paper is organized as follows. Preliminaries, notations and motivation are briefly given in Section \ref{Sec:Pre_Notation}. Section \ref{Sec:Weak_rigidity} presents the generalized weak rigidity theory. Based on the rigidity theory, Sections \ref{Sec:Formation_control_local_stab.} and \ref{Sec:Formation_control_almost} discuss analysis of local convergence and almost global convergence of formations, respectively. Section \ref{Sec:Simul} presents numerical simulations to support our analysis. Finally, Section \ref{Sec:Conclusion} provides conclusion and summary.
\section{Preliminary} \label{Sec:Pre_Notation}
Let $\norm{\cdot}$ and $\card{\mathcal{S}}$ denote the Euclidean norm of a vector and cardinality of a set $\mathcal{S}$, respectively. The symbols $\vnull(\cdot)$ and $\rank (\cdot)$ denote the null space and rank of a matrix, respectively. The symbol $I_N \in \mathbb{R}^{N \times N}$ denotes the identity matrix, and the symbol $\mathds{1}_n\in\mathbb{R}^n$ denotes a vector whose all entries are $1$ as $\mathds{1}_n = [1, ..., 1]^\top$.
We define an undirected graph $\mathcal{G}$ as $\mathcal{G} = (\mathcal{V},\mathcal{E})$, where $\mathcal{V}=\set{1,2,...,n}$ denotes a vertex set and $\mathcal{E} \subseteq \mathcal{V} \times \mathcal{V}$ denotes an edge set with $m=\card{\mathcal{E}}$.
Since an undirected graph is considered, it is assumed that $(i,j) = (j,i)$ for all $i,j \in \mathcal{V}$. 
An angle set $\mathcal{A} \subseteq \mathcal{V} \times \mathcal{V}\times \mathcal{V}$ is defined as $\mathcal{A} = \set{(k,i,j) \given \theta_{ij}^{k} \text{ is assigned to } i,j,k \in \mathcal{V}, \theta_{ij}^{k} \in [0,\pi]}$ with $w=\card{\mathcal{A}}$, where $\theta_{ij}^{k}$  denotes an angle subtended by the adjacent edges $(i,k)$ and $(j,k)$, where the adjacent edges $(i,k)$ and $(j,k)$ do not necessarily belong to $\mathcal{G}$. Angles used in this paper have no directions and signs.
For a position vector $p_i \in \mathbb{R}^{d}$, we define a configuration $p$ of $\mathcal{G}$ as $p= [p_{1}^\top,...,p_{n}^\top]^\top \in \mathbb{R}^{dn}$ and define a framework as $(\mathcal{G},\mathcal{A},p)$ in $\mathbb{R}^{d}$.
We define a relative position vector as $z_{ij} = p_{i} - p_{j}$ for a framework $(\mathcal{G},\mathcal{A},p)$, $(i,j)\in \mathcal{E}$ and $i \neq j$. We set the order of the associated relative position vectors $z_{ij}$ as $z_{g_{ij}} = z_{ij}, g\in \{ 1,..., m\}$. 
Similarly, for $(k,i,j) \in \mathcal{A}$ and $h\in \{ 1,..., w\}$, a cosine $A_{h_{kij}}$ is defined as 
$A_{h_{kij}} = \cos{\theta_{ij}^{k}}$. It is remarkable that $A_{h_{kij}}$ is equivalently represented as $A_{h_{kij}}=\cos{\theta_{ij}^{k}} = \frac{z_{ki}^\top z_{kj}}{\norm{z_{ki}}\norm{z_{kj}}} = \frac{\norm{z_{ki}}^{2} + \norm{z_{kj}}^{2} - \norm{z_{ij}}^{2}}{2\norm{z_{ki}}\norm{z_{kj}}}$. 
We occasionally make use of $z_g$ and $A_h$ for notational convenience instead of $z_{g_{ij}}$ and $A_{h_{kij}}$, respectively, if no confusion is expected.
Note that, in this paper, we focus on problems only in $2$- and $3$-dimensional spaces, i.e., $d=2,3$.

\begin{remark}
The advantages of formation control studied in this paper are mainly fourfold.
First,  the proposed formation control with pure distance/angle constraints is convenient to control scalings of formations; for example, when we want to control a scaling of the formation illustrated in Fig.~\ref{Fig:weak_superiority_02}, we only need to control the distance constraint between agents $1$ and $2$ while all distance constraints of the formation illustrated in Fig.~\ref{Fig:weak_superiority_01} have to be controlled.
This is due to the fact that pure angle constraints are invariant to trivial motions corresponding to translations, rotations and scalings of an entire formation while distance constraints are invariant to only a subset of the motions, i.e., translations and rotations. 
Second, the proposed control system is a distributed multi-agent system, that is, each agent only needs to measure relative positions of its neighbor agents with respect to  its local coordinate system.
Third, each agent does not require any wireless communication if scaling control of formations is not considered.
Fourth, if wireless communications are inevitable when we control scalings of formations, then we can reduce the communication load; for example, when we want to control a scaling of the formation in Fig.~\ref{Fig:weak_superiority_02}, it is only necessary to command agents $1$ and $2$ to change desired distance constraints while agents $3$ and $4$ do not have to be commanded. 
\end{remark}

\begin{figure}[]
\centering
\subfigure[Locally unique formation with pure distance constraints]{\label{Fig:weak_superiority_01}
\qquad\qquad\qquad\quad\begin{tikzpicture}[scale=.9]
\node[place] (node1) at (0,2) [label=left:$1$] {};
\node[place] (node2) at (-1,0) [label=left:$2$] {};
\node[place] (node3) at (1,0) [label=right:$3$] {};
\node[place] (node4) at (2,2) [label=right:$4$] {};

\draw[lineUD] (node1)  -- node [above left] {} (node2);
\draw[lineUD] (node1)  -- (node3);
\draw[lineUD] (node2)  -- (node3);
\draw[lineUD] (node1)  -- (node4);
\draw[lineUD] (node3)  -- (node4);
\end{tikzpicture} \qquad\qquad\qquad\quad
}\,
\subfigure[Locally unique formation with pure distance and angle constraints]{\label{Fig:weak_superiority_02}
\qquad\qquad\qquad\quad\begin{tikzpicture}[scale=.9]
\node[place] (node1) at (0,2) [label=left:$1$] {};
\node[place] (node2) at (-1,0) [label=left:$2$] {};
\node[place] (node3) at (1,0) [label=right:$3$] {};
\node[place] (node4) at (2,2) [label=right:$4$] {};

\draw[lineUD] (node1)  -- node [above left] {} (node2);
\draw[dashed] (node1)  -- (node3);
\draw[dashed] (node2)  -- (node3);
\draw[dashed] (node1)  -- (node4);
\draw[dashed] (node3)  -- (node4);

\pic [draw, -, "${\theta}^2_{31}$", angle eccentricity=1.7, angle radius=0.4cm] {angle = node3--node2--node1};
\pic [draw, -, "${\theta}^3_{12}$", angle eccentricity=1.7, angle radius=0.4cm] {angle = node1--node3--node2};
\pic [draw, -, "${\theta}^1_{34}$", angle eccentricity=1.7, angle radius=0.4cm] {angle = node3--node1--node4};
\pic [draw, -, "${\theta}^4_{13}$", angle eccentricity=1.7, angle radius=0.4cm] {angle = node1--node4--node3};
\end{tikzpicture} \qquad\qquad\qquad\quad
} 
\caption{Examples of locally rigid formations in $\mathbb{R}^{2}$, where the solid lines denote distance constraints, and the dashed lines denote virtual edges which are not distance constraints. Angle constraints are denoted by $\theta_{ij}^{k}, (k,i,j) \in \mathcal{A}$.} \label{Fig:weak_superiority}
\end{figure}
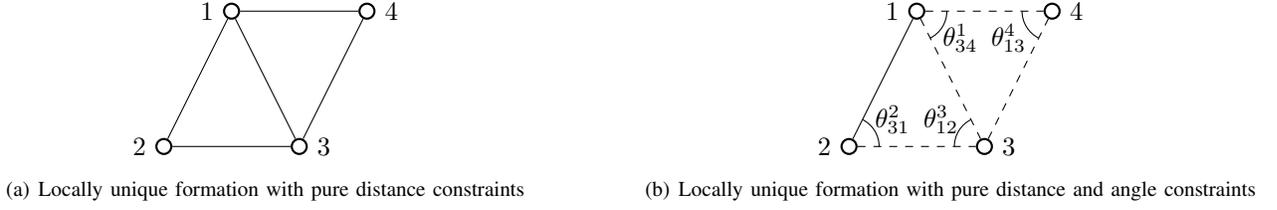
\section{Generalized weak rigidity} \label{Sec:Weak_rigidity}
In this section, we introduce a generalized weak rigidity theory in $\mathbb{R}^{d}$. The basic concept on the theory is related to  how to examine whether or not a formation shape can be determined up to a translation and a rotation (and additionally, for specific cases, a scaling factor) by given relative distance and angle constraints.
\subsection{Generalized weak rigidity (GWR)}
In order to define the concept of the generalized weak rigidity, we make use of the following definition used in the distance rigidity theory. It is well known that two frameworks $(\mathcal{G},\mathcal{A},p)$ and $(\mathcal{G},\mathcal{A},q)$ are said to be \myemph{congruent} if $\norm{p_{i}-p_{j}}=\norm{q_{i}-q_{j}}$ for all $i,j \in \mathcal{V}$. We now define the fundamental concepts on the generalized weak rigidity.

\begin{definition}[Strong equivalency]
\label{Def:strongEquiv}
Two frameworks $(\mathcal{G},\mathcal{A},p)$ and $(\mathcal{G},\mathcal{A},q)$ are said to be \myemph{strongly equivalent} if the following two conditions hold
\begin{itemize}
\item $\norm{p_{i}-p_{j}} = \norm{q_{i}-q_{j}}, \forall (i,j) \in \mathcal{E}$,
\item $\cos\left({\theta_{ij}^{k}}\right)_{\in(\mathcal{G},\mathcal{A},p)} = \cos\left({\theta_{ij}^{k}}\right)_{\in(\mathcal{G},\mathcal{A},q)}, \forall (k,i,j) \in \mathcal{A}$,
\end{itemize}
where $\left({\theta_{ij}^{k}}\right)_{\in(\mathcal{G},\mathcal{A},p)}$ and $\left({\theta_{ij}^{k}}\right)_{\in(\mathcal{G},\mathcal{A},q)}$ denote the angles belonging to $(\mathcal{G},\mathcal{A},p)$ and $(\mathcal{G},\mathcal{A},q)$, respectively.
\end{definition}

\begin{definition}[Angle equivalency]
\label{Def:angleEquiv}
Two frameworks $(\mathcal{G},\mathcal{A},p)$ and $(\mathcal{G},\mathcal{A},q)$ with $\mathcal{E} = \emptyset$ are said to be \myemph{angle equivalent} if $\cos\left({\theta_{ij}^{k}}\right)_{\in(\mathcal{G},\mathcal{A},p)} = \cos\left({\theta_{ij}^{k}}\right)_{\in(\mathcal{G},\mathcal{A},q)}, \forall (k,i,j) \in \mathcal{A}$.
\end{definition}
In this paper, $\mathcal{E} \neq \emptyset$ means that there exists
at least one distance constraint, on the other hand, $\mathcal{E} = \emptyset$ means that any distance constraint does not exist. 


\begin{definition}[Proportional congruency]
\label{Def:angleCong}
Two frameworks $(\mathcal{G},\mathcal{A},p)$ and $(\mathcal{G},\mathcal{A},q)$ with $\mathcal{E} = \emptyset$ are said to be \myemph{proportionally congruent} if 
$\norm{p_{i}-p_{j}} = C\norm{q_{i}-q_{j}}, \forall i,j \in \mathcal{V}$,
where $C$ denotes a positive proportional constant.
\end{definition}
Fig.~\ref{Fig:weak_example01} shows three examples for the above definitions, where the solid lines denote distance constraints, and the dashed lines denote virtual edges which are not distance constraints. Moreover, distance and angle constraints are denoted by $d_{ij}, (i,j) \in \mathcal{E}$ and $\theta_{ij}^{k}, (k,i,j) \in \mathcal{A}$, respectively.
\begin{figure}[]
\centering
\subfigure[Strongly equivalent frameworks.]{\label{Fig:weak_example01_a}
\begin{tikzpicture}[scale=0.6]
\node[place] (node1) at (0,2) [label=above:$1$] {};
\node[place] (node2) at (-3,0) [label=left:$2$] {};
\node[place] (node3) at (1.5,0) [label=right:$3$] {};
\node[place] (node4) at (0,-2) [label=below:$4$] {};

\draw[lineUD] (node1)  -- node [above left] {$d_{12}$} (node2);
\draw[dashed] (node1)  -- (node3);
\draw[dashed] (node2)  -- (node4);
\draw[dashed] (node1)  -- (node4);
\draw[lineUD] (node3)  -- node [below right] {$d_{34}$}  (node4);

\pic [draw, -, "${\theta}^2_{41}$", angle eccentricity=1.7, angle radius=0.4cm] {angle = node4--node2--node1};
\pic [draw, -, "${\theta}^1_{43}$", angle eccentricity=1.7, angle radius=0.4cm] {angle = node4--node1--node3};
\pic [draw, -, "${\theta}^4_{31}$", angle eccentricity=1.7, angle radius=0.4cm] {angle = node3--node4--node1};
\end{tikzpicture} \qquad %
\begin{tikzpicture}[scale=0.6]
\node[place] (node1) at (0,2) [label=above:$1$] {};
\node[place] (node2) at (-3,0) [label=left:$2$] {};
\node[place] (node3) at (-1.5,0) [label=right:$3$] {};
\node[place] (node4) at (0,-2) [label=below:$4$] {};

\draw[lineUD] (node1)  -- node [above left] {$d_{12}$} (node2);
\draw[dashed] (node1)  -- (node3);
\draw[dashed] (node2)  -- (node4);
\draw[dashed] (node1)  -- (node4);
\draw[lineUD] (node3)  -- node [left] {$d_{34}$}  (node4);

\pic [draw, -, "${\theta}^2_{41}$", angle eccentricity=1.8, angle radius=0.3cm] {angle = node4--node2--node1};
\pic [draw, -, "${\theta}^1_{34}$", angle eccentricity=1.7, angle radius=0.4cm] {angle = node3--node1--node4};
\pic [draw, -, "${\theta}^4_{13}$", angle eccentricity=1.7, angle radius=0.4cm] {angle = node1--node4--node3};
\end{tikzpicture}%
} \qquad\qquad\qquad
\subfigure[Angle equivalent frameworks.]{\label{Fig:weak_example01_b}
\begin{tikzpicture}[scale=0.6]
\node[place] (node1) at (0,2) [label=above:$1$] {};
\node[place] (node2) at (-3,0) [label=left:$2$] {};
\node[place] (node3) at (1.5,0) [label=right:$3$] {};
\node[place] (node4) at (0,-2) [label=below:$4$] {};

\draw[dashed] (node1)  -- (node2);
\draw[dashed] (node1)  -- (node3);
\draw[dashed] (node2)  -- (node4);
\draw[dashed] (node1)  -- (node4);
\draw[dashed] (node3)  -- (node4);

\pic [draw, -, "${\theta}^2_{41}$", angle eccentricity=1.7, angle radius=0.4cm] {angle = node4--node2--node1};
\pic [draw, -, "${\theta}^1_{43}$", angle eccentricity=1.7, angle radius=0.4cm] {angle = node4--node1--node3};
\pic [draw, -, "${\theta}^3_{14}$", angle eccentricity=1.7, angle radius=0.4cm] {angle = node1--node3--node4};
\pic [draw, -, "${\theta}^4_{12}$", angle eccentricity=1.7, angle radius=0.4cm] {angle = node1--node4--node2};
\end{tikzpicture} \qquad %
\begin{tikzpicture}[scale=0.6]
\node[place] (node1) at (0,2) [label=above:$1$] {};
\node[place] (node2) at (-3,0) [label=left:$2$] {};
\node[place] (node3) at (-1.5,0) [label=above:$3$] {};
\node[place] (node4) at (0,-2) [label=below:$4$] {};

\draw[dashed] (node1)  -- (node2);
\draw[dashed] (node1)  -- (node3);
\draw[dashed] (node2)  -- (node4);
\draw[dashed] (node1)  -- (node4);
\draw[dashed] (node3)  -- (node4);

\pic [draw, -, "${\theta}^2_{41}$", angle eccentricity=1.8, angle radius=0.3cm] {angle = node4--node2--node1};
\pic [draw, -, "${\theta}^1_{34}$", angle eccentricity=1.7, angle radius=0.4cm] {angle = node3--node1--node4};
\pic [draw, -, "${\theta}^4_{12}$", angle eccentricity=1.7, angle radius=0.4cm] {angle = node1--node4--node2};
\pic [draw, -, "${\theta}^3_{41}$", angle eccentricity=1.7, angle radius=0.4cm] {angle = node4--node3--node1};
\end{tikzpicture}%
}  \\
\subfigure[Proportionally congruent frameworks, where it is assumed that the two frameworks are globally rigid, i.e., the shapes of the frameworks are not globally deformable up to translations, rotations and scalings.]{\label{Fig:weak_example01_c}
\qquad\qquad\quad\begin{tikzpicture}[scale=0.58]
\node[place] (node1) at (0,2) [label=above:$1$] {};
\node[place] (node2) at (-3,0) [label=left:$2$] {};
\node[place] (node3) at (1.5,0) [label=right:$3$] {};
\node[place] (node4) at (0,-2) [label=below:$4$] {};

\draw[dashed] (node1)  -- node [above left] {$\norm{z_{12}}$} (node2);
\draw[dashed] (node1)  -- node [above right] {$\norm{z_{13}}$} (node3);
\draw[dashed] (node2)  -- node [below left] {$\norm{z_{24}}$} (node4);
\draw[dashed] (node3)  -- node [below right] {$\norm{z_{34}}$} (node4);
\draw[dashed] (node1)  -- node [above left] {$\norm{z_{14}}$} (node4);
\draw[dashed] (node2)  -- node [below left] {$\norm{z_{23}}$} (node3);

\end{tikzpicture}\qquad\qquad%
\begin{tikzpicture}[scale=0.75]
\node[place] (node1) at (0,2) [label=above:$1$] {};
\node[place] (node2) at (-3,0) [label=left:$2$] {};
\node[place] (node3) at (1.5,0) [label=right:$3$] {};
\node[place] (node4) at (0,-2) [label=below:$4$] {};

\draw[dashed] (node1)  -- node [above left] {$C\norm{z_{12}}$} (node2);
\draw[dashed] (node1)  -- node [above right] {$C\norm{z_{13}}$} (node3);
\draw[dashed] (node2)  -- node [below left] {$C\norm{z_{24}}$} (node4);
\draw[dashed] (node3)  -- node [below right] {$C\norm{z_{34}}$} (node4);
\draw[dashed] (node1)  -- node [above left] {$C\norm{z_{14}}$} (node4);
\draw[dashed] (node2)  -- node [below left] {$C\norm{z_{23}}$} (node3);

\end{tikzpicture}\qquad\qquad\quad%
} 
\caption{Examples of strongly equivalent, angle equivalent and proportionally congruent frameworks in $\mathbb{R}^{2}$.} \label{Fig:weak_example01}
\end{figure}
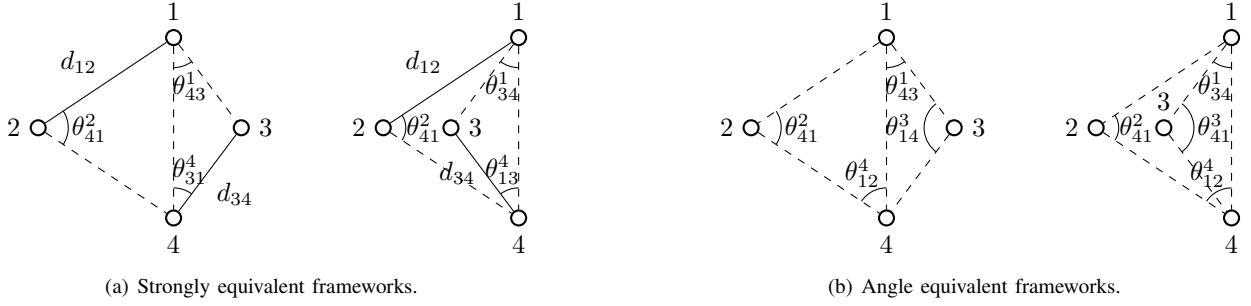
\begin{definition}[Generalized weak rigidity (GWR)]
\label{Def:weakRigidity}
A framework $(\mathcal{G},\mathcal{A},p)$ is \myemph{generalized weakly rigid (GWR)} in $\mathbb{R}^{d}$ if there exists a neighborhood $\mathcal{B}_{p} \subseteq \mathbb{R}^{dn}$ of $p$ such that each framework $(\mathcal{G},\mathcal{A},q)$, $q \in \mathcal{B}_{p}$, strongly equivalent to $(\mathcal{G},\mathcal{A},p)$ is congruent to $(\mathcal{G},\mathcal{A},p)$. Moreover, a framework $(\mathcal{G},\mathcal{A},p)$ with $\mathcal{E} = \emptyset$ is also \myemph{generalized weakly rigid (GWR)} in $\mathbb{R}^{d}$ if there exists a neighborhood $\mathcal{B}_{p} \subseteq \mathbb{R}^{dn}$ of $p$ such that each framework $(\mathcal{G},\mathcal{A},q)$, $q \in \mathcal{B}_{p}$, angle equivalent to $(\mathcal{G},\mathcal{A},p)$ is proportionally congruent to $(\mathcal{G},\mathcal{A},p)$.
\end{definition}

\begin{definition}[Global GWR]
\label{Def:G.weakRigidity}
A framework $(\mathcal{G},\mathcal{A},p)$ is \myemph{globally GWR} in $\mathbb{R}^{d}$ if any framework $(\mathcal{G},\mathcal{A},q)$ strongly equivalent to $(\mathcal{G},\mathcal{A},p)$ is congruent to $(\mathcal{G},\mathcal{A},p)$. Moreover, a framework $(\mathcal{G},\mathcal{A},p)$ with $\mathcal{E} = \emptyset$ is also \myemph{globally GWR} in $\mathbb{R}^{d}$ if any framework $(\mathcal{G},\mathcal{A},q)$ angle equivalent to $(\mathcal{G},\mathcal{A},p)$ is proportionally congruent to $(\mathcal{G},\mathcal{A},p)$.
\end{definition}

If a framework is GWR (resp. globally GWR), then the framework's shape is locally (resp. globally) rigid and not deformable up to translations and rotations of a given framework for $\mathcal{E} \neq \emptyset$ or up to translations, rotations and scalings for $\mathcal{E} = \emptyset$.
Fig.~\ref{Fig:weak_example02} shows several examples of GWR and non-GWR formations
in $\mathbb{R}^{2}$. The formations represented in Fig.~\ref{Fig:weak_example_(a)}, Fig.~\ref{Fig:weak_example_(b)} and Fig.~\ref{Fig:weak_example_(d)} are GWR since they cannot be deformed (in the case of Fig.~\ref{Fig:weak_example_(b)}, a deformed formation by scaling is GWR). 
In particular,  the formation in Fig.~\ref{Fig:weak_example_(a)} is globally GWR, and thus its shape is globally not deformable up to translations and rotations. The formation in Fig.~\ref{Fig:weak_example_(b)} is not globally GWR but GWR, that is, it is locally rigid up to translations, rotations and scalings since the agent 4 (or agent 2) can be flipped over edge $(1,3)$. 
Similarly, the formation in Fig.~\ref{Fig:weak_example_(d)} is not globally GWR but GWR.
On the other hand, the formation represented in Fig.~\ref{Fig:weak_example_(c)} is neither GWR nor globally GWR since it can be deformed by a smooth motion on a circle containing vertices $1$,$3$ and $4$.


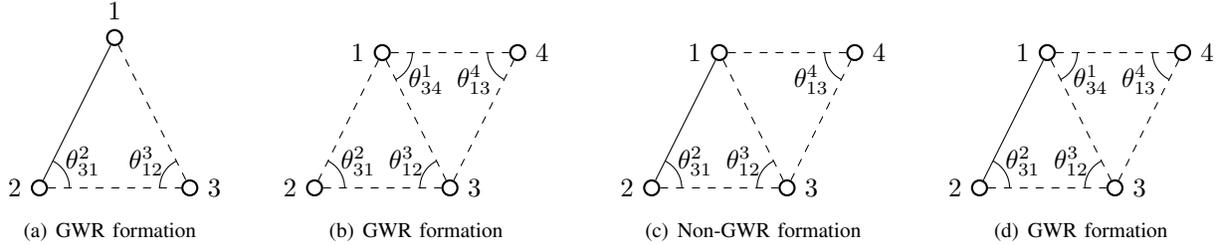
\begin{figure}[]
\centering
\subfigure[GWR formation]{\label{Fig:weak_example_(a)}
\begin{tikzpicture}[scale=1]
\node[place] (node1) at (0,2) [label=above:$1$] {};
\node[place] (node2) at (-1,0) [label=left:$2$] {};
\node[place] (node3) at (1,0) [label=right:$3$] {};

\draw[lineUD] (node1)  -- node [above left] {} (node2);
\draw[dashed] (node1)  -- (node3);
\draw[dashed] (node2)  -- (node3);

\pic [draw, -, "${\theta}^2_{31}$", angle eccentricity=1.7, angle radius=0.4cm] {angle = node3--node2--node1};
\pic [draw, -, "${\theta}^3_{12}$", angle eccentricity=1.7, angle radius=0.4cm] {angle = node1--node3--node2};
\end{tikzpicture}%
} \quad
\subfigure[GWR formation]{\label{Fig:weak_example_(b)}
\begin{tikzpicture}[scale=0.9]
\node[place] (node1) at (0,2) [label=left:$1$] {};
\node[place] (node2) at (-1,0) [label=left:$2$] {};
\node[place] (node3) at (1,0) [label=right:$3$] {};
\node[place] (node4) at (2,2) [label=right:$4$] {};

\draw[dashed] (node1)  -- (node2);
\draw[dashed] (node1)  -- (node3);
\draw[dashed] (node2)  -- (node3);
\draw[dashed] (node1)  -- (node4);
\draw[dashed] (node3)  -- (node4);

\pic [draw, -, "${\theta}^2_{31}$", angle eccentricity=1.7, angle radius=0.4cm] {angle = node3--node2--node1};
\pic [draw, -, "${\theta}^3_{12}$", angle eccentricity=1.7, angle radius=0.4cm] {angle = node1--node3--node2};
\pic [draw, -, "${\theta}^1_{34}$", angle eccentricity=1.7, angle radius=0.4cm] {angle = node3--node1--node4};
\pic [draw, -, "${\theta}^4_{13}$", angle eccentricity=1.7, angle radius=0.4cm] {angle = node1--node4--node3};
\end{tikzpicture} 
}  \quad
\subfigure[Non-GWR formation]{\label{Fig:weak_example_(c)}
\begin{tikzpicture}[scale=0.9]
\node[place] (node1) at (0,2) [label=left:$1$] {};
\node[place] (node2) at (-1,0) [label=left:$2$] {};
\node[place] (node3) at (1,0) [label=right:$3$] {};
\node[place] (node4) at (2,2) [label=right:$4$] {};

\draw[lineUD] (node1)  -- node [above left] {} (node2);
\draw[dashed] (node1)  -- (node3);
\draw[dashed] (node2)  -- (node3);
\draw[dashed] (node1)  -- (node4);
\draw[dashed] (node3)  -- (node4);

\pic [draw, -, "${\theta}^2_{31}$", angle eccentricity=1.7, angle radius=0.4cm] {angle = node3--node2--node1};
\pic [draw, -, "${\theta}^3_{12}$", angle eccentricity=1.7, angle radius=0.4cm] {angle = node1--node3--node2};
\pic [draw, -, "${\theta}^4_{13}$", angle eccentricity=1.7, angle radius=0.4cm] {angle = node1--node4--node3};
\end{tikzpicture} 
}\quad
\subfigure[GWR formation]{\label{Fig:weak_example_(d)}
\begin{tikzpicture}[scale=0.9]
\node[place] (node1) at (0,2) [label=left:$1$] {};
\node[place] (node2) at (-1,0) [label=left:$2$] {};
\node[place] (node3) at (1,0) [label=right:$3$] {};
\node[place] (node4) at (2,2) [label=right:$4$] {};

\draw[lineUD] (node1)  --  (node2);
\draw[dashed] (node1)  -- (node3);
\draw[dashed] (node2)  -- (node3);
\draw[dashed] (node1)  -- (node4);
\draw[dashed] (node3)  -- (node4);

\pic [draw, -, "${\theta}^2_{31}$", angle eccentricity=1.7, angle radius=0.4cm] {angle = node3--node2--node1};
\pic [draw, -, "${\theta}^3_{12}$", angle eccentricity=1.7, angle radius=0.4cm] {angle = node1--node3--node2};
\pic [draw, -, "${\theta}^1_{34}$", angle eccentricity=1.7, angle radius=0.4cm] {angle = node3--node1--node4};
\pic [draw, -, "${\theta}^4_{13}$", angle eccentricity=1.7, angle radius=0.4cm] {angle = node1--node4--node3};
\end{tikzpicture} 
} 
\caption{Examples of GWR and non-GWR formations in $\mathbb{R}^{2}$. The solid lines denote distance constraints belonging to  $\mathcal{E}$, but the dashed lines which do not belong to  $\mathcal{E}$ are not distance constraints.} \label{Fig:weak_example02}
\end{figure}

\subsection{Generalized infinitesimal weak rigidity (GIWR)}
 We now introduce the concept of the \myemph{generalized infinitesimal weak rigidity} which plays an important role in formation control studied in this paper.
  To define the concept, we first introduce a \myemph{weak rigidity matrix} with which we can check whether or not a formation is locally rigid by an algebraic manner, i.e., rank condition of the weak rigidity matrix. 
 
We define the following \myemph{weak rigidity function} $F_W:\chi \rightarrow \mathbb{R}^{m+w}$ for $\chi \subset\mathbb{R}^{dn}$, where $\chi$ is well defined not to make a denominator in $A_{h},h\in \{ 1,..., w\}$ zero, which describes constraints of edge lengths and angles in a framework:
\begin{equation}
F_{W}(p) = \left[ \norm{z_{1}}^2, ... ,\norm{z_{m}}^2, A_{1}, ... ,A_{w} \right]^\top
\in \mathbb{R}^{m+w}. 
\end{equation}
We then define the \myemph{weak rigidity matrix} as the Jacobian of the weak rigidity function: 
\begin{equation}\label{Weak_rigidity_matrix}
R_{W}(p) = \frac{\partial F_{W}(p)}{\partial p} =  \begin{bmatrix}
     \frac{\partial \mbf{D}}{\partial p}\\ 
    \frac{\partial \mbf{A}}{\partial p}
  \end{bmatrix} \in 
\mathbb{R}^{(m+w) \times dn},
\end{equation}
where $\mbf{D} = \left[\norm{z_{1}}^2,\norm{z_{2}}^2, ... ,\norm{z_{m}}^2\right]^\top \in \mathbb{R}^{m}$ and $\mbf{A} = \left[A_1,A_2,...,A_w\right]^\top \in \mathbb{R}^{w}$. 
Next, consider the constraints
\begin{align} 
\norm{p_{i}-p_{j}}^2&=constant, \,\forall (i,j) \in \mathcal{E},  \label{eq:inf_const01}\\
\cos{\theta_{ij}^{k}}&= constant, \, \forall (k,i,j) \in \mathcal{A}. \label{eq:inf_const02}
\end{align}
Then, the time derivative of \eqref{eq:inf_const01} is given by
\begin{align} 
&2\left(p_i-p_j \right)^\top \left(v_i-v_j \right) =0, \,\forall (i,j) \in \mathcal{E}, \label{eq:inf_motions01}
\end{align}
and the time derivative of \eqref{eq:inf_const02} is given as
\begin{align}
\left(v_k-v_i \right)^\top P^\top_{z_{ki}}\frac{z_{kj}}{\norm{z_{kj}}} 
+ \frac{z^\top_{ki}}{\norm{z_{ki}}}  P_{z_{kj}} \left(v_k-v_j \right)
= 0, \, \forall (k,i,j) \in \mathcal{A}, \label{eq:inf_motions02}
\end{align}
where $v_i$ is an infinitesimal motion of vertex $i$, and $P_{z_{ki}}=\frac{1}{\norm{z_{ki}}}\left[ I_d - \frac{z_{ki} z_{ki}^\top}{\norm{z_{ki}}^2}\right]$ and $P_{z_{kj}}=\frac{1}{\norm{z_{kj}}}\left[ I_d - \frac{z_{kj} z_{kj}^\top}{\norm{z_{kj}}^2}\right]$. 
For both cases $\mathcal{E} \neq \emptyset$ and $\mathcal{E} = \emptyset$, the equations \eqref{eq:inf_motions01} and \eqref{eq:inf_motions02} can be written in matrix form as $\dot{F}_W=\frac{\partial F_{W}(p)}{\partial p}\dot{p}=R_W(p)\dot{p}=0$.
We here denote an infinitesimal motion of $(\mathcal{G},\mathcal{A},p)$ by $\delta p$ if $R_W(p)\delta p=0$. The infinitesimal motions include rigid-body translations and rotations when $\mathcal{E} \neq \emptyset$. If $\mathcal{E} = \emptyset$ then the infinitesimal motions additionally include scalings, that is, the motions include rigid-body translations, rotations and scalings. We finally have the concept of the generalized infinitesimal weak rigidity with the following definition.

\begin{definition}[Trivial infinitesimal motion\cite{kwon2018infinitesimal}] \label{trivial}
An infinitesimal motion of a framework $(\mathcal{G},\mathcal{A},p)$ is called \myemph{trivial} if it corresponds to a rigid-body translation or a rigid-body rotation (or additionally, when $\mathcal{E} = \emptyset$, a scaling factor) of the entire framework.
\end{definition}

\begin{definition}[Generalized infinitesimal weak rigidity (GIWR)] \label{weak_rigidity_trivial}
A framework $(\mathcal{G},\mathcal{A},p)$ is \myemph{generalized infinitesimally weakly rigid (GIWR)} in $\mathbb{R}^{d}$ if all of its infinitesimal motions are trivial. 
\end{definition} 
 
We next explore properties of these concepts. For $d=2$ case, it is already shown that the GIWR can be checked by the rank condition of $R_W$ in \cite{kwon2018infinitesimal}. Therefore, we explore the properties only for $d=3$ case. We first state the trivial infinitesimal motions with mathematical expressions.
For $d=3$ case, we define the rotational matrix $J_i, \forall i \in \set{1,2,3}$ as
\begin{align} 
J_1=\begin{bmatrix}
0 & 0 & 0 \\
0 & 0 &-1\\
0 & 1 &0
\end{bmatrix},
J_2=\begin{bmatrix}
0 & 0 & 1 \\
0 & 0 &0\\
-1 & 0 &0
\end{bmatrix},
J_3=\begin{bmatrix}
0 & -1 & 0 \\
1 & 0 &0\\
0 & 0 &0
\end{bmatrix}.
\end{align} \normalsize
Note it always holds that $x^\top J_i x=0, \forall i \in \set{1,2,3}$ for any vector $x \in \mathbb{R}^{3}$. Referring to Lemma 1 in Sun et al. (2017)\cite{sun2017distributed}, we have that the vectors in the following set, $L_R$, are linearly independent.
\begin{align}
L_R=\set{\mathds{1}_n\otimes I_3, (I_n\otimes J_1)p, (I_n\otimes J_2)p, (I_n\otimes J_3)p},
\end{align}
where $(\mathds{1}_n\otimes I_3)$ and $(I_n\otimes J_i)p, i \in \set{1,2,3}$ correspond to a rigid-body translation and a rigid-body rotation of an entire framework, respectively.
We define a set $L_N$ for a rigid-body translation, a rigid-body rotation and a scaling of a framework in $\mathbb{R}^{3}$ as 
\begin{align}
L_N=\set{\mathds{1}_n\otimes I_3, (I_n\otimes J_1)p, (I_n\otimes J_2)p, (I_n\otimes J_3)p, p}.
\end{align}
The sets $L_R$ and $L_N$ can be regarded as the bases for $d$-dimensional rigid transformations and similarity transformations of a formation, respectively. Moreover, it is obvious that any linear combination of the vectors in $L_R$ cannot be equal to $\vspan\{p \}$ since a framework induced from $\vspan\{L_R \}$ is embedded in the $3$-dimensional group of rigid transformations, i.e., Special Euclidean group $SE(3)$, which means that rigid transformations $\vspan\{L_R \}$ cannot be equal to nonrigid transformations $\vspan\{p \}$. Hence, the vectors in the set $L_N$ are also linearly independent.

We state some notations to prove Lemmas \ref{lem_null_of_rigid matrix01} and \ref{lem_null_of_rigid matrix02} presented in what follows.
We first define a graph $\mathcal{G'}$ as $\mathcal{G'} = (\mathcal{V'},\mathcal{E'},\mathcal{A'})$ induced from $\mathcal{G}$ in such a way that:
\begin{itemize}
\item $\mathcal{V'} = \mathcal{V}$,

\item
$\mathcal{E'} =
\set{(i,j),(i,k),(j,k)\given (i,j) \in \mathcal{E} \lor(k,i,j) \in \mathcal{A}}$,

\item $\mathcal{A'} = \mathcal{A}$.
\end{itemize}
For any edge $(i,j) \in \mathcal{E'}$, we consider a new associated relative position vector ${z'}_{ij}$, and set the order of the new  relative position vector as follows
$$
{z'}_{s} = {z'}_{ij}, \forall s\in \{ 1,..., \eta\}, \eta \geq m,
$$
where ${z'}_{ij} = p_{i} - p_{j}$ for all $(i,j)\in \mathcal{E'}$, and $\eta=\card{\mathcal{E'}}$.
The anew defined relative position vector satisfies the following condition
$$
{z'}_{u} = z_{u}, \forall u\in \{ 1,..., m\}.
$$ 
We denote a new associated column vector composed of relative position vectors as $z' = \big[{z'}_{1}^\top, {z'}_{2}^\top,...,{z'}_{\eta}^\top \big]^\top \in \mathbb{R}^{3\eta}$.
The oriented incidence matrix $H' \in \mathbb{R}^{\eta \times n}$ of the induced graph $\mathcal{G'}$ is the $\{0, \pm1\}$-matrix with rows indexed by edges and columns indexed by vertices as follows: 
$$
[H']_{si}=\begin{cases}
    1 & \text{if the $s$-th edge sinks at vertex $i$} \\
    -1 & \text{if the $s$-th edge leaves vertex $i$} \\
	0 & \text{otherwise}
  \end{cases},
$$
where $[H']_{si}$ is an element at row $s$ and column $i$ of the matrix $H'$. Note that ${z'}$ satisfies ${z'}=\bar{H'}p$ where $\bar{H'}= H'\otimes I_d$. 
We are now ready to define the following properties.
\begin{lemma}\cite[Lemma 3.3]{kwon2018infinitesimal}
\label{Lem:null_of_rigid matrix_d2}
Let $J_0$ denote a rotational matrix defined as 
$J_0 = \begin{bmatrix}
0 &-1\\
1 & 0
\end{bmatrix}$ in $\mathbb{R}^2$.
For $d=2$ case,  it is satisfied that $\vspan\{\mathds{1}\otimes I_2, (I_n\otimes J_0)p \} \subseteq$ Null$(R_{W}(p))$ and $\rank(R_{W}(p))\leq 2n-3$ if $\mathcal{E} \neq \emptyset$. In addition, for $d=2$ case, it is satisfied that $\vspan\{\mathds{1}\otimes I_2, (I_n\otimes J_0)p, p \}\subseteq$ Null$(R_{W}(p))$ and $\rank(R_{W}(p))\leq 2n-4$ if $\mathcal{E} = \emptyset$.
\end{lemma}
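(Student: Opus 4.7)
The plan is to verify, by direct computation on the rows of $R_W(p)$, that each of the proposed basis vectors annihilates $R_W(p)$, and then obtain the rank bounds by rank–nullity together with a simple linear-independence check.

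First, I would split $R_W(p)$ row-wise into a distance block $\partial \mathbf{D}/\partial p$ and an angle block $\partial \mathbf{A}/\partial p$, and show that each of the three candidate motions annihilates both blocks. The equations \eqref{eq:inf_motions01} and \eqref{eq:inf_motions02} give the explicit action of $R_W(p)$ on an infinitesimal motion $\delta p = (v_1^\top,\ldots,v_n^\top)^\top$: the $g$-th distance row contributes $2\,z_{ij}^\top(v_i-v_j)$, and the $h$-th angle row contributes the expression in \eqref{eq:inf_motions02}. For a translation $\delta p = \mathds{1}_n \otimes c$ with $c\in\mathbb{R}^2$, we have $v_i-v_j = 0$ identically, so both blocks vanish trivially. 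For a rigid rotation $\delta p = (I_n\otimes J_0)p$, we have $v_i-v_j = J_0 z_{ij}$, and the distance row gives $2\,z_{ij}^\top J_0 z_{ij}=0$ because $J_0^\top+J_0=0$. For the angle row, I would use the same antisymmetry: the two summands in \eqref{eq:inf_motions02} become $(J_0 z_{ki})^\top P_{z_{ki}}^\top (z_{kj}/\|z_{kj}\|) + (z_{ki}^\top/\|z_{ki}\|)P_{z_{kj}}(J_0 z_{kj})$, and after expanding $P_{z_{ki}}, P_{z_{kj}}$ this reduces to $z_{ki}^\top(J_0^\top+J_0)z_{kj}/(\|z_{ki}\|\|z_{kj}\|) = 0$, using also $z^\top J_0 z = 0$ to kill the projection terms.

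Next, for the scaling case $\mathcal{E}=\emptyset$, I would verify that $\delta p = p$ annihilates the angle block (the distance block is empty by hypothesis). Here $v_i-v_j = z_{ij}$, so the derivative of any cosine factor $z_{ki}^\top z_{kj}/(\|z_{ki}\|\|z_{kj}\|)$ along this motion is $(2\,z_{ki}^\top z_{kj})/(\|z_{ki}\|\|z_{kj}\|) - (z_{ki}^\top z_{kj})(2\|z_{ki}\|\|z_{kj}\|)/(\|z_{ki}\|\|z_{kj}\|)^2 = 0$, i.e.\ the numerator and denominator are both homogeneous of degree two and cancel. Equivalently one may invoke the invariance of cosines under uniform scaling and differentiate, but the direct substitution into \eqref{eq:inf_motions02} is cleaner. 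This shows $\vspan\{\mathds{1}_n\otimes I_2, (I_n\otimes J_0)p, p\}\subseteq \vnull(R_W(p))$ when $\mathcal{E}=\emptyset$, and its two-vector restriction when $\mathcal{E}\neq\emptyset$.

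To conclude the rank bound I would argue linear independence of the listed vectors for any non-degenerate configuration (at least two distinct $p_i$, and for the scaling case, not all $p_i$ collinear with the origin in the sense that $p$ is not already in the rigid span). If $(I_n\otimes J_0)p = \mathds{1}_n\otimes a$ for some $a\in\mathbb{R}^2$, then $J_0 p_i = a$ for every $i$, forcing all $p_i$ equal; similarly, if $p \in \vspan\{\mathds{1}_n\otimes I_2, (I_n\otimes J_0)p\}$, then $(I_2 - \alpha J_0)p_i = c$ for all $i$ for some $\alpha\in\mathbb{R}$, $c\in\mathbb{R}^2$, again forcing all $p_i$ to coincide. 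Under the standing non-degeneracy (which is implicit in $\chi$, since otherwise a denominator in some $A_h$ would vanish), the three (resp.\ two) vectors are independent, giving $\dim\vnull(R_W(p)) \geq 3$ (resp.\ $4$), hence $\rank R_W(p)\leq 2n-3$ (resp.\ $2n-4$) by rank–nullity, since $R_W(p)\in\mathbb{R}^{(m+w)\times 2n}$.

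The main obstacle, and really the only non-mechanical step, is the angle-row computation for the rotation: one must carefully track how $J_0$ interacts with the projectors $P_{z_{ki}}$ and $P_{z_{kj}}$ in \eqref{eq:inf_motions02}. The two identities $z^\top J_0 z = 0$ and $J_0^\top+J_0=0$ handle both the projection pieces and the cross term, but it is easy to drop a sign or mis-associate a transpose, so I would do this computation slowly. Everything else—the translation, the scaling, and the linear-independence argument—is essentially immediate.
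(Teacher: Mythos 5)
Your proof is correct. It is worth noting that the paper itself does not prove this lemma at all --- it is imported verbatim as \cite[Lemma 3.3]{kwon2018infinitesimal} --- so the only basis for comparison is the proof of its $3$-dimensional analogue (Lemma~\ref{lem_null_of_rigid matrix01}) together with the rank bound in Lemma~\ref{lem_null_of_rigid matrix02}. There the authors factor $R_W(p)$ through the oriented incidence matrix of the induced graph, writing $R_W = \bigl[\frac{\partial \mathbf{D}}{\partial z'};\,\frac{\partial \mathbf{A}}{\partial z'}\bigr]\bar{H'}$, so that translations die in $\vnull(\bar{H'})$ and rotations/scalings are handled via the explicit form of $\partial A_h/\partial z'$ from their earlier work; the rank bound then rests on a cited linear-independence result (Sun et al.\ 2017) plus a qualitative argument that $p$ is not a rigid motion. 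You instead verify the null-space inclusions row by row directly from the differentiated constraints \eqref{eq:inf_motions01} and \eqref{eq:inf_motions02}, using only the symmetry of the projectors and the identities $J_0^\top+J_0=0$ and $z^\top J_0 z=0$ (your angle-row computation for the rotation checks out; for the scaling one can even note $P_{z_{ki}}z_{ki}=0$, which kills both summands immediately). Your linear-independence step is also more self-contained than the paper's: solving $J_0p_i=a$ and $(I_2-\alpha J_0)p_i=c$, the latter using $\det(I_2-\alpha J_0)=1+\alpha^2>0$, forces all $p_i$ to coincide, which is excluded whenever there is at least one constraint (and is vacuous otherwise). Both routes are sound; the paper's factorization generalizes more mechanically to $d=3$, while yours avoids introducing the auxiliary graph $\mathcal{G}'$ and the external citations, at the cost of a slightly more delicate hand computation on the angle rows --- which, as you anticipated, is indeed the only place a sign or transpose could go wrong, and yours do not.
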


\begin{lemma}\label{lem_null_of_rigid matrix01}
For $d=3$ case,  it is satisfied that, when $\mathcal{E} \neq \emptyset$ and $\mathcal{E} = \emptyset$, $\vspan(L_R) \subseteq$ $\vnull(R_{W}(p))$ and $\vspan(L_N) \subseteq$ $\vnull(R_{W}(p))$, respectively.
\end{lemma}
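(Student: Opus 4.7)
The plan is to verify directly that each generator of $L_R$ (respectively $L_N$) satisfies the two nullspace equations \eqref{eq:inf_motions01} and \eqref{eq:inf_motions02}; then the claim follows because $\vnull(R_W(p))$ is a linear subspace. I will treat the three families of generators separately, since translations need no computation, scalings use only the projection identity $P_{z_{ki}}z_{ki}=0$, and rotations require the combined use of skew-symmetry and the projection identity.

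First, for any column $\mathds{1}_n\otimes e_s$ of $\mathds{1}_n\otimes I_3$ (with $e_s$ a standard basis vector of $\mathbb{R}^3$) the corresponding infinitesimal motion has $v_i=e_s$ for all $i\in\mathcal{V}$. Then $v_i-v_j=0$ for every pair, so both \eqref{eq:inf_motions01} and \eqref{eq:inf_motions02} hold trivially, regardless of whether $\mathcal{E}=\emptyset$ or not.

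Second, for a rotation generator $(I_n\otimes J_s)p$, $s\in\{1,2,3\}$, the motion is $v_i=J_s p_i$, so $v_i-v_j=J_s z_{ij}$. For distance constraints, $(p_i-p_j)^\top J_s(p_i-p_j)=z_{ij}^\top J_s z_{ij}=0$ by the skew-symmetry property $x^\top J_s x=0$ noted before Lemma~\ref{lem_null_of_rigid matrix01}. For the angle equation, I will use that $P_{z_{ki}}$ and $P_{z_{kj}}$ are symmetric and that $P_{z_{ki}}z_{ki}=0$ (and similarly for $z_{kj}$). A short expansion of $(J_s z_{ki})^\top P_{z_{ki}}\frac{z_{kj}}{\|z_{kj}\|}+\frac{z_{ki}^\top}{\|z_{ki}\|}P_{z_{kj}}(J_s z_{kj})$ using these two facts plus the skew-symmetry identities $z_{ki}^\top J_s z_{ki}=z_{kj}^\top J_s z_{kj}=0$ collapses the cross terms and leaves $-\frac{z_{ki}^\top J_s z_{kj}}{\|z_{ki}\|\|z_{kj}\|}+\frac{z_{ki}^\top J_s z_{kj}}{\|z_{ki}\|\|z_{kj}\|}=0$. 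This is the main (and only) nontrivial calculation of the lemma; once it is done, every rotation generator sits in $\vnull(R_W(p))$.

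Third, the scaling generator $p$ (relevant only when $\mathcal{E}=\emptyset$) gives $v_i=p_i$, hence $v_i-v_j=z_{ij}$. Since $\mathcal{E}=\emptyset$ there is nothing to check from \eqref{eq:inf_motions01}. For \eqref{eq:inf_motions02} we obtain $z_{ki}^\top P_{z_{ki}}\frac{z_{kj}}{\|z_{kj}\|}+\frac{z_{ki}^\top}{\|z_{ki}\|}P_{z_{kj}}z_{kj}$, and both summands vanish by $P_{z_{ki}}z_{ki}=0$ and $P_{z_{kj}}z_{kj}=0$ together with the symmetry of the projections.

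Combining the three cases, every element of $L_R$ lies in $\vnull(R_W(p))$ when $\mathcal{E}\neq\emptyset$, and every element of $L_N$ lies in $\vnull(R_W(p))$ when $\mathcal{E}=\emptyset$, giving the claimed span inclusions. The only conceptual step beyond routine calculation is recognising that skew-symmetry of $J_s$ kills the diagonal-type terms $z_{ki}^\top J_s z_{ki}$ produced by the projection expansion; once that observation is in hand the two rotation terms cancel exactly, and no further subtlety arises.
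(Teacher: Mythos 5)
Your proof is correct, but it takes a noticeably different computational route from the paper's. The paper factors the weak rigidity matrix through the oriented incidence matrix of the induced graph, $R_{W}(p)=\bigl[\frac{\partial \mbf{D}}{\partial z'};\,\frac{\partial \mbf{A}}{\partial z'}\bigr]\bar{H'}$, computes $\bar{H'}(I_n\otimes J_i)p=(I_\eta\otimes J_i)z'$, and then exploits the law-of-cosines representation $A_h=\frac{\norm{z_{ki}}^2+\norm{z_{kj}}^2-\norm{z_{ij}}^2}{2\norm{z_{ki}}\norm{z_{kj}}}$, under which each block $\frac{\partial A_h}{\partial z'_a}$ is proportional to $z'^{\top}_a$ (citing Lemma 3.1 of Kwon et al.\ for this structure), so only the identity $x^\top J_i x=0$ is needed; the scaling direction is handled by an explicit three-term sum that telescopes to zero. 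You instead verify membership in $\vnull(R_W(p))$ generator by generator via the constraint-derivative equations \eqref{eq:inf_motions01} and \eqref{eq:inf_motions02}, using the bearing-inner-product form of the angle and the projection matrices $P_{z_{ki}}$, which requires both $x^\top J_s x=0$ and $P_{z_{ki}}z_{ki}=0$ together with symmetry of the projections. I checked your rotation computation: the two diagonal terms $z_{ki}^\top J_s z_{ki}$ and $z_{kj}^\top J_s z_{kj}$ indeed vanish by skew-symmetry, and the surviving cross terms are $-\frac{z_{ki}^\top J_s z_{kj}}{\norm{z_{ki}}\norm{z_{kj}}}$ and $+\frac{z_{ki}^\top J_s z_{kj}}{\norm{z_{ki}}\norm{z_{kj}}}$ (the sign flip coming from $J_s^\top=-J_s$), so they cancel exactly as you claim. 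What your route buys is self-containedness — no appeal to the incidence-matrix factorization or to the external structural lemma on $\frac{\partial A_h}{\partial z'}$ — and a two-line scaling case; what the paper's route buys is that the same factorization is reused in the translation argument ($\vspan\{\mathds{1}_n\}\subseteq\vnull(H')$) and in later parts of the paper (e.g.\ the centroid-invariance proof), so the machinery is amortized. Both arguments rest on the same equivalence, stated just before Definition \ref{trivial}, that $\delta p\in\vnull(R_W(p))$ iff $\delta p$ satisfies \eqref{eq:inf_motions01} and \eqref{eq:inf_motions02}.
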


\begin{proof}
This property is proved by a similar approach to Lemma \ref{Lem:null_of_rigid matrix_d2}.
When $\mathcal{E} \neq \emptyset$, the equation (\ref{Weak_rigidity_matrix}) can be written as 
\begin{align}
R_{W}(p) = \frac{\partial F_{W}(p)}{\partial p}=  \begin{bmatrix}
     \frac{\partial \mbf{D}}{\partial {z'}} \frac{\partial {z'}}{\partial p}\\ 
    \frac{\partial \mbf{A}}{\partial {z'}} \frac{\partial {z'}}{\partial p}
  \end{bmatrix} = \begin{bmatrix}
     \frac{\partial \mbf{D}}{\partial {z'}} \bar{H'} \\ 
    \frac{\partial \mbf{A}}{\partial {z'}} \bar{H'}
  \end{bmatrix}  = \begin{bmatrix}
    \frac{\partial \mbf{D}}{\partial {z'}}\\ 
    \frac{\partial \mbf{A}}{\partial {z'}}
  \end{bmatrix}  \bar{H'}.
\end{align}
Then, it is obvious that $\vspan\{\mathds{1}_n\otimes I_3\} \subseteq \vnull (\bar{H'})$ $\subseteq$ $\vnull(R_{W}(p))$ since $\vspan\{\mathds{1}_n\} \subseteq \vnull(H')$. 
We next check whether $R_{W}(p) (I_n\otimes J_i)p=0$ or not. $\bar{H'}(I_n\otimes J_i)p, \forall i \in \set{1,2,3}$ can be of such form
\begin{align}
\bar{H'}(I_n\otimes J_i)p &= (H'\otimes I_3)(I_n\otimes J_i)p =(H'\otimes J_i)p \nonumber \\
&= (I_\eta H'\otimes J_i I_3)p = (I_\eta \otimes J_i)(H'\otimes I_3)p \nonumber \\
&=(I_\eta \otimes J_i){z'}= \begin{bmatrix}
J_i{z'}_1 \\
\vdots\\
J_i{z'}_\eta
\end{bmatrix}. 
\end{align}
From the viewpoint of $A_h=\frac{\norm{z_{ki}}^{2} + \norm{z_{kj}}^{2} - \norm{z_{ij}}^{2}}{2\norm{z_{ki}}\norm{z_{kj}}},(k,i,j) \in \mathcal{A}$,
if $A_h$ consists of ${z'}_a$, ${z'}_b$ and ${z'}_c$ for $a \neq b \neq c$ and $a,b,c \in \{ 1,..., \eta\}$ then  almost all elements of $\frac{\partial A_h}{\partial {z'}}$ are zero except for $\frac{\partial A_h}{\partial {z'}_a}$, $\frac{\partial A_h}{\partial {z'}_b}$ and $\frac{\partial A_h}{\partial {z'}_c}$. With reference to the form of $\frac{\partial A_h}{\partial {z'}}$ as presented in Lemma 3.1  in Kwon et al. (2018)\cite{kwon2018infinitesimal}, we have 
\begin{align}
\frac{\partial A_h}{\partial {z'}}\bar{H'}(I_n\otimes J_i)p=
\frac{\partial A_h}{\partial {z'}}\begin{bmatrix}
J_i{z'}_1 \\
\vdots\\
J_i{z'}_\eta
\end{bmatrix} 
=\frac{\partial A_h}{\partial {z'}_a}J_i{z'}_{a}+\frac{\partial A_h}{\partial {z'}_b}J_i{z'}_{b}+\frac{\partial A_h}{\partial {z'}_c}J_i{z'}_{c} 
= 0,
\end{align}
where ${{z'}_a}^\top J_i {z'}_a = 0$, ${{z'}_b}^\top J_i {z'}_b = 0$ and ${{z'}_c}^\top J_i {z'}_c = 0$ for all $i \in \set{1,2,3}$. Thus, $\frac{\partial \mbf{A}}{\partial {z'}}\bar{H'}(I_n\otimes J_i)p = 0$. We also have
\begin{align}
\frac{\partial \mbf{D}}{\partial {z'}}\bar{H'}(I_n\otimes J_i)p 
= \frac{\partial \mbf{D}}{\partial {z'}}\begin{bmatrix}
J_i{z'}_1 \\
\vdots\\
J_i{z'}_\eta
\end{bmatrix} 
= \begin{bmatrix} 
2D^\top & 0_{m,(3\eta-3m)}
\end{bmatrix} \begin{bmatrix}
J_i{z'}_1 \\
\vdots\\
J_i{z'}_\eta
\end{bmatrix} 
=0,
\end{align} 
where $D=$diag$({z'}_1,...,{z'}_m) \in \mathbb{R}^{3m \times m}$, and $0_{m,(3\eta-3m)}$ is a $m \times (3\eta-3m)$ zero matrix.
Using the above results, we have
\begin{align}
R_{W}(p)(I_n\otimes J_i)p = 0, \forall i \in \set{1,2,3},  
\end{align}
which implies that, when $\mathcal{E} \neq \emptyset$,  $\vspan\{(I_n\otimes J_i)p\} \subseteq \vnull(R_{W}(p)), \forall i \in \set{1,2,3}$.

If $\mathcal{E} = \emptyset$, then $R_{W}(p)$ is of the form
\begin{align}
R_{W}(p) = \frac{\partial F_{W}(p)}{\partial p}=
    \frac{\partial \mbf{A}}{\partial {z'}} \bar{H'}.
\end{align}
Then, $R_{W}(p)p = \frac{\partial \mbf{A}}{\partial {z'}} \bar{H'}p = \frac{\partial \mbf{A}}{\partial {z'}}{z'}$. 
With reference to Lemma 3.1 in Kwon et al.\cite{kwon2018infinitesimal}, the elements of $\frac{\partial A_h}{\partial {z'}}$ are zero except for $\frac{\partial A_h}{\partial {z'}_a}$, $\frac{\partial A_h}{\partial {z'}_b}$ and $\frac{\partial A_h}{\partial {z'}_c}$, and we have the following result:
\begin{align}
\frac{\partial A_h}{\partial {z'}}{z'} = \frac{\partial A_h}{\partial {z'}}\begin{bmatrix}
{z'}_1 \\
\vdots\\
{z'}_\eta
\end{bmatrix}
=&
\frac{\partial A_h}{\partial {z'}_a}{z'}_{a}+\frac{\partial A_h}{\partial {z'}_b}{z'}_{b}+\frac{\partial A_h}{\partial {z'}_c}{z'}_{c} \nonumber \\
=& \frac{\norm{{z'}_{a}}^{2} - \norm{{z'}_{b}}^{2} + \norm{{z'}_{c}}^{2}}{2\norm{{z'}_{a}}\norm{{z'}_{b}}} 
+\frac{-\norm{{z'}_{a}}^{2} + \norm{{z'}_{b}}^{2} + \norm{{z'}_{c}}^{2}}{2\norm{{z'}_{a}}\norm{{z'}_{b}}} +
\frac{-2\norm{{z'}_{c}}^{2}}{2\norm{{z'}_{a}}\norm{{z'}_{b}}} \nonumber \\
=& 0. \nonumber
\end{align}
Thus, we have $R_{W}(p)p=0$, which implies that $\vspan\{p\} \subseteq \vnull(R_{W}(p))$.
It also holds that, when $\mathcal{E} = \emptyset$, $\vspan\{\mathds{1}_n\otimes I_3\}\subseteq$ $\vnull(R_{W}(p))$ and $\vspan\{(I_n\otimes J_i)p\} \subseteq \vnull(R_{W}(p)), \forall i \in \set{1,2,3}$ in the same way as the case of $\mathcal{E} \neq \emptyset$.
Consequently, the statement is proved.
\end{proof}

\begin{lemma}\label{lem_null_of_rigid matrix02}
If $\mathcal{E} \neq \emptyset$, then $\rank(R_{W}(p))\leq dn-d(d+1)/2$ for a framework $(\mathcal{G},\mathcal{A},p)$ in $\mathbb{R}^{d}$. On the other hand, if $\mathcal{E} = \emptyset$, then $\rank(R_{W}(p))\leq dn-(d^2+d+2)/2$ for a framework $(\mathcal{G},\mathcal{A},p)$ in $\mathbb{R}^{d}$.
\end{lemma}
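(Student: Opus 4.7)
The plan is to reduce the statement to a direct consequence of the rank–nullity theorem, using the inclusions for the null space of $R_W(p)$ that we already possess. For $d=2$ the claim is literally Lemma~\ref{Lem:null_of_rigid matrix_d2}, since $d(d+1)/2=3$ and $(d^2+d+2)/2=4$, so nothing new is required in that case. The work is entirely in the $d=3$ case, where the two target bounds are $3n-6$ (when $\mathcal{E}\neq\emptyset$) and $3n-7$ (when $\mathcal{E}=\emptyset$).

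First, I would invoke Lemma~\ref{lem_null_of_rigid matrix01}, which already gives $\vspan(L_R)\subseteq\vnull(R_W(p))$ when $\mathcal{E}\neq\emptyset$, and $\vspan(L_N)\subseteq\vnull(R_W(p))$ when $\mathcal{E}=\emptyset$. Thus
\begin{equation*}
\dim\vnull(R_W(p)) \;\geq\; \dim\vspan(L_R) \quad\text{or}\quad \dim\vnull(R_W(p)) \;\geq\; \dim\vspan(L_N),
\end{equation*}
depending on the case. The next step is to note that the linear independence of these generating vectors has already been argued in the paragraph preceding Lemma~\ref{lem_null_of_rigid matrix01}: the set $L_R=\{\mathds{1}_n\otimes I_3,(I_n\otimes J_1)p,(I_n\otimes J_2)p,(I_n\otimes J_3)p\}$ contributes $3+1+1+1=6$ linearly independent column vectors by Lemma~1 of Sun et al.~(2017), and appending $p$ yields $L_N$ with $7$ linearly independent columns since $p\notin\vspan(L_R)$ (rigid transformations cannot generate a scaling).

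Second, I would invoke the rank–nullity identity $\rank(R_W(p))+\dim\vnull(R_W(p)) = 3n$, which immediately gives
\begin{equation*}
\rank(R_W(p)) \;\leq\; 3n-6 \;=\; 3n-\tfrac{d(d+1)}{2}\quad\text{if } \mathcal{E}\neq\emptyset,
\end{equation*}
and
\begin{equation*}
\rank(R_W(p)) \;\leq\; 3n-7 \;=\; 3n-\tfrac{d^2+d+2}{2}\quad\text{if } \mathcal{E}=\emptyset.
\end{equation*}
Combining this with Lemma~\ref{Lem:null_of_rigid matrix_d2} for $d=2$ yields the lemma for both dimensions of interest.

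There is essentially no hard step left here, because the substantive content—membership of translational/rotational/scaling generators in $\vnull(R_W(p))$ together with their linear independence—was already dispatched before and inside Lemma~\ref{lem_null_of_rigid matrix01}. The only minor point worth double-checking in the write-up is that when $\mathcal{E}=\emptyset$ the vector $p$ genuinely lies outside $\vspan(L_R)$, so that $L_N$ has full cardinality in the null space; this I would justify in one line by observing that every element of $\vspan(L_R)$ represents an infinitesimal rigid motion (element of $SE(3)$ at the linearised level), whereas $p$ generates a pure dilation, which is non-rigid.
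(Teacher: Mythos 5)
Your proof is correct and follows essentially the same route as the paper: both cases reduce to Lemma~\ref{Lem:null_of_rigid matrix_d2} for $d=2$ and, for $d=3$, to the null-space inclusions of Lemma~\ref{lem_null_of_rigid matrix01} combined with the linear independence of $L_R$ (six vectors) and $L_N$ (seven vectors) established just before that lemma. Your explicit appeal to rank--nullity simply spells out what the paper leaves implicit in the phrase ``since the vectors in $L_R$ are linearly independent.''
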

\begin{proof} 
For $d=2$ case, it holds that $\rank(R_{W}(p))\leq dn-d(d+1)/2$ and $\rank(R_{W}(p))\leq dn-(d^2+d+2)/2$ when $\mathcal{E} \neq \emptyset$ and $\mathcal{E} = \emptyset$, respectively, from Lemma \ref{Lem:null_of_rigid matrix_d2}.

For $d=3$ case, from Lemma \ref{lem_null_of_rigid matrix01}, we have $\vspan(L_R) \subseteq$ $\vnull(R_{W}(p))$ when $\mathcal{E} \neq \emptyset$, which implies that $\rank(R_{W}(p))\leq dn-d(d+1)/2$ since the vectors in $L_R$ are linearly independent. Similarly, when $\mathcal{E} = \emptyset$, we have $\vspan(L_N) \subseteq \vnull(R_{W}(p))$, which implies that $\rank(R_{W}(p))\leq dn-(d^2+d+2)/2$ since the vectors in $L_N$ are linearly independent.
\end{proof}

The following result shows the necessary and sufficient condition for the GIWR.
\begin{theorem}
\label{Thm:Inf_Rank} 
A framework $(\mathcal{G},\mathcal{A},p)$ with $n \ge 3$ and $\mathcal{E} \neq \emptyset$ is GIWR in $\mathbb{R}^{d}$ if and only if the weak rigidity matrix $R_{W}(p)$ has rank $dn-d(d+1)/2$. In addition, a framework $(\mathcal{G},\mathcal{A},p)$ with $n \ge 3$ and $\mathcal{E} = \emptyset$ is GIWR in $\mathbb{R}^{d}$ if and only if the weak rigidity matrix $R_{W}(p)$ has rank $dn-(d^2+d+2)/2$.
\end{theorem}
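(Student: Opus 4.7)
The plan is a rank–nullity argument, invoking Lemma~\ref{Lem:null_of_rigid matrix_d2} for $d=2$ and Lemmas~\ref{lem_null_of_rigid matrix01}--\ref{lem_null_of_rigid matrix02} for $d=3$. The central observation behind both directions of the equivalence is the identification of the space of trivial infinitesimal motions (Definition~\ref{trivial}) with $\vspan(L_R)$ when $\mathcal{E}\neq\emptyset$ and with $\vspan(L_N)$ when $\mathcal{E}=\emptyset$. Counting the generators in $L_R$ gives $\dim\vspan(L_R)=d(d+1)/2$ ($=3$ when $d=2$ and $=6$ when $d=3$); adjoining the scaling vector $p$ in $L_N$ contributes one further independent direction (as argued in the $SE(3)$ remark preceding Lemma~\ref{lem_null_of_rigid matrix01}), so $\dim\vspan(L_N)=(d^2+d+2)/2$.

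Next, I would justify that $\vspan(L_R)$ (respectively $\vspan(L_N)$) is exactly the set of trivial infinitesimal motions. The inclusion $\vspan(L_R)\subseteq\{\text{trivial motions}\}$ is immediate from the meaning of each generator: $\mathds{1}_n\otimes I_d$ generates infinitesimal rigid-body translations, $(I_n\otimes J_i)p$ generates an infinitesimal rigid-body rotation about the $i$-th coordinate axis, and the extra generator $p$ in $L_N$ generates a uniform scaling from the origin. The reverse inclusion holds because any rigid-body displacement (or similarity, when $\mathcal{E}=\emptyset$) is parametrised by exactly $d(d+1)/2$ (resp.\ $(d^2+d+2)/2$) real parameters whose infinitesimal tangents at the identity are precisely the generators listed in $L_R$ (resp.\ $L_N$).

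With these identifications, each direction reduces to one line of linear algebra. For the \emph{only if} direction, if $(\mathcal{G},\mathcal{A},p)$ is GIWR then every $v\in\vnull(R_W(p))$ is trivial, so $\vnull(R_W(p))\subseteq\vspan(L_R)$ (resp.\ $\vspan(L_N)$); combining with the reverse inclusion from Lemma~\ref{lem_null_of_rigid matrix01} (or Lemma~\ref{Lem:null_of_rigid matrix_d2} when $d=2$) forces equality, and rank–nullity gives $\rank(R_W(p))=dn-d(d+1)/2$ (resp.\ $dn-(d^2+d+2)/2$). For the \emph{if} direction, if the rank attains the claimed value, then $\dim\vnull(R_W(p))=\dim\vspan(L_R)$ (resp.\ $\dim\vspan(L_N)$); since Lemma~\ref{lem_null_of_rigid matrix01} provides the inclusion $\vspan(L_R)\subseteq\vnull(R_W(p))$ (resp.\ $\vspan(L_N)\subseteq\vnull(R_W(p))$) between two subspaces of the same finite dimension, they coincide, so every infinitesimal motion is trivial and the framework is GIWR by Definition~\ref{weak_rigidity_trivial}.

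The main subtle point is the reverse inclusion in the second paragraph, namely that nothing outside $\vspan(L_R)$ (or $\vspan(L_N)$) could qualify as a trivial motion. This is a standard Lie-algebra fact for $SE(d)$ (and its extension by the dilation group), but it is the only place where the hypothesis $n\ge 3$ is genuinely used: it ensures that the configuration $p$ is nondegenerate enough (not all $p_i$ coincident, and non-collinear/non-coplanar in the relevant dimension) for the rotation and scaling generators evaluated at $p$ to remain independent, so that the dimensional count $d(d+1)/2$ (or $(d^2+d+2)/2$) is tight. Once this is settled, the rest of the proof is a direct consequence of Lemmas~\ref{Lem:null_of_rigid matrix_d2}--\ref{lem_null_of_rigid matrix02}.
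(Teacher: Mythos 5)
Your proposal is correct and follows essentially the same route as the paper: both use Lemmas~\ref{lem_null_of_rigid matrix01} and \ref{lem_null_of_rigid matrix02} (and Lemma~\ref{Lem:null_of_rigid matrix_d2} for $d=2$) to show that the stated rank is attained exactly when $\vnull(R_W(p))$ equals $\vspan(L_R)$ (resp.\ $\vspan(L_N)$), and then identify that span with the trivial motions of Definition~\ref{trivial}. Your second and fourth paragraphs merely make explicit the identification of $\vspan(L_R)$/$\vspan(L_N)$ with the full set of trivial motions and the role of $n\ge 3$ in the dimension count, which the paper's proof takes as immediate.
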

\begin{proof}
For $d=2$ case, the theorem was proved in Theorem 3.1 in Kwon et al. (2018)\cite{kwon2018infinitesimal}. We now prove it for $d=3$ case.

From Lemmas \ref{lem_null_of_rigid matrix01} and \ref{lem_null_of_rigid matrix02}, when $\mathcal{E} \neq \emptyset$, $\rank\left(R_{W}(p)\right) = dn-d(d+1)/2$  if and only if $\vnull\left(R_{W}(p)\right)=\vspan(L_R)$. Note that
$(\mathds{1}_n\otimes I_d)$ and $(I_n\otimes J_i)p$, $i \in \set{1,2,3}$ in $L_R$ correspond to a rigid-body translation and a rigid-body rotation of the entire framework, respectively. Therefore, for the case of $\mathcal{E} \neq \emptyset$, the theorem directly follows from Definition \ref{weak_rigidity_trivial}.

Similarly, when $\mathcal{E} = \emptyset$, $\rank\left(R_{W}(p)\right) = dn-(d^2+d+2)/2$ if and only if $\vnull\left(R_{W}(p)\right)=\vspan(L_N)$. Since
$(\mathds{1}_n\otimes I_d)$, $(I_n\otimes J_i)p,i \in \set{1,2,3}$ and $p$ in $L_N$ correspond to a rigid-body translation, a rigid-body rotation and a scaling of the entire framework, respectively, the remainder of the theorem for the $\mathcal{E} = \emptyset$ case directly follows from Definition \ref{weak_rigidity_trivial}.
\end{proof}

\begin{remark} \label{Remark:comparison}
Comparison with another publications: 
As Lemma~\ref{Lem:null_of_rigid matrix_d2} and Lemma~\ref{lem_null_of_rigid matrix01}, trivial infinitesimal motions in terms of $R_W$ correspond to translations, rotations and scalings when considering no distance constraint whereas those motions related to $\hat{R}_w$ correspond to only a subset of the motions, i.e., translations and rotations without scaling motions, where $\hat{R}_w$ denotes the rigidity matrix introduced in Jing et al. (2018)\cite{jing2018weak}. This difference is due to the fact that, in our work, inner products of inter-agent relative bearings, i.e., cosines of angles, are used to characterize a rigid formation whereas inner products of inter-agent relative positions are considered in Jing et al. (2018)\cite{jing2018weak}.
This fact can be checked by Lemma 3.6 in Jing et al. (2018)\cite{jing2018weak}. Therefore, the type-1 weak rigidity theory is distinct from our work.

In Jing et al. (2019)\cite{jing2019angle}, the angle rigidity theory is introduced, which is a similar concept to the weak rigidity theory in this paper when no distance constraint is considered. However, we deal with not only 2-dimensional cases but also 3-dimensional cases whereas Jing et al. (2019)\cite{jing2019angle} only studies 2-dimensional cases. In addition, we study globally exponential convergence of  formations whereas Jing et al. (2019)\cite{jing2019angle} does not. Therefore, our work can include the work in Jing et al. (2019)\cite{jing2019angle}. 
\end{remark}
\subsection{Relationship between distance rigidity and GWR}\label{Sec:Relationship_weak}
This subsection shows that the proposed theory, i.e., weak rigidity theory, is necessary for the distance rigidity theory\cite{asimow1978rigidity,asimow1979rigidity,roth1981rigid,C:Hendrickson:SIAM1992}. First, let us denote a classic framework without an angle set by $(\mathcal{G},p)$. We then reach the following result.
\begin{proposition}
\label{Pro:relation_rigidity}
If a classic framework  $(\mathcal{G},p)$ is distance rigid, globally distance rigid and infinitesimally distance rigid in $\mathbb{R}^{d}$, then the framework  $(\mathcal{G},\mathcal{A},p)$ is GWR, globally GWR and GIWR in $\mathbb{R}^{d}$, respectively.
\end{proposition}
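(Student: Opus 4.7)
The plan is to prove the three implications separately, matching the ``respectively'' in the statement. In each case the key observation is that strong equivalency (resp.\ the weak rigidity matrix) carries more information than (resp.\ extends) the distance-only data, so distance rigidity furnishes what is needed for the weak-rigidity conclusion.

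\medskip
\noindent\textbf{Local and global cases.} For the first two implications I would argue directly from the definitions. Suppose $(\mathcal{G},\mathcal{A},q)$ is strongly equivalent to $(\mathcal{G},\mathcal{A},p)$. Then by the first bullet of Definition~\ref{Def:strongEquiv}, in particular $\|p_i-p_j\|=\|q_i-q_j\|$ for every $(i,j)\in\mathcal{E}$, i.e., the classical frameworks $(\mathcal{G},p)$ and $(\mathcal{G},q)$ are equivalent in the distance-rigidity sense. If $(\mathcal{G},p)$ is distance rigid, there is a neighborhood $\mathcal{B}_p\subseteq\mathbb{R}^{dn}$ of $p$ on which such equivalence forces congruence, i.e.\ $\|p_i-p_j\|=\|q_i-q_j\|$ for all $i,j\in\mathcal{V}$. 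But this is exactly the condition for $(\mathcal{G},\mathcal{A},p)$ and $(\mathcal{G},\mathcal{A},q)$ to be congruent in the sense preceding Definition~\ref{Def:strongEquiv}, so $(\mathcal{G},\mathcal{A},p)$ is GWR. Dropping the neighborhood restriction gives the analogous implication from globally distance rigid to globally GWR.

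\medskip
\noindent\textbf{Infinitesimal case.} For the third implication I would work with the rank of $R_W(p)$. Since the classical framework requires at least one edge, we are in the $\mathcal{E}\neq\emptyset$ regime. Write
\begin{equation*}
R_W(p)=\begin{bmatrix}\partial\mathbf{D}/\partial p\\ \partial\mathbf{A}/\partial p\end{bmatrix},
\end{equation*}
and observe that $\partial\mathbf{D}/\partial p$ is (up to the scalar factor $2$ in each row) precisely the classical distance rigidity matrix $R_D(p)$. Infinitesimal distance rigidity of $(\mathcal{G},p)$ means $\rank(R_D(p))=dn-d(d+1)/2$. Since $R_D(p)$ appears as a row submatrix of $R_W(p)$, we immediately get
\begin{equation*}
\rank\!\bigl(R_W(p)\bigr)\;\ge\;\rank\!\bigl(R_D(p)\bigr)\;=\;dn-\tfrac{d(d+1)}{2}.
\end{equation*}
The reverse inequality is exactly Lemma~\ref{lem_null_of_rigid matrix02} for the $\mathcal{E}\neq\emptyset$ case, so equality holds. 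Theorem~\ref{Thm:Inf_Rank} then concludes that $(\mathcal{G},\mathcal{A},p)$ is GIWR.

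\medskip
\noindent\textbf{Expected obstacle.} None of the three steps is deep: the local and global parts are essentially tautological once one unpacks Definitions~\ref{Def:strongEquiv} and \ref{Def:weakRigidity}, and the infinitesimal part is a two-line rank sandwich between the obvious submatrix lower bound and Lemma~\ref{lem_null_of_rigid matrix02}. The only point that deserves care is making clear that one is in the $\mathcal{E}\neq\emptyset$ branch of every definition (so the proportional-congruence/scaling clauses do not intervene), and verifying that the row block of $R_W(p)$ corresponding to $\mathbf{D}$ really coincides with $R_D(p)$ up to nonzero scaling so that its rank is preserved.
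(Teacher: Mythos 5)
Your proposal is correct and follows essentially the same route as the paper: the local and global cases are the same definitional unwinding (strong equivalence restricts to distance equivalence on $\mathcal{E}$, so distance rigidity forces congruence, which is exactly congruence of $(\mathcal{G},\mathcal{A},p)$ and $(\mathcal{G},\mathcal{A},q)$), and the infinitesimal case is the same rank sandwich the paper uses, bounding $\rank(R_W(p))$ below by the rank of the $\mathbf{D}$-block (the paper phrases this as the existence of a nonzero $(dn-d(d+1)/2)\times(dn-d(d+1)/2)$ minor) and above by Lemma~\ref{lem_null_of_rigid matrix02}, then invoking Theorem~\ref{Thm:Inf_Rank}. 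Your explicit remarks that $\mathcal{E}\neq\emptyset$ throughout and that the $\mathbf{D}$-block equals $R_D$ up to nonzero row scaling are minor clarifications of the same argument.
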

\begin{proof}
First, the assumption that  $(\mathcal{G},p)$ is distance rigid means that 
there exists a neighborhood $\bar{\mathcal{B}}_{p} \subseteq \mathbb{R}^{dn}$ of $p$ such that $(\mathcal{G},q)$, $q \in \bar{\mathcal{B}}_{p}$, equivalent to $(\mathcal{G},p)$ is congruent to $(\mathcal{G},p)$ \cite{asimow1978rigidity}. 
Then, since the rigid shape of  $(\mathcal{G},p)$ is locally determined, it is obvious that $(\mathcal{G},\mathcal{A},p)$ is strongly equivalent and congruent  to  $(\mathcal{G},\mathcal{A},q)$, $q \in \bar{\mathcal{B}}_{p}$ for any $\mathcal{A}$. 
Therefore, $(\mathcal{G},\mathcal{A},p)$ is GWR from Definition \ref{Def:weakRigidity}. In the same way, it can be shown that global distance rigidity implies global GWR.

Next, consider the distance rigidity matrix $R_D$ defined as   $R_{D}(p) = \frac{1}{2}\frac{\partial \mbf{D}}{\partial p}$. If  $(\mathcal{G},p)$ is  infinitesimally distance rigid in $\mathbb{R}^{d}$, then 
$R_{D}(p)$ is of rank $dn-d(d+1)/2$ \cite{asimow1979rigidity,C:Hendrickson:SIAM1992}. With this fact, we can observe from the definition
$
R_{W}  =  \begin{bmatrix}
     \frac{\partial \mbf{D}}{\partial p}\\
    \frac{\partial \mbf{A}}{\partial p}
  \end{bmatrix}
$
that there exists a nonzero $(dn-d(d+1)/2) \times (dn-d(d+1)/2)$ minor of $R_{W}$.
Moreover, from Lemma \ref{lem_null_of_rigid matrix02}, we have that 
$\rank(R_{W}(p)) \leq dn-d(d+1)/2$ for $\mathcal{E} \neq \emptyset$. Therefore, $R_{W}$ is of rank $dn-d(d+1)/2$, which implies that $(\mathcal{G},\mathcal{A},p)$ is GIWR from Theorem \ref{Thm:Inf_Rank}.
\end{proof}
Due to the angle constraints, the GWR theory is not sufficient for the distance rigidity theory. The concept of `weak' is induced from the fact that the GWR theory is a weaker condition than the classic distance rigidity theory. 

\subsection{Generic property}
\label{Sec:Generic}
In this subsection, we show that both GWR and GIWR are generic properties.
First, we define two smooth manifolds as two sets $\mathcal{M}$ and $\mathcal{M}'$ composed of points congruent to $p$ and proportionally congruent to $p$, respectively. If the affine span of the configuration $p$ is $\mathbb{R}^{d}$ (or equivalently $p$ does not lie on any hyperplane in $\mathbb{R}^{d}$), then $\mathcal{M}$ is $d(d+1)/2$-dimensional and $\mathcal{M}'$ is $(d^2+d+2)/2$-dimensional, because $\mathcal{M}$ arises from the $d(d-1)/2$- and $d$-dimensional manifold of rotations and translations of $\mathbb{R}^{d}$, respectively, and $\mathcal{M}'$ arises from $d(d-1)/2$-, $d$- and $1$-dimensional manifold of rotations, translations and scalings of $\mathbb{R}^{d}$, respectively.

With the smooth map $F_W:\chi \rightarrow \mathbb{R}^{m+w}$ for some properly defined  $\chi\subset\mathbb{R}^{dn}$, let $r = \text{max}\set{\rank(\frac{\partial F_W}{\partial p}) \given p \in \mathbb{R}^{dn}}$. Then $p \in \mathbb{R}^{dn}$ is a \myemph{regular point} of $F_W$ if $\rank(\frac{\partial F_W}{\partial p})=r$, and a \myemph{singular point} otherwise.
With reference to Proposition 2 in Asimow et al. (1978)\cite{asimow1978rigidity}, if $p$ is a regular point of $F_W$ then there exists a neighborhood $\mathcal{B}_{p}$ of $p$ such that $F_W^{-1}(F_W(p))\cap\mathcal{B}_{p}$ is a $(dn-r)$-dimensional smooth manifold.

If $p_1, ..., p_n$ do not lie on any hyperplane in $\mathbb{R}^{d}$ when $\mathcal{E} \neq \emptyset$ then it follows from Lemma \ref{lem_null_of_rigid matrix02} that 
\begin{align}
\rank(\frac{\partial F_W}{\partial p}) = dn - \vnull(\frac{\partial F_W}{\partial p}) \leq dn-d(d+1)/2.
\end{align}
Moreover, if $p_1, ..., p_n$ do not lie on any hyperplane in $\mathbb{R}^{d}$ when $\mathcal{E} = \emptyset$ then, from Lemma \ref{lem_null_of_rigid matrix02}, we have that 
\begin{align}
\rank(\frac{\partial F_W}{\partial p}) = dn - \vnull(\frac{\partial F_W}{\partial p}) \leq dn-(d^2+d+2)/2.
\end{align}
In particular,  we have that if $p$ is a regular point of $F_W$ then $\rank(R_{W}(p))=dn-d(d+1)/2$ for $\mathcal{E} \neq \emptyset$ or $\rank(R_{W}(p))=dn-(d^2+d+2)/2$ for $\mathcal{E} = \emptyset$. We then have the following lemma.

\begin{lemma}
\label{Lem:generic}
Suppose that $p$ is a regular point of $F_W$ and the affine span of $p_1, ..., p_n$ is $\mathbb{R}^{d}$. A framework $(\mathcal{G},\mathcal{A},p)$ with $\mathcal{E} \neq \emptyset$ is GWR in $\mathbb{R}^{d}$ if and only if $\rank(R_W(p)) = dn-d(d+1)/2$. In addition, a framework $(\mathcal{G},\mathcal{A},p)$ with $\mathcal{E} = \emptyset$ is GWR in $\mathbb{R}^{d}$ if and only if $\rank(R_{W}(p)) = dn-(d^2+d+2)/2$.
\end{lemma}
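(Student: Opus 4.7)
The plan is to adapt the classical Asimow--Roth manifold argument to our weak-rigidity setting. Let $r=\rank(R_W(p))$. Because $p$ is assumed to be a regular point of $F_W$, the discussion preceding the lemma guarantees that $F_W^{-1}(F_W(p))\cap\mathcal{B}_p$ is a smooth submanifold of $\mathbb{R}^{dn}$ of dimension $dn-r$ for a suitable neighborhood $\mathcal{B}_p$ of $p$. The hypothesis that $p_1,\ldots,p_n$ affinely span $\mathbb{R}^d$ simultaneously ensures that the set $\mathcal{M}$ of configurations congruent to $p$ is a smooth manifold of dimension $d(d+1)/2$, and that the set $\mathcal{M}'$ of configurations proportionally congruent to $p$ is a smooth manifold of dimension $(d^2+d+2)/2$, as already recorded at the top of this subsection.

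Next I would record the automatic containments $\mathcal{M}\subseteq F_W^{-1}(F_W(p))$ in the case $\mathcal{E}\neq\emptyset$ and $\mathcal{M}'\subseteq F_W^{-1}(F_W(p))$ in the case $\mathcal{E}=\emptyset$. For the first, if $q$ is congruent to $p$ then $\norm{q_i-q_j}=\norm{p_i-p_j}$ for \emph{all} $i,j\in\mathcal{V}$, so the distance entries of $F_W$ match, and each angle entry $A_h$ admits the cosine-law expression $\tfrac{\norm{z_{ki}}^2+\norm{z_{kj}}^2-\norm{z_{ij}}^2}{2\norm{z_{ki}}\norm{z_{kj}}}$ recalled in Section~\ref{Sec:Pre_Notation}, so those entries match too. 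For the second, $\norm{q_i-q_j}=C\norm{p_i-p_j}$ makes the common positive factor $C$ cancel inside each cosine, again giving $F_W(q)=F_W(p)$.

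With these containments in hand, I would rewrite GWR as a local coincidence of manifolds. By Definition~\ref{Def:weakRigidity}, the framework is GWR precisely when, after possibly shrinking $\mathcal{B}_p$, every strongly equivalent (respectively angle equivalent) $q\in\mathcal{B}_p$ lies in $\mathcal{M}$ (respectively $\mathcal{M}'$); equivalently, the reverse inclusion $F_W^{-1}(F_W(p))\cap\mathcal{B}_p\subseteq\mathcal{M}$ (resp.\ $\mathcal{M}'$) holds. Combined with the previous paragraph, GWR is equivalent to local equality of two nested smooth submanifolds of $\mathbb{R}^{dn}$. The standard differential-topology fact that a smooth submanifold contained in another connected smooth submanifold of equal dimension is open in it, hence locally equal to it, then reduces GWR to an equality of dimensions, namely $dn-r=d(d+1)/2$ when $\mathcal{E}\neq\emptyset$ and $dn-r=(d^2+d+2)/2$ when $\mathcal{E}=\emptyset$. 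Rearranging yields exactly the two rank formulas in the statement.

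The main obstacle I expect is justifying the ``nested equal-dimension submanifolds locally coincide'' step unambiguously. To invoke it I need $\mathcal{M}$ and $\mathcal{M}'$ to be honest embedded submanifolds of $\mathbb{R}^{dn}$ of the advertised dimensions and connected near $p$; this follows from the smooth free action of $SE(d)$ (respectively $SE(d)$ extended by the positive scaling group $\mathbb{R}_{>0}$) on configurations whose affine span is $\mathbb{R}^d$, so that the orbit through $p$ is locally diffeomorphic to the acting Lie group. Once this orbit picture is in place, the dimension count goes through and the equivalence in the lemma is immediate.
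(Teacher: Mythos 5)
Your proposal is correct and follows essentially the same route as the paper's proof: the Asimow--Roth regular-point argument identifying $F_W^{-1}(F_W(p))\cap\mathcal{B}_p$ as a $(dn-r)$-dimensional manifold and comparing its dimension with that of $\mathcal{M}$ (resp.\ $\mathcal{M}'$). You additionally make explicit two steps the paper leaves implicit --- the containment $\mathcal{M}\subseteq F_W^{-1}(F_W(p))$ (resp.\ $\mathcal{M}'\subseteq F_W^{-1}(F_W(p))$) and the fact that nested embedded submanifolds of equal dimension locally coincide --- which is a welcome tightening rather than a different argument.
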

\begin{proof}
Let us consider the case of $\mathcal{E} \neq \emptyset$. We have the fact that $R_{W}(p)$ has the maximum rank, i.e.,  $\rank(R_{W}(p))=dn-d(d+1)/2$. Then, $F_W^{-1}(F_W(p))\cap\mathcal{B}_{p}$ is $d(d+1)/2$-dimensional. 
Thus, $\mathcal{M}$ and $F_W^{-1}(F_W(p))\cap\mathcal{B}_{p}$ have the same dimension, which implies that the two sets agree near $p$. Consequently, $F_W^{-1}(F_W(p))\cap\mathcal{B}_{p}$ is the set of $q \in \mathbb{R}^{dn}$ such that $(\mathcal{G},\mathcal{A},q),q\in\mathcal{B}_{p}$, is strongly equivalent to $(\mathcal{G},\mathcal{A},p)$, and $\mathcal{M}$ is the set of $q \in \mathbb{R}^{dn}$ such that $q$ is congruent to $p$. 
Therefore, $(\mathcal{G},\mathcal{A},p)$ is GWR in $\mathbb{R}^{d}$ as defined in Definition \ref{Def:weakRigidity}. 

Similarly, when $\mathcal{E} = \emptyset$, $R_{W}(p)$ has the maximum rank, i.e., $\rank(R_{W}(p))=dn-(d^2+d+2)/2$. 
Therefore, $F_W^{-1}(F_W(p))\cap\mathcal{B}_{p}$ is $(d^2+d+2)/2$-dimensional. Two sets $\mathcal{M}'$ and $F_W^{-1}(F_W(p))\cap\mathcal{B}_{p}$ have the same dimension, and this implies that the two sets agree close to $p$. Consequently,
$F_W^{-1}(F_W(p))\cap\mathcal{B}_{p}$ is the set of $q \in \mathbb{R}^{dn}$ such that $(\mathcal{G},\mathcal{A},q),q\in\mathcal{B}_{p}$, is angle equivalent to $(\mathcal{G},\mathcal{A},p)$, and $\mathcal{M}'$ is the set of $q \in \mathbb{R}^{dn}$ such that $q$ is proportionally congruent to $p$. Therefore, $(\mathcal{G},\mathcal{A},p)$ is GWR in $\mathbb{R}^{d}$ as defined in Definition \ref{Def:weakRigidity}.

If $(\mathcal{G},\mathcal{A},p)$ is GWR in $\mathbb{R}^{d}$, then the two sets $F_W^{-1}(F_W(p))\cap\mathcal{B}_{p}$ and $\mathcal{M}$ are coincident near $p$, which implies that $F_W^{-1}(F_W(p))\cap\mathcal{B}_{p}$ and $\mathcal{M}$ have the same dimension and $\rank(R_{W}(p)) =r= dn-d(d+1)/2$ when $\mathcal{E} \neq \emptyset$ (resp. $\rank(R_{W}(p)) =r= dn-(d^2+d+2)/2$ when $\mathcal{E} = \emptyset$). Hence, we can conclude that the framework $(\mathcal{G},\mathcal{A},p)$ with $\mathcal{E} \neq \emptyset$ (resp. $\mathcal{E} = \emptyset$) is GWR in $\mathbb{R}^{d}$ if and only if $\rank(R_{W}(p)) = dn-d(d+1)/2$ (resp. $\rank(R_{W}(p)) = dn-(d^2+d+2)/2$).
\end{proof}

In general, a generic point introduced in Connelly et al. (2005)\cite{connelly2005generic} is used to derive a generic property; however, 
the notion of the generic point cannot be applied to our work since it cannot describe an equation involving angle constraints in a polynomial form. Thus, in this paper, we do not make use of the notion of the generic point. We next provide the following result to explore a relationship between GWR and GIWR
\begin{proposition}[Relationship between GWR and GIWR]
\label{Pro:rel_weak_infweak}
Suppose a framework $(\mathcal{G},\mathcal{A},p)$, $p = [p_{1}^\top,...,p_{n}^\top]^\top \in \mathbb{R}^{dn}$, is in $\mathbb{R}^{d}$ and the affine span of $p_1, ..., p_n$ is $\mathbb{R}^{d}$. Then, the framework $(\mathcal{G},\mathcal{A},p)$ is GIWR in $\mathbb{R}^{d}$ if and only if $p$ is a regular point of $F_W$ and $(\mathcal{G},\mathcal{A},p)$ is GWR in $\mathbb{R}^{d}$.
\end{proposition}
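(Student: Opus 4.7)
The plan is to combine Theorem~\ref{Thm:Inf_Rank} (rank characterization of GIWR), Lemma~\ref{lem_null_of_rigid matrix02} (upper bound on the rank of $R_W$), and Lemma~\ref{Lem:generic} (rank characterization of GWR at a regular point), treating the two cases $\mathcal{E}\neq\emptyset$ and $\mathcal{E}=\emptyset$ in parallel. Throughout, I write $r_{\max}=dn-d(d+1)/2$ when $\mathcal{E}\neq\emptyset$ and $r_{\max}=dn-(d^2+d+2)/2$ when $\mathcal{E}=\emptyset$; by Lemma~\ref{lem_null_of_rigid matrix02} we have $\rank(R_W(q))\le r_{\max}$ for all admissible $q$, so the maximal rank $r$ of $\partial F_W/\partial p$ over $\mathbb{R}^{dn}$ satisfies $r\le r_{\max}$.

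For the forward direction, assume $(\mathcal{G},\mathcal{A},p)$ is GIWR. Then Theorem~\ref{Thm:Inf_Rank} gives $\rank(R_W(p))=r_{\max}$. Since $r\le r_{\max}$ and $\rank(R_W(p))\le r$ by definition of $r$, equality forces $\rank(R_W(p))=r$, so $p$ is a regular point of $F_W$. With the affine span hypothesis in place, Lemma~\ref{Lem:generic} is applicable, and its sufficiency part then gives that $(\mathcal{G},\mathcal{A},p)$ is GWR.

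For the reverse direction, assume $p$ is a regular point of $F_W$ and $(\mathcal{G},\mathcal{A},p)$ is GWR. The necessity part of Lemma~\ref{Lem:generic} yields $\rank(R_W(p))=r_{\max}$, and Theorem~\ref{Thm:Inf_Rank} then concludes that $(\mathcal{G},\mathcal{A},p)$ is GIWR.

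The result is essentially a bookkeeping argument that chains the earlier lemmas, so there is no single hard step; the only subtlety is being careful that the maximal rank $r$ appearing in the definition of a regular point coincides with the trivial-motion upper bound $r_{\max}$ from Lemma~\ref{lem_null_of_rigid matrix02}. Verifying this coincidence relies on having some $q$ at which the upper bound is attained, which is guaranteed in the forward direction by taking $q=p$ under the GIWR hypothesis and in the reverse direction by the assumption that $p$ itself is regular and GWR. Handling the two cases $\mathcal{E}\neq\emptyset$ and $\mathcal{E}=\emptyset$ only changes the numerical value of $r_{\max}$ and the relevant manifold ($\mathcal{M}$ versus $\mathcal{M}'$), so the argument goes through uniformly.
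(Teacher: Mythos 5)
Your proposal is correct and follows essentially the same route as the paper: forward direction via Theorem~\ref{Thm:Inf_Rank} to get maximal rank (hence regularity) and then the sufficiency part of Lemma~\ref{Lem:generic}; reverse direction via the necessity part of Lemma~\ref{Lem:generic} and then Theorem~\ref{Thm:Inf_Rank}. Your explicit verification that the generic maximal rank $r$ coincides with the trivial-motion bound $r_{\max}$ from Lemma~\ref{lem_null_of_rigid matrix02} is a detail the paper leaves implicit, but it does not change the argument.
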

\begin{proof}
If a framework $(\mathcal{G},\mathcal{A},p)$ is GIWR, then it follows from Theorem \ref{Thm:Inf_Rank} that $R_{W}(p)$ is of rank $dn-d(d+1)/2$ or $dn-(d^2+d+2)/2$, and thus $p$ is a regular point. Moreover, with reference to the proof of Lemma \ref{Lem:generic}, we have that $(\mathcal{G},\mathcal{A},p)$ is GWR in $\mathbb{R}^{d}$. 

If $p$ is a regular point of $F_W$ and $(\mathcal{G},\mathcal{A},p)$ is GWR in $\mathbb{R}^{d}$, then $R_{W}(p)$ has the max rank, i.e., $dn-d(d+1)/2$ or $dn-(d^2+d+2)/2$, from the proof of Lemma \ref{Lem:generic}, which implies that the framework $(\mathcal{G},\mathcal{A},p)$ is GIWR from Theorem \ref{Thm:Inf_Rank}.
\end{proof}

We finally have the following result which shows that both GWR and GIWR for a framework are generic properties.
\begin{proposition}[Generic property]\label{Pro:generic_weak}
If a framework $(\mathcal{G},\mathcal{A},p)$ in $\mathbb{R}^{d}$ for a regular point $p$ of $F_W$  is GWR (resp. GIWR), then $(\mathcal{G},\mathcal{A},q)$ in $\mathbb{R}^{d}$ for any regular point $q$ of $F_W$ is GWR (resp. GIWR).
\end{proposition}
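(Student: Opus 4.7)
The plan is to reduce both halves of the statement to a rank condition on the weak rigidity matrix $R_W$, and then to exploit the fact that this rank is, by definition of a regular point, the same value $r := \max_{p \in \mathbb{R}^{dn}} \rank(R_W(p))$ at every regular point of $F_W$. Once GWR (or GIWR) forces $r$ to attain its maximal admissible value at one regular point, the same rank must be attained at every other regular point, allowing the equivalence to be re-run in the reverse direction to recover GWR (or GIWR) there.

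Concretely, I would first invoke Lemma~\ref{lem_null_of_rigid matrix02} to record the upper bounds $r \le dn - d(d+1)/2$ when $\mathcal{E}\neq\emptyset$ and $r \le dn - (d^2+d+2)/2$ when $\mathcal{E}=\emptyset$. For the GWR case, assuming $(\mathcal{G},\mathcal{A},p)$ is GWR at a regular $p$, Lemma~\ref{Lem:generic} gives that $\rank(R_W(p))$ equals the corresponding upper bound; hence $r$ saturates that bound, and since any other regular $q$ automatically has $\rank(R_W(q))=r$, a second application of Lemma~\ref{Lem:generic} (in the ``rank-implies-GWR'' direction) concludes that $(\mathcal{G},\mathcal{A},q)$ is GWR. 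For the GIWR case, I would funnel everything through Proposition~\ref{Pro:rel_weak_infweak}: at a regular point, GIWR is equivalent to GWR, so GIWR at $p$ yields GWR at $p$, the previous argument upgrades this to GWR at every regular $q$, and Proposition~\ref{Pro:rel_weak_infweak} applied at $q$ returns GIWR.

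The main obstacle I foresee is the auxiliary hypothesis appearing in both Lemma~\ref{Lem:generic} and Proposition~\ref{Pro:rel_weak_infweak}, namely that the affine span of $p_1,\ldots,p_n$ is all of $\mathbb{R}^d$. The proposition under consideration only asks that $p$ and $q$ be regular points of $F_W$, so this span assumption has to be reinstated. The cleanest route is to observe that full-dimensional affine span cuts out an open dense subset of $\mathbb{R}^{dn}$ whose complement is a proper algebraic variety, and that once the generic rank $r$ meets its upper bound, a configuration confined to a hyperplane would force strictly more vectors into $\vnull(R_W)$ than predicted by Lemma~\ref{lem_null_of_rigid matrix01}, which is impossible. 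Hence the regular locus automatically lies in the full-span open set, and the rank bookkeeping above applies verbatim. This reduction is the only non-routine step; everything else is a direct chaining of Lemma~\ref{Lem:generic}, Lemma~\ref{lem_null_of_rigid matrix02}, and Proposition~\ref{Pro:rel_weak_infweak}.
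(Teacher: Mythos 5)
Your proposal is correct and follows essentially the same route as the paper: both arguments reduce GWR/GIWR to the rank of $R_W$, use the fact that every regular point attains the same maximal rank $r$, and shuttle between the two notions via Proposition~\ref{Pro:rel_weak_infweak} (you treat GWR directly through Lemma~\ref{Lem:generic} and derive GIWR from it, whereas the paper treats GIWR directly through Theorem~\ref{Thm:Inf_Rank} and derives GWR from it --- a cosmetic difference). Your explicit handling of the affine-span hypothesis is a point of care the paper's own proof silently omits; just note that your argument for discharging it presupposes that $r$ already saturates the upper bound before hyperplane-confined configurations can be excluded from the regular locus.
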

\begin{proof}
First, if $(\mathcal{G},\mathcal{A},p)$ is GIWR in $\mathbb{R}^{d}$, then $\rank(R_{W}(p))$ is equal to $dn-d(d+1)/2$ or $dn-(d^2+d+2)/2$. Moreover, it is clear that $(\mathcal{G},\mathcal{A},q)$ is also GIWR in $\mathbb{R}^{d}$ since $q$ is a regular point and it holds that $R_{W}(q)=R_{W}(p)$.

Next, if a framework $(\mathcal{G},\mathcal{A},p)$ is GWR and $p$ is a regular point of $F_W$ in $\mathbb{R}^{d}$, then the framework $(\mathcal{G},\mathcal{A},p)$ is GIWR in $\mathbb{R}^{d}$ from Proposition~\ref{Pro:rel_weak_infweak}. Moreover, $(\mathcal{G},\mathcal{A},q)$ is also GIWR, which implies that $(\mathcal{G},\mathcal{A},q)$ is GWR from Proposition~\ref{Pro:rel_weak_infweak}.
\end{proof}
\section{Application to formation control: local convergence of $n$-agent formations in $\mathbb{R}^d$}
\label{Sec:Formation_control_local_stab.}
We now apply the GWR theory to formation control problems. In this section, we particularly explore  local stability on $n$-agent formations in $\mathbb{R}^d$. This section aims to show local stability for minimally GIWR formations, and for non-minimally GIWR formations, where `local' means `close to the desired formation'. 
In distributed multi-agent systems,
the gradient flow law \cite{sakurama2015distributed,krick2009stabilisation,bishop2015distributed,park2014stability} is a popular approach, and we make use of the gradient flow approach to stabilize rigid formation shapes in this paper. 
We first rigorously define the concept of the minimally GIWR formation as follows.
\begin{definition}[Minimally GIWR]
A framework $(\mathcal{G},\mathcal{A},p)$ is \myemph{minimally GIWR} in $\mathbb{R}^d$ if the framework $(\mathcal{G},\mathcal{A},p)$  is GIWR in $\mathbb{R}^d$ and if no single distance or angle constraint can be removed without losing its GIWR.
\end{definition}
It is remarkable that if $(\mathcal{G},\mathcal{A},p)$ is minimally GIWR in $\mathbb{R}^d$ then $\rank(R_W)$ is exactly equal to the number of edge and angle constraints in the case of $\mathcal{E} \neq \emptyset$ (or only angle constraints in the case of $\mathcal{E} = \emptyset$), i.e., $\rank(R_W)=m+w$.
\subsection{Equations of motion based on gradient flow approach} \label{subsec:eq_motion}
We assume that each agent is governed by a single integrator, i.e.,
\begin{equation}
\frac{d}{dt}p_i=\dot{p}_i=u_i,\, i \in \mathcal{V},
\end{equation}
where time $t \in [0,\infty)$, and $u_i$ is a control input. 
Any entries in $u_i$ can be expressed by the relative position vectors of neighbors if a gradient flow law is employed. Note our formation control system makes use of the relative positions of neighbors as sensing variables, and the inter-agent distances and angles of neighbors as control variables.

We define the following two column vectors composed of $\norm{z_{g}}^2$ and $A_h$ as
\begin{align}
d_c(p) = \left[\ldots, \norm{z_{g_{ij}}}^2, \ldots \right]^\top_{(i,j) \in \mathcal{E}}, \,
c_c(p) = \left[\ldots, A_{h_{kij}}, \ldots \right]^\top_{(k,i,j) \in \mathcal{A}}.
\end{align}
Similarly, $d_c^*$ and $c_c^*$ are defined as
\begin{align} 
d_c^* = \left[\ldots, \norm{z^*_g}^2, \ldots \right]^\top, \,
c_c^* = \left[\ldots, A_h^*, \ldots \right]^\top,
\end{align}
where $\norm{z^*_g}^2$ and $A_h^*$ denote the desired values of $\norm{z_{g}}^2$ and $A_h$, respectively, and both of them are constants.
With the above definitions, an error vector is defined as follows
\begin{equation}
e(p)=\left[d_c(p)^\top c_c(p)^\top \right]^\top - \left[d_c^{*\top} c_c^{*\top} \right]^\top.
\end{equation}
The simple gradient flow law is employed to analyze a formation control system as follows
\begin{equation}
\dot{p}=u = -\left(\nabla\left(\frac{1}{2}e^\top(p) e(p)\right)\right)^\top.
\end{equation}
The control law can be expressed as
\begin{align}
\dot{p}=u 
&= -\left(\nabla\left(\frac{1}{2}e^\top(p) e(p)\right)\right)^\top =-R_{W}^\top(p) e(p) \nonumber \\
&=-\begin{bmatrix}s_1^\top & s_2^\top & \cdots & s_n^\top\end{bmatrix}^\top =-(E(p)\otimes I_d)p \label{control_law01}
\end{align} 
for $s_i \in \mathbb{R}^d$, $i \in \set{1, \cdots, n}$ and $E(p)\in \mathbb{R}^{n \times n}$. In $E(p)$, $\left[E(p) \right]_{ij}$ is an element at row $i$ and column $j$ and $\left[E(p) \right]_{ij}$ is the coefficient of the vector $p_j$ in $s_i$. According to the structure of \eqref{control_law01}, we can observe that the matrix $E(p)$ is symmetric (See an example (12) in Kwon et al. (2018)\cite{kwon2018infinitesimal}). 
The formation control system (\ref{control_law01}) is Lipschitz continuous since the system is continuously differentiable, which implies that the solution of (\ref{control_law01}) exists globally. With (\ref{control_law01}), we have the following error dynamics
\begin{equation}
\dot{e}=\frac{\partial e}{\partial p}\dot{p}=R_{W}(p)\dot{p}=-R_{W}(p)R_{W}^\top(p)e. \label{error_dym}
\end{equation}
The controller for agent $k$ in \eqref{control_law01} can be written by
\begin{align}
\dot{p}_k=
&-\underbrace{2\sum_{j\in\mathcal{N}^d_k}\left(\norm{z_{kj}}^2-\norm{z_{kj}^*}^2\right)(p_k-p_j)}_{(j,k) \in \mathcal{E}} 
- \underbrace{\sum_{i,j\in\mathcal{N}^a_k}\left( \cos{\theta_{ij}^{k}}- \cos{\left(\theta_{ij}^{k}\right)^*}\right) \left(\frac{\partial}{\partial p_k}\cos{\theta_{ij}^{k}}\right)^\top}_{(k,i,j) \in \mathcal{A}} \nonumber \\
&- \underbrace{\sum_{j,k\in\mathcal{N}^a_i}\left( \cos{\theta_{jk}^{i}}- \cos{\left(\theta_{jk}^{i}\right)^*}\right) \left(\frac{\partial}{\partial p_k}\cos{\theta_{jk}^{i}}\right)^\top}_{\text{if }\exists (i,j,k) \in \mathcal{A}}, \label{eq:distributed_system}
\end{align}
where $\norm{z_{kj}^*}$ and $\left(\theta_{ij}^{k}\right)^*$ are the desired values for $\norm{z_{kj}}$ and $\theta_{ij}^{k}$, respectively, and $\mathcal{N}^d_k=\set{j \in\mathcal{V} \given (j,k)\in\mathcal{E}}$ and $\mathcal{N}^a_k=\set{i,j \in\mathcal{V} \given  (k,i,j) \in \mathcal{A}}$ denote the neighbor sets for agent $k$ related to distance and angle constraints, respectively; refer to Example 1 in Appendix.
Therefore, it is clear that the system is a distributed system since each agent requires only local information.
Moreover, according to the control system \eqref{eq:distributed_system}, we need to define the following assumption for a sensing topology.
\begin{assumption} \label{asm:sensig}
The sensing graph is characterized by an undirected graph $\mathcal{G}_s = (\mathcal{V}_s,\mathcal{E}_s)$ and agent $k$ can measure relative position vectors in terms of its neighbor set $\mathcal{N}^s_k$, where $\mathcal{V}_s = \mathcal{V}$, $\mathcal{E}_s =\set{(i,j),(i,k),(j,k)\given (i,j) \in \mathcal{E} \lor(k,i,j) \in \mathcal{A}}$ and $\mathcal{N}^s_k=\set{j \in \mathcal{V}\given  (j,k)\in\mathcal{E}_s}$.
\end{assumption}
Note that the proposed controller does not require any communication among agents.
The following result will be useful for next analysis, which shows that if a differential equation $\dot{X}(t)=f(t,X)$ satisfies the following result then the rank of the solution $X(t)$ is constant for all $t \geq 0$ and $\dot{X}(t)$ is said to be \myemph{rank-preserving}.
\begin{lemma}\cite[Lemma 2]{sun2015rigid}
\label{Lem:rank_preserving}
Let $A(t) \in \mathbb{R}^{M \times M}$ and $B(t) \in \mathbb{R}^{N \times N}$ be a continuous time-varying family of matrices. Then, the following differential equation
\begin{align}
\dot{X}(t)=A(t)X(t)+X(t)B(t), \, X(0) \in \mathbb{R}^{M \times N}
\end{align}
is rank-preserving.
\end{lemma}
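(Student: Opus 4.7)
The plan is to construct the unique solution $X(t)$ explicitly in terms of state transition matrices associated with $A(t)$ and $B(t)$, and then invoke the fact that multiplication by an invertible matrix does not change the rank. First I would introduce two fundamental matrix solutions: let $\Phi_A(t)\in\mb{R}^{M\times M}$ solve $\dot{\Phi}_A(t)=A(t)\Phi_A(t)$ with $\Phi_A(0)=I_M$, and let $\Psi_B(t)\in\mb{R}^{N\times N}$ solve the right-action flow $\dot{\Psi}_B(t)=\Psi_B(t)B(t)$ with $\Psi_B(0)=I_N$. Since $A(t)$ and $B(t)$ are continuous in $t$, each of these linear matrix ODEs admits a unique continuously differentiable solution on $[0,\infty)$ by Picard--Lindel\"of.

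Next I would verify by the product rule that
\begin{equation*}
X(t)\defeq \Phi_A(t)\,X(0)\,\Psi_B(t)
\end{equation*}
satisfies the given matrix ODE, since
\begin{equation*}
\dot{X}(t)=\dot{\Phi}_A(t)X(0)\Psi_B(t)+\Phi_A(t)X(0)\dot{\Psi}_B(t)=A(t)X(t)+X(t)B(t),
\end{equation*}
with $X(0)=I_M X(0)I_N$ trivially satisfied. The vector field $(t,X)\mapsto A(t)X+XB(t)$ is linear, hence Lipschitz in $X$ on any compact time window, so Picard--Lindel\"of again yields uniqueness and this explicit formula is the \emph{only} solution.

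The key remaining step, and the only one that is not immediate, is to confirm that $\Phi_A(t)$ and $\Psi_B(t)$ are invertible for every $t\geq 0$. This follows from Liouville's formula, which gives $\det\Phi_A(t)=\exp\!\left(\int_0^t \trace A(s)\,ds\right)>0$, and analogously for $\Psi_B(t)$; both determinants remain strictly positive for all finite $t$. Once invertibility is in hand, left-multiplication by $\Phi_A(t)$ and right-multiplication by $\Psi_B(t)$ both preserve rank, so $\rank(X(t))=\rank(\Phi_A(t)X(0)\Psi_B(t))=\rank(X(0))$ for every $t\geq 0$, which is exactly the rank-preserving property claimed. The main (and essentially only) obstacle is invoking the Liouville argument for the time-varying case; everything else reduces to the product rule and uniqueness of solutions to linear ODEs.
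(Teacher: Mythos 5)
Your proof is correct. Note that the paper does not prove this lemma at all—it is imported by citation from Sun et al.\ (2015)—so there is no in-paper argument to compare against; your construction (factor the unique solution as $X(t)=\Phi_A(t)X(0)\Psi_B(t)$ with the two transition matrices, use Liouville's formula to get $\det\Phi_A(t)>0$ and $\det\Psi_B(t)>0$ for all finite $t$, and conclude that left- and right-multiplication by invertible matrices preserves rank) is precisely the standard proof of this fact and is the one given in the cited source.
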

We next show some properties of the formation control system with the gradient flow approach.
\begin{lemma}
\label{Lem:grad_law_properties}
Under the gradient flow law, the formation control system designed in (\ref{control_law01}) has the following properties:
\begin{enumerate}[(i)]
\item The controller is distributed.
\item The controller and measurement for each agent are independent of any global coordinates. That is, only the local coordinate system for each agent is required to measure relative positions and to implement the control signals. 
\item \label{Lem:cent_scale_inv} The centroid $p^o=\frac{1}{n}\sum_{i=1}^{n}p_i$ is stationary. In the case of $\mathcal{E} = \emptyset$, the centroid and the scale $p^s=\sqrt{\frac{1}{n}\sum_{i=1}^{n}\norm{p_i-p^o}^2}$ are both invariant for all $t\geq0$.

\item \label{Lem:grad_law_properties_Cp} Denote $C_p = \begin{bmatrix}p_1 & p_2 & \cdots & p_n\end{bmatrix}\in \mathbb{R}^{d\times n}$. Then, $\rank\left(C_p(0)\right)=\rank\left(C_p(t)\right)$ for all time $t\geq0$. Moreover, if $C_p$ is of full row rank, then all of $p_i, \forall i\in \set{1,\cdots,n}$ do not lie on a hyperplane. On the other hand, if $C_p$ is not of full row rank, then there exists a hyperplane containing all $p_i, \forall i \in \set{1,2,\cdots,n}$.

\item \label{Lem:col_avo}[Collision avoidance] Let $p^*=[p^{*\top}_{1},...,p^{*\top}_{n}]^\top \in \mathbb{R}^{dn}$ denote the desired configuration. Then, it is guaranteed that $\norm{p_i(t)-p_j(t)}>\zeta$ for all $t \geq 0$ and $i,j \in \mathcal{V}$ if 
$\norm{p_i^*-p_j^*}-\sqrt{n}\norm{p(0)-(\mathds{1}_n \otimes p^o)}-\sum_{l=1}^n\norm{p^o-p_l^*} >\zeta$ for $\zeta>0$.

\item \label{Lem:rank_preserv_d+1} If a framework $(\mathcal{G},\mathcal{A},p(0))$ with $n=d+1$ vertices is minimally GIWR in $\mathbb{R}^{d}$ and $C_p(0)$ is of full row rank, then $(\mathcal{G},\mathcal{A},p(t))$ is minimally GIWR in $\mathbb{R}^{d}$ for all $t\geq0$, i.e., $\rank\left(R_W(p(0))\right)=\rank\left(R_W(p(t))\right)$ for all $t\geq0$. 
\end{enumerate}
\end{lemma}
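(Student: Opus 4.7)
The plan is to handle the six claims in order, exploiting the null-space structure of $R_W$ provided by Lemma~\ref{lem_null_of_rigid matrix01} and the rank-preserving Lemma~\ref{Lem:rank_preserving}. Claims (i) and (ii) are immediate from the distributed expansion of the controller: each $\dot p_k$ is a sum of terms depending only on the relative positions $p_k - p_j$ with its neighbors and the cosines of the subtended angles, both of which are invariant under a change of global coordinate frame. For claim (iii), I would pre-multiply the control law by $\mathds{1}_n^\top \otimes I_d$: Lemma~\ref{lem_null_of_rigid matrix01} gives $R_W(\mathds{1}_n\otimes I_d) = 0$, hence $\sum_i \dot p_i = -(\mathds{1}_n^\top\otimes I_d)R_W^\top e = 0$ and $\dot p^o = 0$. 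When $\mathcal{E}=\emptyset$, the same lemma additionally yields $R_W p = 0$, so
\begin{equation*}
\tfrac{d}{dt}\sum_{i=1}^n \norm{p_i - p^o}^2 = 2(p-\mathds{1}_n\otimes p^o)^\top \dot p = -2e^\top R_W p + 2e^\top R_W(\mathds{1}_n\otimes p^o) = 0,
\end{equation*}
so $p^s$ is also invariant.

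For claim (iv), I would convert the controller to a matrix ODE. Writing $\dot p_i = -\sum_j [E(p)]_{ij}\, p_j$ and stacking the $p_i$'s as columns of $C_p$, the symmetry of $E$ yields $\dot C_p = -C_p E(p)$, which matches Lemma~\ref{Lem:rank_preserving} with $A(t)=0$ and $B(t) = -E(p(t))$; hence $\rank(C_p(t))$ is constant. The hyperplane statement is straightforward linear algebra: a linear hyperplane $\{x : n^\top x = 0\}$ with $n\neq 0$ contains every $p_i$ iff $n^\top C_p = 0$, iff $C_p$ fails to have full row rank.

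For claim (v), I would chain two triangle inequalities and then reduce to a Lyapunov bound:
\begin{equation*}
\norm{p_i(t)-p_j(t)} \ge \norm{p_i^*-p_j^*} - \sum_{l=1}^n\norm{p_l(t)-p_l^*} \ge \norm{p_i^*-p_j^*} - \sum_{l=1}^n\norm{p_l(t)-p^o} - \sum_{l=1}^n\norm{p^o-p_l^*},
\end{equation*}
and then bound $\sum_l \norm{p_l(t)-p^o} \le \sqrt n\, \norm{p(t)-\mathds{1}_n\otimes p^o}$ via Cauchy--Schwarz. This reduces the claim to $\norm{p(t)-\mathds{1}_n\otimes p^o} \le \norm{p(0)-\mathds{1}_n\otimes p^o}$, which is automatic when $\mathcal{E}=\emptyset$ by part (iii), and for $\mathcal{E}\ne\emptyset$ follows from the Lyapunov inequality $\dot V = -\norm{R_W^\top e}^2 \le 0$ with $V=\tfrac12 e^\top e$, keeping the configuration in a bounded region around the stationary centroid.

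For claim (vi), the setup is that minimally GIWR with $n=d+1$ forces $m+w = d(d+1)/2$, matching the maximal rank bound from Lemma~\ref{lem_null_of_rigid matrix02}, so $R_W(p(0))$ is of full row rank. My plan is to apply Lemma~\ref{Lem:rank_preserving} directly to $R_W(p(t))$: compute $\dot R_W$ via the chain rule, substitute the gradient dynamics $\dot p = -R_W^\top e$, and rearrange the resulting expression into the form $A(t)R_W + R_W B(t)$; combined with part (iv), which guarantees $C_p(t)$ remains of full row rank and hence that the rigid-motion basis stays $d(d+1)/2$-dimensional, this will yield $\rank(R_W(p(t))) = m+w$ for all $t\ge0$. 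The main obstacle is exactly this rank-preserving computation for $R_W$: since its rows mix squared distances with cosines of angles, the expression for $\dot R_W$ is structurally messier than in the familiar distance-only case, and massaging it into the $A R_W + R_W B$ form will require careful exploitation of the incidence-matrix factorization of $R_W$ used in the proof of Lemma~\ref{lem_null_of_rigid matrix01} together with the gradient structure $\dot p = -R_W^\top e$.
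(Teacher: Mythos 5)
Parts (i)--(iv) of your proposal follow essentially the paper's own route (incidence/null-space argument for the centroid and scale, the matrix ODE $\dot C_p=-C_pE^\top$ fed into Lemma~\ref{Lem:rank_preserving}, and elementary linear algebra for the hyperplane claim), and they are fine. For (v) the triangle-inequality chain also matches the paper, but your closing step for $\mathcal{E}\neq\emptyset$ does not work: the Lyapunov inequality $\dot V=-\norm{R_W^\top e}^2\le 0$ only gives monotonicity of $\norm{e}$, i.e.\ of the squared edge lengths and cosines, and a sublevel set of $V$ is not a ball of radius $\norm{p(0)-\mathds{1}_n\otimes p^o}$ around $\mathds{1}_n\otimes p^o$ in configuration space, so it does not yield $\norm{p(t)-\mathds{1}_n\otimes p^o}\le\norm{p(0)-\mathds{1}_n\otimes p^o}$. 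The paper instead asserts that $\norm{p(t)-\mathds{1}_n\otimes p^o}$ is \emph{exactly invariant}, by the same derivative computation as for $p^s$ in part (iii) (its time derivative is $-(R_W(p-\mathds{1}_n\otimes p^o))^\top e/\norm{p-\mathds{1}_n\otimes p^o}$, which vanishes when $p$ and $\mathds{1}_n\otimes p^o$ lie in $\vnull(R_W)$); that invariance argument, not a Lyapunov bound, is what closes the step.

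The serious gap is in (vi). You propose to apply Lemma~\ref{Lem:rank_preserving} directly to $R_W(p(t))$ by massaging $\dot R_W$ into the form $A(t)R_W+R_WB(t)$, and you explicitly acknowledge that this computation is the unresolved obstacle. There is no reason to expect such a factorization: $\dot R_W$ involves the Hessian of $F_W$ contracted with $\dot p=-R_W^\top e$ and is not, in general, a left/right multiple of $R_W$; moreover your plan makes no essential use of the hypothesis $n=d+1$, which is where the paper's proof actually lives. The paper argues by contradiction: if $R_W(p(t^*))$ loses row rank, some nonzero $\tau$ satisfies $\tau^\top\frac{\partial F_W}{\partial p_l}=0$ for every $l$; since a minimally GIWR framework with $n=d+1$ gives every vertex exactly $d$ neighbors, this forces the $d$ relative vectors $\set{{z'}_{lk}}_{k\in\mathcal{N}'_l}$ to be linearly dependent for some $l$, i.e.\ the \emph{square} matrix $E_l(t^*)=H_lC_p^\top(t^*)\in\mathbb{R}^{d\times d}$ to be singular. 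But part (iv) guarantees $C_p(t^*)$ keeps full row rank and that the ${z'}_{ij}$ do not all lie on a hyperplane, so $E_l(t^*)$ is nonsingular --- a contradiction. The combinatorial observation that $n=d+1$ makes each neighbor matrix $E_l$ square (so full row rank of $C_p$ forces invertibility) is the key idea your plan is missing; the rank-preserving lemma is applied only to $C_p$, never to $R_W$ itself.
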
 
\begin{proof} 
(i) This property is obvious from (\ref{control_law01}). 

\quad(ii) This property is proved in a similar way to Lemma 4 in Sun et al. (2016)\cite{sun2016finite}.
First, let us denote a measurement in a global coordinate system by $(\cdot)^g$.
Observe the fact that there exists a rotation matrix $Q_k \in \mathbb{R}^{d \times d}$ such that $p_j=Q_k p^g_j+v$, where $v$ denotes a translation vector. 
Then, we can express  \eqref{eq:distributed_system} in terms of the global coordinate system as follows
\begin{align}
\dot{p}^g_k 
=u^g_k
=Q_k^{-1}u_k
=&-2Q_k^{-1}\sum_{j\in\mathcal{N}^d_k}\left(\norm{z_{kj}}^2-\norm{z_{kj}^*}^2\right)^gQ_k z_{kj}^g
- Q_k^{-1}\sum_{i,j\in\mathcal{N}^a_k}\left( \cos{\theta_{ij}^{k}}- \cos{\left(\theta_{ij}^{k}\right)^*}\right)^g Q_k\left(\frac{\partial}{\partial p_k}\cos{\theta_{ij}^{k}}\right)^{g\top} \nonumber \\
&- Q_k^{-1}\underbrace{\sum_{j,k\in\mathcal{N}^a_i}\left( \cos{\theta_{jk}^{i}}- \cos{\left(\theta_{jk}^{i}\right)^*}\right)^g Q_k \left(\frac{\partial}{\partial p_k}\cos{\theta_{jk}^{i}}\right)^{g\top}}_{\text{for }\exists (i,j,k) \in \mathcal{A}} \nonumber \\
%
%
=&-2\sum_{j\in\mathcal{N}^d_k}\left(\norm{z_{kj}}^2-\norm{z_{kj}^*}^2\right)^g z_{kj}^g 
-\sum_{i,j\in\mathcal{N}^a_k}\left( \cos{\theta_{ij}^{k}}- \cos{\left(\theta_{ij}^{k}\right)^*}\right)^g \left(\frac{\partial}{\partial p_k}\cos{\theta_{ij}^{k}}\right)^{g\top} \nonumber \\
&- \underbrace{\sum_{j,k\in\mathcal{N}^a_i}\left( \cos{\theta_{jk}^{i}}- \cos{\left(\theta_{jk}^{i}\right)^*}\right)^g \left(\frac{\partial}{\partial p_k}\cos{\theta_{jk}^{i}}\right)^{g\top}}_{\text{for }\exists (i,j,k) \in \mathcal{A}},
\end{align}where we have used the fact that
\begin{align}
\frac{\partial}{\partial p_k} \cos(\theta^k_{ij}) 
=\frac{\partial}{\partial p_k} \frac{z^\top_{ki}}{\norm{z_{ki}}} \frac{z_{kj}}{\norm{z_{kj}}} 
&=\frac{z^\top_{kj}}{\norm{z_{kj}}} \frac{1}{\norm{z_{ki}}}\left( I_d - \frac{z_{ki}z^\top_{ki}}{\norm{z_{ki}}^2}\right) 
 +\frac{z^\top_{ki}}{\norm{z_{ki}}} \frac{1}{\norm{z_{kj}}}\left( I_d - \frac{z_{kj}z^\top_{kj}}{\norm{z_{kj}}^2}\right)  \nonumber \\
&=\frac{z^{g\top}_{kj}}{\norm{z^g_{kj}}}Q_k^{-1} \frac{1}{\norm{z^g_{ki}}}\left( I_d - Q_k \frac{z^g_{ki}}{\norm{z^g_{ki}}} \frac{z^{g\top}_{ki}}{\norm{z^g_{ki}}} Q_k^{-1} \right)
+\frac{z^{g\top}_{ki}}{\norm{z^g_{ki}}}Q_k^{-1} \frac{1}{\norm{z^g_{kj}}}\left( I_d - Q_k \frac{z^g_{kj}}{\norm{z^g_{kj}}} \frac{z^{g\top}_{kj}}{\norm{z^g_{kj}}} Q_k^{-1} \right)   \nonumber \\
&=\left(\frac{\partial}{\partial p_k}\cos{\theta_{ij}^{k}}\right)^g Q_k^{-1},
\end{align} and, in the same way, $\frac{\partial}{\partial p_k} \cos(\theta^i_{jk})= \left(\frac{\partial}{\partial p_k}\cos{\theta_{jk}^{i}}\right)^g Q_k^{-1}$ .
Thus, we conclude the statement.

\quad(iii) Since $p^o=\frac{1}{n}\sum_{i=1}^{n}p_i=\frac{1}{n}(\mathds{1}_n\otimes I_d)^\top p\in \mathbb{R}^d$, the following time derivative holds.
\begin{align}
\dot{p}^o=\frac{1}{n}(\mathds{1}_n\otimes I_d)^\top\dot{p}&=-\frac{1}{n}(\mathds{1}_n\otimes I_d)^\top R_{W}^\top(p) e(p) 
=-\frac{1}{n}\left(
\begin{bmatrix}
    \frac{\partial \mbf{D}}{\partial {z'}}\\
    \frac{\partial \mbf{A}}{\partial {z'}}
  \end{bmatrix}\bar{H'} (\mathds{1}_n\otimes I_d)
  \right)^\top e(p)
\end{align} 
Since $\vspan(\mathds{1}_n\otimes I_d) \subseteq \vnull (\bar{H'})\subseteq \vnull \left(R_{W}(p) \right)$, $R_{W}(p) (\mathds{1}_n\otimes I_d)=0$ and this implies that $\dot{p}^o=0$. Moreover, it also holds that $\dot{p}^o=0$ for the case of $\mathcal{E} = \emptyset$.

In the case of $\mathcal{E} = \emptyset$, there is no constraint for the scale of the given framework. Note that $p^s=\sqrt{\frac{1}{n}\sum_{i=1}^{n}\norm{p_i-p^o}^2}=\norm{p-\mathds{1}_n\otimes p^o}/\sqrt{n}$. With the fact that $\dot{p}^o=0$, we have
\begin{equation}
\dot{p}^s=\frac{1}{\sqrt{n}}\frac{(p-\mathds{1}_n\otimes p^o)^\top}{\norm{p-\mathds{1}_n\otimes p^o}}\dot{p}.
\end{equation}
It holds that $p^\top\dot{p}=-\left(R_{W}(p)p\right)^\top e(p)=0$ and $(\mathds{1}_n\otimes p^o)^\top\dot{p} =-\left(R_{W}(p)(\mathds{1}_n\otimes p^o)\right)^\top e(p) =0$ since $\vspan(p)\subseteq \vnull(R_{W})$ and $\vspan(\mathds{1}_n\otimes p^o)\subseteq \vnull(\bar{H'})\subseteq \vnull\left(R_{W}(p) \right)$. Therefore, $\dot{p}^s=0$. Hence, the statement is proved.

\quad(iv) \label{Lem:proof_rank_pre_cp}Since $\dot{p}(t)=-(E(p)\otimes I_d)p(t)$, the vector differential equation can be expressed as the following matrix differential equation.
\begin{equation}
\dot{C}_p(t) = - C_p(t)E^\top(p(t)) \in \mathbb{R}^{d\times n}. \label{rank-preserving}
\end{equation}
From Lemma \ref{Lem:rank_preserving}, the matrix differential equation (\ref{rank-preserving}) is rank-preserving for any finite time $t \geq 0$.

If $C_p$ is not of full row rank, then there exists a nontrivial solution $x$ such that $C^\top_p x=0$. This implies that $p_1^\top x=p_2^\top x=\cdots=p_n^\top x = 0$ and $(p_i^\top-p_j^\top) x=z_{ij}^\top x=0$ for all $i,j \in \mathcal{V}$ and $i \neq j$, which means that all of vectors $z_{ij}$ are orthogonal to the vector $x$ and further all of vectors $z_{ij}$ lie on a hyperplane. Hence, there exists a hyperplane containing all $p_i, \forall i \in \set{1,2,\cdots,n}$ if $C_p$ is not of full row rank. 

\quad(v) For any $i,j \in \mathcal{V}$ and $t\geq0$, we have the following equation
\begin{align}
\norm{p_i(t)-p_j(t)}  
&=\norm{\left(p_i(t)-p_i^*\right)-\left(p_j(t)-p_j^*\right)+\left(p_i^*-p_j^*\right)} \nonumber \\
&\geq \norm{p_i^*-p_j^*}-\norm{p_i(t)-p_i^*}-\norm{p_j(t)-p_j^*} \nonumber \\
&\geq \norm{p_i^*-p_j^*}-\sum_{l=1}^n\norm{p_l(t)-p_l^*},
\end{align}
where
\begin{align}
\norm{p_i^*-p_j^*}-\sum_{l=1}^n\norm{p_l(t)-p_l^*}
&= \norm{p_i^*-p_j^*}-\sum_{l=1}^n\norm{\left(p_l(t)-p^o\right) + \left(p^o-p_l^*\right)} \nonumber \\
&\geq \norm{p_i^*-p_j^*}-\sum_{l=1}^n\norm{p_l(t)-p^o}-\sum_{l=1}^n\norm{p^o-p_l^*} \nonumber \\
&\geq \norm{p_i^*-p_j^*}-\sqrt{n}\norm{p(t)-(\mathds{1}_n \otimes p^o)}-\sum_{l=1}^n\norm{p^o-p_l^*}. \label{eq:cor_sub02}
\end{align}
In the above inequality (\ref{eq:cor_sub02}), it holds that $\sqrt{n}\norm{p(t)-(\mathds{1}_n \otimes p^o)} \geq \sum_{l=1}^n\norm{p_l(t)-p^o}$
by using the following inequality for positive real numbers $x_1,\cdots,x_n$.
\begin{align}
\sqrt{\frac{x_1^2+\cdots+x_n^2}{n}} \geq \frac{x_1+\cdots+x_n}{n}.
\end{align}
Since $\norm{p(t)-(\mathds{1}_n \otimes p^o)}$ has the similar form to $p^s$ as given in the proof of Lemma \ref{Lem:grad_law_properties}-\eqref{Lem:cent_scale_inv}, the time derivative of $\norm{p(t)-(\mathds{1}_n \otimes p^o)}$ equals zero, and this follows that $\norm{p(t)-(\mathds{1}_n \otimes p^o)}$ is invariant for all $t \geq 0$. Here $p^o$ is also invariant. Thus, if $\norm{p_i^*-p_j^*}-\sqrt{n}\norm{p(0)-(\mathds{1}_n \otimes p^o)}-\sum_{l=1}^n\norm{p^o-p_l^*}$ is greater than $\zeta$ for $\zeta>0$ at $t=0$, then $\norm{p_i(t)-p_j(t)}$ is also greater than $\zeta$ for all $t \geq 0$.

\quad(vi) This proof is motivated by Theorem 4.4 in Jing et al. (2018)\cite{jing2018weak}, and follows the logic as shown in Table~\ref{table:logic_box01}. We can state $R_W(p(0))=\begin{bmatrix}\mbf{r}_1 & \mbf{r}_2 & \cdots & \mbf{r}_\sigma\end{bmatrix}^\top=\begin{bmatrix}\mbf{c}_1 & \mbf{c}_2 & \cdots & \mbf{c}_n\end{bmatrix}$, where $\mbf{r}_i \in \mathbb{R}^{dn}$, $i \in \set{1, \cdots, \sigma}$, $\mbf{c}_j \in \mathbb{R}^{\sigma \times d}$, $j \in \set{1, \cdots, n}$, and $\sigma=m+w$. 
We define a set $\mathcal{N'}_l$ of neighbors of vertex $l$ as $\mathcal{N'}_l = \set{i,j \in \mathcal{V} \given (l,i) \in \mathcal{E} \lor (l,i,j) \in \mathcal{A}}$.
If a framework $(\mathcal{G},\mathcal{A},p)$ with $n=d+1$ vertices is minimally GIWR, then each agent has exactly $d$ neighbors, i.e., $\card{\mathcal{N'}_l}=n-1=d$.

Let a framework $(\mathcal{G},\mathcal{A},p(0))$ with $n=d+1$ vertices be minimally GIWR, and let $C_p(0)$ be of full row rank. Suppose that the framework $(\mathcal{G},\mathcal{A},p(t^*))$ is not GIWR at specific time $t^*>0$. Then, $R_W(p(t^*))$ does not have full row rank, and further there exists a nonzero vector $\tau=\begin{bmatrix}\tau_1 & \tau_2 & \cdots & \tau_\sigma\end{bmatrix}^\top \in \mathbb{R}^\sigma$ such that $\tau^\top R_W(p(t^*))=\tau_1\mbf{r}^\top_1+\tau_2\mbf{r}^\top_2+\cdots+\tau_\sigma\mbf{r}^\top_\sigma=0$ (or equivalently $\tau_1\mbf{r}_1+\tau_2\mbf{r}_2+\cdots+\tau_\sigma\mbf{r}_\sigma=0$). Since $\tau^\top R_W(p(t^*)) = \tau^\top \begin{bmatrix}\mbf{c}_1 & \mbf{c}_2 & \cdots & \mbf{c}_n\end{bmatrix}=0$, $\tau^\top \mbf{c}_l=\tau^\top \frac{\partial F_W}{\partial p_l}=0$ for all $l\in\set{1, 2, \cdots, n}$. 
Note that each entry for the weak rigidity matrix $R_W$ is composed of inter-neighbor relative position vectors from a framework $(\mathcal{G},\mathcal{A},p)$. 
From the fact that $\frac{\partial F_W}{\partial p_l}$ consists of ${z'}_{lk}^\top(t^*)$ for $k\in\mathcal{N'}_l$ and $\tau^\top \mbf{c}_l=0$, 
there must exist at least one case from $l=1$ to $l=n$ such that ${z'}_{lk}^\top(t^*)$ for $k\in\mathcal{N'}_l$ are linearly dependent.

With $\card{\mathcal{N'}_l}=n-1=d$, we can denote an oriented incidence matrix $H_l$ associated with the vertex $l$ (for example, see Fig.~\ref{exam:inciden_l}), where $H_l \in \mathbb{R}^{d \times (d+1)}$ for all $l \in \set{1, \cdots, n}$. 
We define a matrix $E_l(t^*)$ composed of ${z'}_{lk}^\top(t^*)$ for $k\in\mathcal{N'}_l$ as $E_l(t^*)=H_l C^\top_p(t^*) \in \mathbb{R}^{d \times d}$. We can state $E_l(t^*)$ as $E_l(t^*)=\left[\cdots, {z'}_{lk}(t^*), \cdots \right]^\top$. Consider $E_l(t^*) x=0$ for any nontrivial $x \in \mathbb{R}^d$ and $l \in \set{1, \cdots, n}$, then either the equality $C^\top_p(t^*) x =0$ or the equality ${z'}_{ij}^\top x = 0, \forall i,j \in \mathcal{V'}$ holds. The equality ${z'}_{ij}^\top x = 0, \forall i,j \in \mathcal{V'}$ means that all of vectors ${z'}_{ij}$ are orthogonal to the vector $x$, and further all of vectors ${z'}_{ij}$ lie on a hyperplane. Thus, the equality ${z'}_{ij}^\top x = 0$ cannot hold as proved in Lemma \ref{Lem:grad_law_properties}-\eqref{Lem:grad_law_properties_Cp}. The equality $C^\top_p(t^*) x =0$ cannot also hold since $C_p(t^*)$ has the full row rank for all $t \geq 0$ as proved in Lemma \ref{Lem:grad_law_properties}-\eqref{Lem:grad_law_properties_Cp}. Hence, $\vnull\left(E_l(t)\right)=\emptyset$ and the rank of $E_l(t^*)$ equals $d$. However, 
there exist at least one case such that ${z'}_{lk}^\top(t^*)$ for $k\in\mathcal{N'}_l$ are linearly dependent, and this follows that $\rank\left(E_l(t^*)\right)<d$. This conflicts with $\rank\left(E_l(t^*)\right)=d$. Hence, we can conclude that $(\mathcal{G},\mathcal{A},p(t))$ is minimally GIWR for all $t\geq0$ if $(\mathcal{G},\mathcal{A},p(0))$ with $n=d+1$ vertices is minimally GIWR and $C_p(0)$ is of full row rank.
\end{proof}

\begin{assumption}
In formation control addressed in this paper, it is assumed that any two agents at the initial time are sufficiently far from each other to not make any collision between agents with Lemma~\ref{Lem:grad_law_properties}-\ref{Lem:col_avo}.
\end{assumption}

\begin{table}
 \begin{center}
\begin{tabular}{r l}
\hline\hline
Suppose 	& $(\mathcal{G},\mathcal{A},p(t^*))$ is not GIWR at specific time $t^*>0$.\\
Thus 	& $\rank(E_l(t^*))<d$, where \\ 
		& $E_l(t^*)=\left[\cdots, {z'}_{lk}(t^*), \cdots \right]^\top\in \mathbb{R}^{d \times d}$.  \\ 
Assumption 1& $(\mathcal{G},\mathcal{A},p(0))$ is minimally GIWR.  \\
Assumption 2& $C_p(0)$ is of full row rank. \\
Thus 	& $\rank(E_l(t^*))=d$. \\
Contradiction& Consequently, $(\mathcal{G},\mathcal{A},p(t^*))$ is GIWR at specific time \\ 
		&$t^*>0$. \\
\hline
\end{tabular}
\end{center} 
\caption{The logic for the proof of Lemma \ref{Lem:grad_law_properties}-\eqref{Lem:rank_preserv_d+1}.} \label{table:logic_box01}
\end{table}

\begin{figure}[]
\centering
\subfigure[Subgraph for $H_1$]{ 
\begin{tikzpicture}[scale=0.9]
\node[place] (node1) at (-1.2,0) [label=left:$1$] {};
\node[place] (node2) at (0,1) [label=above:$2$] {};
\node[place] (node3) at (0,-1) [label=below:$3$] {};
\node[place] (node4) at (1.2,0) [label=right:$4$] {};

\draw[lineUD] (node1)  -- (node2);
\draw[lineUD] (node1)  -- (node3);
\draw[dashed] (node2)  --  (node3);
\draw[dashed] (node2)  -- (node4);
\draw[dashed] (node3)  -- (node4);
\end{tikzpicture}
}\qquad\,
\subfigure[Subgraph for $H_2$]{ 
\begin{tikzpicture}[scale=0.9]
\node[place] (node1) at (-1.2,0) [label=left:$1$] {};
\node[place] (node2) at (0,1) [label=above:$2$] {};
\node[place] (node3) at (0,-1) [label=below:$3$] {};
\node[place] (node4) at (1.2,0) [label=right:$4$] {};

\draw[lineUD] (node1)  -- (node2);
\draw[dashed] (node1)  -- (node3);
\draw[lineUD] (node2)  --  (node3);
\draw[lineUD] (node2)  -- (node4);
\draw[dashed] (node3)  -- (node4);
\end{tikzpicture}
} \qquad\,
\subfigure[Subgraph for $H_3$]{ 
\begin{tikzpicture}[scale=0.9]
\node[place] (node1) at (-1.2,0) [label=left:$1$] {};
\node[place] (node2) at (0,1) [label=above:$2$] {};
\node[place] (node3) at (0,-1) [label=below:$3$] {};
\node[place] (node4) at (1.2,0) [label=right:$4$] {};

\draw[dashed] (node1)  -- (node2);
\draw[lineUD] (node1)  -- (node3);
\draw[lineUD] (node2)  --  (node3);
\draw[dashed] (node2)  -- (node4);
\draw[lineUD] (node3)  -- (node4);
\end{tikzpicture}
}\qquad\,
\subfigure[Subgraph for $H_4$]{ 
\begin{tikzpicture}[scale=0.9]
\node[place] (node1) at (-1.2,0) [label=left:$1$] {};
\node[place] (node2) at (0,1) [label=above:$2$] {};
\node[place] (node3) at (0,-1) [label=below:$3$] {};
\node[place] (node4) at (1.2,0) [label=right:$4$] {};

\draw[dashed] (node1)  -- (node2);
\draw[dashed] (node1)  -- (node3);
\draw[dashed] (node2)  --  (node3);
\draw[lineUD] (node2)  -- (node4);
\draw[lineUD] (node3)  -- (node4);
\end{tikzpicture}
}
\caption{Example of subgraphs for $H_l$ when $n=4$. The dashed lines indicate the removed edges. The graphs have the same vertex set but do not have the same edge set.} \label{exam:inciden_l}
\end{figure}
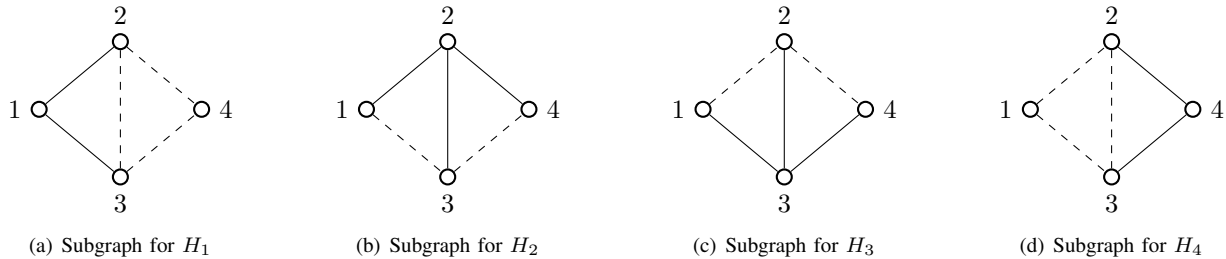

\subsection{Exponential stability of minimally GIWR formations with $n$ agents in $\mathbb{R}^d$}

We first explore the stability of minimally GIWR formations with $n$ agents in $\mathbb{R}^d$.  
In this subsection, we assume that the desired formation is minimally GIWR, which is relaxed in the next subsection.

\begin{theorem}\label{Thm:exp_min}
Suppose that the desired formation is minimally GIWR. If any initial formation is close to the desired formation, then the error system (\ref{error_dym}) has an exponentially stable equilibrium at the origin, and the initial formation locally exponentially converges to the desired formation shape.
\end{theorem}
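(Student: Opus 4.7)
The plan is to follow the standard gradient-flow Lyapunov argument, exploiting the fact that minimal GIWR forces $R_W$ to be full row rank at the desired configuration. First I would take $V(e)=\tfrac{1}{2}e^\top e$ as the Lyapunov function. Differentiating along the error dynamics \eqref{error_dym} gives $\dot V = -e^\top R_W(p)R_W^\top(p)\, e$. The whole argument then hinges on a positive lower bound for $R_W(p)R_W^\top(p)$ near the desired shape.

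Next I would invoke minimal GIWR of the desired configuration $p^*$: by definition $\rank(R_W(p^*)) = m+w$, so $R_W(p^*)$ has full row rank, and hence $M(p^*)\defeq R_W(p^*)R_W^\top(p^*)\in\mathbb{R}^{(m+w)\times(m+w)}$ is symmetric positive definite, with $\lambda_{\min}(M(p^*))>0$. Since $p\mapsto M(p)$ is continuous, there exists a neighborhood $\mathcal{U}$ of the shape-equivalence class of $p^*$ (i.e., of the orbit of $p^*$ under translations/rotations, and additionally scalings if $\mathcal{E}=\emptyset$) and a constant $\lambda_0>0$ such that $\lambda_{\min}(M(p))\ge \lambda_0$ for all $p\in\mathcal{U}$. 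On this neighborhood we obtain $\dot V \le -2\lambda_0 V$.

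The next step is to ensure trajectories stay in $\mathcal{U}$. I would use the standard sublevel-set argument: pick the initial formation sufficiently close to $p^*$ so that $V(e(0))$ is smaller than the level $c$ of the largest sublevel set of $V$ entirely mapped into $\mathcal{U}$ via the smooth error map $e(\cdot)$ (here I use that, by Proposition~\ref{Pro:rel_weak_infweak} applied at $p^*$, the map $p\mapsto e(p)$ is locally an immersion on the shape quotient, so small $\|e\|$ forces the shape to be close to that of $p^*$). Since $\dot V\le 0$ on $\mathcal{U}$, the sublevel set is forward invariant, $p(t)$ remains in $\mathcal{U}$ for all $t\ge 0$, and the differential inequality integrates to $\|e(t)\|\le \|e(0)\|e^{-\lambda_0 t}$, which is exponential stability of the origin of \eqref{error_dym}. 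Finally, because minimal GIWR (through Lemma~\ref{Lem:generic} and Proposition~\ref{Pro:rel_weak_infweak}) implies that $e(p)=0$ near $p^*$ characterizes exactly the strongly-equivalent (hence, by GWR, congruent or proportionally congruent) configurations, exponential decay of $e$ translates to exponential convergence of the formation shape $p(t)$ to the desired shape up to the trivial motions, the centroid (and the scale when $\mathcal{E}=\emptyset$) being fixed by Lemma~\ref{Lem:grad_law_properties}-\eqref{Lem:cent_scale_inv}.

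The main obstacle I anticipate is the second step above: cleanly transferring the exponential bound on $\|e\|$ into an exponential bound on the distance from $p(t)$ to the target shape orbit. Because the zero set of $e$ is an entire submanifold of trivial motions, not a single point, I would need the local immersion/submersion argument (equivalently, a local Lipschitz inverse of $e$ on the shape quotient, which exists exactly when $R_W$ has maximum rank) to conclude that $\|e(t)\|$ and the distance from $p(t)$ to that orbit are comparable in a small neighborhood. Once that equivalence is in hand, the rest is routine gradient-flow bookkeeping.
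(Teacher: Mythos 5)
Your proposal is correct and follows essentially the same route as the paper: the Lyapunov function $V(e)=\tfrac{1}{2}e^\top e$, positive definiteness of $R_W R_W^\top$ from the full row rank guaranteed by minimal GIWR, the bound $\dot V \le -2\lambda V$, and the Gronwall--Bellman inequality to conclude exponential decay. The only difference is that you spell out three points the paper treats implicitly (the uniform lower bound on $\lambda_{\min}(R_W R_W^\top)$ over a neighborhood via continuity, forward invariance of the sublevel set, and the passage from decay of $\norm{e}$ to convergence of the shape to the target orbit), which is a refinement of, not a departure from, the paper's argument.
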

\begin{proof}
We first define the potential function $V(e)$ as $V(e)=\frac{1}{2}e^\top e$ 
which is also the Lyapunov function candidate. We also define a sub-level set $\Psi$ as $\Psi=\set{e \given V(e) \leq \epsilon}$ for $\epsilon>0$ such that all formations in the set $\Psi$ are minimally GIWR close to the desired formation. 

With the equation (\ref{error_dym}), the derivative of $V(e)$ along a trajectory of ${e}$ is calculated as 
\begin{equation}
\dot{V}(e) = e^\top \dot{e} 
=-e^\top R_W(e) R_W^\top(e) e
= - \norm{R_W^\top(e) e}^2. \label{eq:dot_potential_fn}
\end{equation}
Since the formation in the set $\Psi$  is minimally GIWR, the weak rigidity matrix has the full row rank. Therefore, since $\rank\left(R_W(e) R_W^\top(e) \right) = \rank\left(R_W(e)\right) $, $R_W(e) R_W^\top(e) \in \mathbb{R}^{(m+w) \times (m+w)}$ is of full rank and $R_w(e) R_w^\top(e)$ is positive definite (all eigenvalues of $R_w(e) R_w^\top(e)$ are positive). Moreover, this implies
\begin{equation}
\dot{V}(e) \leq  -\lambda \norm{e}^2,  \label{eq:V_leq_fn01}
\end{equation}
where $\lambda$ denotes the minimum eigenvalue of $R_w(e) R_w^\top(e)$.
The inequality (\ref{eq:V_leq_fn01}) indicates that $\dot{V} < 0$ for $e \in \Psi \setminus \{0\}$. Thus, the origin of the error system (\ref{error_dym}) is asymptotically stable near the desired formation.
Also, since $V=\frac{1}{2}e^\top e$, the following inequality holds.
\begin{equation}
\dot{V}(e) \leq  -2\lambda V(e),  \label{eq:V_leq_fn02}
\end{equation}
and it follows that $V(e(t))\leq V(e(0))\text{exp}(-2\lambda t)$ by Gronwall-Bellman Inequality \cite[Lemma A.1]{khalil2002nonlinear}. Therefore, the error system (\ref{error_dym}) has an exponentially stable equilibrium at the origin, and the solution of (\ref{control_law01}) exists and is finite as $t \rightarrow \infty$. By the above result, the control law (\ref{control_law01}) guarantees that $p$ exponentially converges to a fixed point. The initial formation in the set $\Psi$ is close to the desired formation. Hence, the initial formation locally exponentially converges to the desired formation shape.
\end{proof}

\subsection{Stability on non-minimally GIWR formations with $n$ agents in $\mathbb{R}^d$} \label{Subsec:n_stability}
In this subsection, we explore the stability in the case of non-minimally GIWR formation systems with $n$ agents in $\mathbb{R}^d$. To this end, we make use of a linearization approach of perturbed systems motivated by \cite{sun2016exponential,mou2016undirected}. 

We denote a minimally GIWR sub-framework induced from $(\mathcal{G},\mathcal{A},p)$ by $(\bar{\mathcal{G}},\bar{\mathcal{A}},p)$, where $\bar{\mathcal{G}}=(\mathcal{V},\bar{\mathcal{E}})$. We also denote the remaining part of $(\mathcal{G},\mathcal{A},p)$ except $(\bar{\mathcal{G}},\bar{\mathcal{A}},p)$ by $(\tilde{\mathcal{G}},\tilde{\mathcal{A}},p)$, where $\tilde{\mathcal{G}}=(\mathcal{V},\tilde{\mathcal{E}})$, $ \tilde{\mathcal{E}} = \mathcal{E} \setminus \bar{\mathcal{E}}$ and $ \tilde{\mathcal{A}} = \mathcal{A} \setminus \bar{\mathcal{A}} $ (See an example in Fig.~\ref{Fig:non_min_ex}). Let $\sigma$ denote the sum of cardinalities of edges and angles, i.e., $\sigma=m+w$. Then, $\bar{\sigma}$ and $\tilde{\sigma}$ are defined as $\bar{\sigma}= \card{\bar{\mathcal{E}}}+\card{\bar{\mathcal{A}}}=\bar{m}+\bar{w}=dn-d(d+1)/2$ (or $dn-(d^2+d+2)/2$ when $\mathcal{E}=\emptyset$) and $\tilde{\sigma}= \card{\tilde{\mathcal{E}}}+\card{\tilde{\mathcal{A}}}=\tilde{m}+\tilde{w}=\sigma-\bar{\sigma}$, respectively.
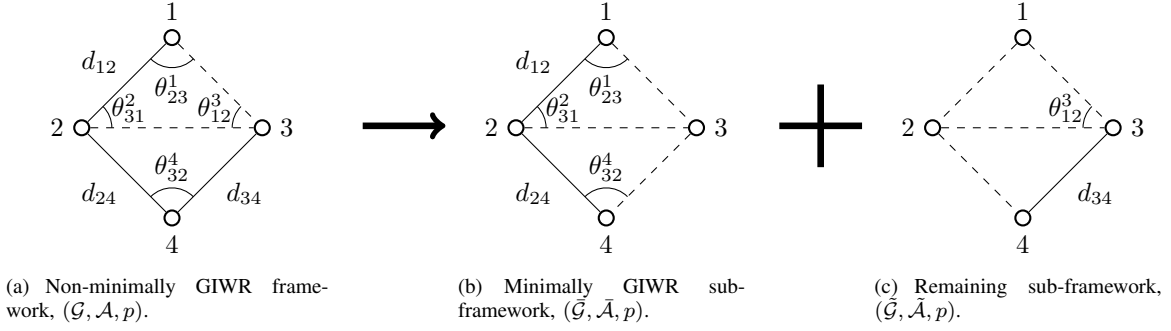
\begin{figure}[]
\centering
\subfigure[Non-minimally GIWR framework, $(\mathcal{G},\mathcal{A},p)$.]{\label{Fig:non_min_ex01}
\quad\begin{tikzpicture}[scale=0.6]
\node[place] (node1) at (0,2) [label=above:$1$] {};
\node[place] (node2) at (-2,0) [label=left:$2$] {};
\node[place] (node3) at (2,0) [label=right:$3$] {};
\node[place] (node4) at (0,-2) [label=below:$4$] {};

\draw[lineUD] (node1)  -- node [above left] {$d_{12}$} (node2);
\draw[dashed] (node1)  -- (node3);
\draw[dashed] (node2)  -- (node3);
\draw[lineUD] (node2)  -- node [below left] {$d_{24}$} (node4);
\draw[lineUD] (node3)  -- node [below right] {$d_{34}$} (node4);

\pic [draw, -, "${\theta}^1_{23}$", angle eccentricity=1.7, angle radius=0.4cm] {angle = node2--node1--node3};
\pic [draw, -, "${\theta}^2_{31}$", angle eccentricity=1.7, angle radius=0.4cm] {angle = node3--node2--node1};
\pic [draw, -, "${\theta}^3_{12}$", angle eccentricity=1.7, angle radius=0.4cm] {angle = node1--node3--node2};
\pic [draw, -, "${\theta}^4_{32}$", angle eccentricity=1.7, angle radius=0.4cm] {angle = node3--node4--node2};
\end{tikzpicture}\quad%
} 
{
\begin{tikzpicture}
\draw[draw=black, line width=2.5pt, ->] (0.2,1.8) -- (1.3,1.8){};
\draw[] (0,0){};
\end{tikzpicture}%
}  
\subfigure[Minimally GIWR sub-framework, $(\bar{\mathcal{G}},\bar{\mathcal{A}},p)$.]{\label{Fig:non_min_ex02}
\,\,\begin{tikzpicture}[scale=0.6]
\node[place] (node1) at (0,2) [label=above:$1$] {};
\node[place] (node2) at (-2,0) [label=left:$2$] {};
\node[place] (node3) at (2,0) [label=right:$3$] {};
\node[place] (node4) at (0,-2) [label=below:$4$] {};

\draw[lineUD] (node1)  -- node [above left] {$d_{12}$} (node2);
\draw[dashed] (node1)  -- (node3);
\draw[dashed] (node2)  -- (node3);
\draw[lineUD] (node2)  -- node [below left] {$d_{24}$} (node4);
\draw[dashed] (node3)  -- (node4);

\pic [draw, -, "${\theta}^1_{23}$", angle eccentricity=1.7, angle radius=0.4cm] {angle = node2--node1--node3};
\pic [draw, -, "${\theta}^2_{31}$", angle eccentricity=1.7, angle radius=0.4cm] {angle = node3--node2--node1};
\pic [draw, -, "${\theta}^4_{32}$", angle eccentricity=1.7, angle radius=0.4cm] {angle = node3--node4--node2};
\end{tikzpicture}\,\,%
}
{
\begin{tikzpicture}
\draw[draw=black, line width=2.5pt] (0.2,1.8) -- (1.3,1.8){};
\draw[draw=black, line width=2.5pt] (0.75,1.25) -- (0.75,2.35){};
\draw[] (0,0){};
\end{tikzpicture}%
}  
\subfigure[Remaining sub-framework, $(\tilde{\mathcal{G}},\tilde{\mathcal{A}},p)$.]{\label{Fig:non_min_ex03}
\,\,\begin{tikzpicture}[scale=0.6]
\node[place] (node1) at (0,2) [label=above:$1$] {};
\node[place] (node2) at (-2,0) [label=left:$2$] {};
\node[place] (node3) at (2,0) [label=right:$3$] {};
\node[place] (node4) at (0,-2) [label=below:$4$] {};

\draw[dashed] (node1)  -- (node2);
\draw[dashed] (node1)  -- (node3);
\draw[dashed] (node2)  -- (node3);
\draw[dashed] (node2)  -- (node4);
\draw[lineUD] (node3)  -- node [below right] {$d_{34}$} (node4);

\pic [draw, -, "${\theta}^3_{12}$", angle eccentricity=1.7, angle radius=0.4cm] {angle = node1--node3--node2};
\end{tikzpicture}\,\,%
} 
\caption{Example of framework decomposition of a non-minimally GIWR framework. The dashed lines indicate virtual edges which do not belong to $\mathcal{E}$, $\bar{\mathcal{E}}$ and $\tilde{\mathcal{E}}$. Distance and angle constraints are denoted by $d_{ij}, (i,j) \in \mathcal{E}$ and $\theta_{ij}^{k}, (k,i,j) \in \mathcal{A}$, respectively.} \label{Fig:non_min_ex} 
\end{figure}
Moreover, we denote the sub-vector $\bar{e}\in\mathbb{R}^{\bar{\sigma}}$ whose entries are those entries in $e$ corresponding to edges and angles in  $(\bar{\mathcal{G}},\bar{\mathcal{A}},p)$, and $\tilde{e}\in\mathbb{R}^{\tilde{\sigma}}$ whose entries are those entries in $e$ corresponding to edges and angles in  $(\tilde{\mathcal{G}},\tilde{\mathcal{A}},p)$.
We denote the permutation matrix $\mbf{P} = \begin{bmatrix}\bar{\mbf{P}}^\top & \tilde{\mbf{P}}^\top\end{bmatrix}$ such that 
$\begin{bmatrix} 
\bar{e}& \tilde{e}
\end{bmatrix}^\top=\mbf{P}^\top e$ or equivalently $\bar{e}=\bar{\mbf{P}}e$ and $\tilde{e}=\tilde{\mbf{P}}e$, where $\mbf{P}\in\mathbb{R}^{\sigma\times\sigma}$, $\bar{\mbf{P}}\in\mathbb{R}^{\bar{\sigma}\times\sigma}$ and $\tilde{\mbf{P}}\in\mathbb{R}^{\tilde{\sigma}\times\sigma}$. The permutation matrix has properties such that $\bar{\mbf{P}}\bar{\mbf{P}}^\top=I_{\bar{\sigma}\times\bar{\sigma}}$, $\tilde{\mbf{P}}\tilde{\mbf{P}}^\top=I_{\tilde{\sigma}\times\tilde{\sigma}}$, $\bar{\mbf{P}}\tilde{\mbf{P}}^\top=0_{\bar{\sigma}\times\tilde{\sigma}}$, $\bar{\mbf{P}}^\top\bar{\mbf{P}}+\tilde{\mbf{P}}^\top\tilde{\mbf{P}}=I_{\sigma \times \sigma}$ and $e=\bar{\mbf{P}}^\top\bar{e}+\tilde{\mbf{P}}^\top\tilde{e}$. We now show that $\tilde{e}$ is a function of $\bar{e}$ locally.
\begin{lemma}
\label{Lem:exist_smf}
Let a framework $(\mathcal{G},\mathcal{A},q)$ be the desired formation, and non-minimally GIWR.
Then, there (locally) exists a smooth function $f: \bar{e}(q) \rightarrow \mathbb{R}^{(\sigma-\bar{\sigma})}$ such that $\tilde{e}(q)=f(\bar{e}(q))$ close to $(\mathcal{G},\mathcal{A},q)$. Furthermore, it holds that $f(\bar{e})=0$ if and only if $\bar{e}=0$.
\end{lemma}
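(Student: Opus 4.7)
The plan is to apply the implicit function theorem to the map $\bar{F}_W$ of the minimally GIWR sub-framework at the desired configuration $q$, and then exploit congruence invariance of angles and distances to factor $\tilde{F}_W$ through $\bar{F}_W$. First, since $(\bar{\mathcal{G}},\bar{\mathcal{A}},q)$ is minimally GIWR, Theorem~\ref{Thm:Inf_Rank} gives $\rank(\bar{R}_W(q))=\bar{\sigma}$, so $\bar{R}_W(q)$ has full row rank and $\bar{F}_W:\chi\to\mathbb{R}^{\bar{\sigma}}$ is a submersion at $q$. By the constant-rank (submersion) theorem, there exist a neighborhood $U$ of $q$ and local coordinates on $U$ in which $\bar{F}_W$ becomes the projection onto the first $\bar{\sigma}$ coordinates; every level set $\bar{F}_W^{-1}(c)\cap U$ is a smooth $(dn-\bar{\sigma})$-dimensional submanifold of $U$.

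Second, I would show that $\tilde{F}_W$ is constant on each of these nearby level sets. By Proposition~\ref{Pro:rel_weak_infweak}, the sub-framework $(\bar{\mathcal{G}},\bar{\mathcal{A}},q)$ is GWR, so the level set through $q$ coincides, close to $q$, with the congruence class (or proportional congruence class when $\mathcal{E}=\emptyset$) of $q$. Every entry of $\tilde{F}_W$ is either a squared pairwise distance or a cosine of an inter-agent angle, and both are invariant under rigid-body motions (and scalings, when $\mathcal{E}=\emptyset$); hence each component of $\tilde{F}_W$ is constant along this class. The same reasoning applied at configurations close to $q$ shows the constancy holds on every nearby fiber, so in the chosen coordinates $\tilde{F}_W$ depends only on the first $\bar{\sigma}$ coordinates and factors through $\bar{F}_W$ as $\tilde{F}_W=h\circ\bar{F}_W$ on $U$ for some smooth $h$. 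Setting
\[
f(\bar{e})\defeq h\bigl(\bar{F}_W(q)+\bar{e}\bigr)-\tilde{F}_W(q),
\]
one obtains $\tilde{e}(p)=\tilde{F}_W(p)-\tilde{F}_W(q)=f(\bar{e}(p))$ for all $p\in U$ with the required smoothness.

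Finally, $f(0)=h(\bar{F}_W(q))-\tilde{F}_W(q)=0$ is immediate. For the converse in the biconditional, if $f(\bar{e})=0$ with $\bar{e}=\bar{e}(p)$ for some $p\in U$, then $e(p)=\bar{\mbf{P}}^\top\bar{e}+\tilde{\mbf{P}}^\top\cdot 0=\bar{\mbf{P}}^\top\bar{e}$; combining the identification $\{\bar{e}=0\}\leftrightarrow\{p\text{ congruent to }q\}$ from the minimally GIWR sub-framework with the GIWR property of the full framework $(\mathcal{G},\mathcal{A},q)$, one obtains that a nonzero $\bar{e}$ in a sufficiently small $U$ would force $p$ to leave the local congruence class of $q$, contradicting the vanishing of $\tilde{e}(p)$ via the congruence invariance of $\tilde{F}_W$ established above.

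The main obstacle I anticipate is step two: rigorously establishing that $\tilde{F}_W$ is constant not only on the distinguished level set through $q$ but on \emph{every} nearby fiber of $\bar{F}_W$. I plan to handle this by observing that the rigid-body (and scaling) group acts on the ambient space $\mathbb{R}^{dn}$ independently of the chosen framework, so the local product structure produced by the submersion theorem matches the orbit foliation of this group action; this matching, combined with the fact that both $\bar{F}_W$ and $\tilde{F}_W$ are invariant under the same group, yields the desired constancy of $\tilde{F}_W$ along all nearby fibers and hence the claimed factorization.
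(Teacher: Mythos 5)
Your construction of $f$ is correct, but it takes a genuinely different route from the paper's. The paper builds an explicit reduced-coordinate map $\varsigma(p)\in\mathbb{R}^{\bar{\sigma}}$ by rotating the relative vectors $z_{21},\dots,z_{n1}$ into a canonical frame (recovering $\norm{z_{21}}$ from the invariant scale when $\mathcal{E}=\emptyset$, via Lemma~\ref{Lem:poly_sub01}), writes $\bar{e}=\bar{f}_e(\varsigma(p))$ and $\tilde{e}=\tilde{f}_e(\varsigma(p))$, and applies the inverse function theorem to the square map $\bar{f}_e$ to obtain $f=\tilde{f}_e\circ\bar{f}_e^{-1}$. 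You instead apply the submersion theorem to $\bar{F}_W$ directly and show $\tilde{F}_W$ is constant on the nearby fibers because each fiber locally coincides with a (proportional) congruence class. For that you need two small additions you only gesture at: max rank of $\bar{R}_W$ is an open condition, so Lemma~\ref{Lem:generic} applies at \emph{every} configuration near $q$, not just at $q$; and an open-closed/connectedness argument upgrades ``$\tilde{F}_W$ is locally constant along each plaque'' to ``constant on the whole plaque in $U$.'' With those stated, your argument is sound, works uniformly for $d=2,3$ and for both $\mathcal{E}\neq\emptyset$ and $\mathcal{E}=\emptyset$, and avoids the hand-built chart; the paper's version is more elementary and its explicit $\varsigma$ is reused elsewhere.

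The genuine gap is your final paragraph. To prove $f(\bar{e})=0\Rightarrow\bar{e}=0$ you argue that $\bar{e}(p)\neq 0$ forces $p$ out of the congruence class of $q$ and that this ``contradicts'' $\tilde{e}(p)=0$ by congruence invariance of $\tilde{F}_W$. That is a non sequitur: invariance says $\tilde{F}_W$ is constant \emph{on} each congruence class, not that it separates distinct classes, so nothing prevents a non-congruent $p$ from satisfying $\tilde{e}(p)=0$. Indeed this direction of the biconditional is false in general. For the framework of Fig.~\ref{Fig:non_min_ex} one has $\tilde{e}=(e_{34},e^3_{12})$ and $\bar{e}$ containing $e_{12}$; sliding agent $1$ along the ray from agent $3$ through agent $1$ leaves $d_{34}$ and $\theta^3_{12}$ unchanged (so $\tilde{e}(p)=0$, i.e.\ $f(\bar{e}(p))=0$) while changing $d_{12}$ (so $\bar{e}(p)\neq 0$). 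Note the paper's own proof establishes only $f(0)=0$ and never proves the converse, and only $f(0)=0$ together with smoothness of $f$ is used in Theorem~\ref{Thm:stability_non_minimal}; so the omission is harmless downstream, but your closing argument claims more than it can deliver, and the claim itself cannot be repaired.
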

\begin{proof}
This proof is motivated by Proposition 1 in Mou et al. (2016)\cite{mou2016undirected}. 
\quad (i) For the 2-dimensional case,  we first
denote a rotation matrix $S(\mbf{x})$ such that $S(\mbf{x})=\frac{1}{\norm{\mbf{x}}}
\begin{bmatrix} 
x_2 &-x_1\\
x_1 &x_2
\end{bmatrix}$ for a nonzero vector $\mbf{x} = \begin{bmatrix}x_1 & x_2\end{bmatrix}^\top\in\mathbb{R}^{2}$. The equality $S(\mbf{x})\mbf{x}=\begin{bmatrix}0 & \norm{\mbf{x}}\end{bmatrix}^\top$ always holds. We denote a vector $\varsigma: p\rightarrow\mathbb{R}^{\bar{\sigma}}$ with $\bar{\sigma}=2n-3$ when $\mathcal{E} \neq \emptyset$ in $\mathbb{R}^2$ such as:
\begin{equation}
\varsigma(p)=
\begin{bmatrix}
\norm{ z_{21}  } & (S(z_{21}) z_{31}    )^\top &\cdots & ( S(z_{21})  z_{n1}    )^\top
\end{bmatrix}^\top.
\end{equation}
Since the rotation matrix does not change a magnitude of a vector, we see that $\norm{  z_{j1}    }^2=\norm{S(z_{21}) z_{j1} }^2$ and  $\norm{z_{ij}}^2=\norm{S(z_{21})  z_{i1}    -S(z_{21})   z_{j1}   }^2$, and further $\varsigma(p)$ includes all information on the relative vectors $z_{21},z_{31},\dots,z_{n1}$. 
Thus, any entry in $\tilde{e}$ is composed of entries in $\varsigma(p)$. Moreover, there exists a function $\tilde{f}_e: \mathbb{R}^{\bar{\sigma}} \rightarrow \mathbb{R}^{(\sigma-\bar{\sigma})}$ such that $\tilde{e}=\tilde{f}_e(\varsigma(p))$. Similarly, there exists a function $\bar{f}_e: \mathbb{R}^{\bar{\sigma}} \rightarrow\mathbb{R}^{\bar{\sigma}}$ such that $\bar{e}=\bar{f}_e(\varsigma(p))$.

In the same way, for the case of $\mathcal{E} = \emptyset$ in $\mathbb{R}^2$, we can define a vector $\varsigma: p\rightarrow\mathbb{R}^{\bar{\sigma}}$ with $\bar{\sigma}=2n-4$ such that
\begin{align}
\varsigma(p)
=\begin{bmatrix}
(S(z_{21})z_{31})^\top & (S(z_{21})z_{41})^\top & \cdots & (S(z_{21})z_{n1})^\top
\end{bmatrix}^\top.
\end{align}
Then, with the fact in Lemma \ref{Lem:poly_sub01} in Appendix, it is obvious that there exist $\tilde{e}=\tilde{f}_e(\varsigma(p))$ and $\bar{e}=\bar{f}_e(\varsigma(p))$.

The derivative of $\bar{e}$ at $q$, i.e., $\left.\frac{\partial \bar{e}(p)}{\partial p} \right|_{p=q}$ is the weak rigidity matrix of $(\bar{\mathcal{G}},\bar{\mathcal{A}},q)$. Then, $\rank \left(\left.\frac{\partial \bar{e}(p)}{\partial p} \right|_{p=q}\right)=\bar{\sigma}$ since $(\bar{\mathcal{G}},\bar{\mathcal{A}},q)$ is minimally GIWR. Thus, with $\left.\frac{\partial \bar{e}(p)}{\partial p} \right|_{p=q}=\left.\frac{\bar{f}_e(\varsigma(p))}{\partial \varsigma(p)} \frac{\partial \varsigma(p)}{\partial p} \right|_{p=q}$ from $\bar{e}=\bar{f}_e(\varsigma(p))$, it holds that $\rank \left(\left.\frac{\bar{f}_e(\varsigma(p))}{\partial \varsigma(p)} \right|_{p=q}\right)\geq\bar{\sigma}$ by the rank property.
Since $\left.\frac{\bar{f}_e(\varsigma(p))}{\partial \varsigma(p)} \right|_{p=q}$ is an $\bar{\sigma}\times\bar{\sigma}$ matrix, we can see that $\left.\frac{\bar{f}_e(\varsigma(p))}{\partial \varsigma(p)} \right|_{p=q}$ is of full rank and $\left.\frac{\bar{f}_e(\varsigma(p))}{\partial \varsigma(p)} \right|_{p=q}$ is nonsingular. Hence, from the inverse function theorem, there is an open set $\mathcal{W}\subset\mathbb{R}^{\bar{\sigma}}$ containing $\varsigma(q)$ such that $\bar{f}_e$ has a smooth inverse $\bar{f}_e^{-1}: \bar{f}_e(\mathcal{W}) \rightarrow \mathcal{W}$. Then, the following equality holds.
\begin{equation}
\bar{f}_e^{-1}(\bar{f}_e(\varsigma(p)))=\varsigma(p), \varsigma(p)\in\mathcal{W},
\end{equation}
which implies that $\bar{f}_e^{-1}(\bar{f}_e(\varsigma(p)))=\bar{f}_e^{-1}(\bar{e})=\varsigma(p)$.
Since $\tilde{e}=\tilde{f}_e(\varsigma(p))$, the equality $\tilde{e}=\tilde{f}_e(\bar{f}_e^{-1}(\bar{e}))$ holds. Therefore, we can say that there exists a smooth function $f: \bar{e}(q) \rightarrow \mathbb{R}^{(\sigma-\bar{\sigma})}$ such that $\tilde{e}(q)=f(\bar{e}(q))$ close to $(\mathcal{G},\mathcal{A},q)$.
In addition, since $\tilde{\mbf{P}}e=\tilde{e}=\tilde{f}_e\left(\bar{f}_e^{-1}(\bar{e})\right)=\tilde{f}_e\left(\bar{f}_e^{-1}\left(\bar{\mbf{P}}e\right)\right)=f\left(\bar{\mbf{P}}e\right)$ and $e=0$ at the desired formation $(\mathcal{G},\mathcal{A},q)$, it holds that $f(0)=0$.

\quad (ii)
For the 3-dimensional case, let us consider rotation matrices $S_{x_1}(\mbf{x})$ and $S_{x_2}(\mbf{x})$ rotating a vector $\mbf{x} = \begin{bmatrix}x_1 & x_2 & x_3\end{bmatrix}^\top\in\mathbb{R}^{3}$ about $x_1$ and $x_2$ axes into $x_1x_3$-plane and $x_1x_2$-plane, respectively, as follows
\begin{flalign} \label{eq_tttemp}
S_{x_1}(\mbf{x})=
	\begin{bmatrix} 
	1 & 0 & 0\\
	0 & \frac{x_3}{\sqrt{x_2^2+x_3^2}} &-\frac{x_2}{\sqrt{x_2^2+x_3^2}}\\
	0 & \frac{x_2}{\sqrt{x_2^2+x_3^2}} &\frac{x_3}{\sqrt{x_2^2+x_3^2}}
	\end{bmatrix},\,
S_{x_2}(\mbf{x})=
	\begin{bmatrix} 
	\frac{x_1}{\sqrt{x_1^2+x_3^2}} & 0 & \frac{x_3}{\sqrt{x_1^2+x_3^2}}\\
	0 & 1 &0\\
	-\frac{x_3}{\sqrt{x_1^2+x_3^2}} & 0 &\frac{x_1}{\sqrt{x_1^2+x_3^2}}
	\end{bmatrix}.
\end{flalign}
We then have
\begin{align}
S_{x_1}(z_{21})z_{21}&=\begin{bmatrix}z_{21}^{(1)} & 0 & \frac{\left(z_{21}^{(2)}\right)^2+\left(z_{21}^{(3)}\right)^2}{\sqrt{\left(z_{21}^{(2)}\right)^2+\left(z_{21}^{(3)}\right)^2}}\end{bmatrix}^\top =\bar{z}_{21}, \label{eq_ttemp01} \\
S_{x_2}(\bar{z}_{21}) \bar{z}_{21}&=\begin{bmatrix}\norm{z_{21}} & 0 & 0\end{bmatrix}^\top =\check{\bar{z}}_{21}, \label{eq_ttemp02}
\end{align}
where $z_{21}= \begin{bmatrix}z_{21}^{(1)} & z_{21}^{(2)} & z_{21}^{(3)}\end{bmatrix}^\top\in\mathbb{R}^{3}$.
We also have
\begin{align}
S_{x_1}(\check{\bar{z}}_{31}) \check{\bar{z}}_{21}&=\begin{bmatrix}\norm{z_{21}} & 0 & 0\end{bmatrix}^\top, \label{eq_ttemp03} \\
S_{x_1}(\check{\bar{z}}_{31}) \check{\bar{z}}_{31}&=\begin{bmatrix}\check{\bar{z}}_{31}^{(1)}  & 0 & \frac{\left(\check{\bar{z}}_{31}^{(2)} \right)^2+\left(\check{\bar{z}}_{31}^{(3)} \right)^2}{\sqrt{\left(\check{\bar{z}}_{31}^{(2)} \right)^2+\left(\check{\bar{z}}_{31}^{(3)} \right)^2}}\end{bmatrix}^\top, \label{eq_ttemp04}
\end{align} 
where $\check{\bar{z}}_{31}=S_{x_2}(\bar{z}_{21}) S_{x_1}(z_{21}) z_{31}=\begin{bmatrix}\check{\bar{z}}_{31}^{(1)} & \check{\bar{z}}_{31}^{(2)}  & \check{\bar{z}}_{31}^{(3)} \end{bmatrix}^\top\in\mathbb{R}^{3}$.
With the facts of \eqref{eq_ttemp03} and \eqref{eq_ttemp04},
we can denote a vector $\varsigma: p\rightarrow\mathbb{R}^{\bar{\sigma}}$ with $\bar{\sigma}=3n-6$ when $\mathcal{E} \neq \emptyset$ in $\mathbb{R}^3$ such that
\begin{align}
\varsigma(p)= 
\begin{bmatrix}
\norm{z_{21}}&
 \check{\bar{z}}_{31}^{(1)} &
\frac{\left(\check{\bar{z}}_{31}^{(2)} \right)^2+\left(\check{\bar{z}}_{31}^{(3)} \right)^2}{\sqrt{\left(\check{\bar{z}}_{31}^{(2)} \right)^2+\left(\check{\bar{z}}_{31}^{(3)} \right)^2}}  &
 \left(\bar{S} z_{41}\right)^\top&
 \cdots &
 \left(\bar{S} z_{n1}\right)^\top
\end{bmatrix}^\top,
\end{align}where $\bar{S}= S_{x_1}(\check{\bar{z}}_{31}) S_{x_2}(\bar{z}_{21}) S_{x_1}(z_{21})$. $\varsigma(p)$ includes all information on the relative vectors $z_{21},z_{31},\dots,z_{n1}$. Since the rotation matrices do not change a magnitude of a vector, any entry in $\tilde{e}$ is a function composed of entries in $\varsigma(p)$, and further there exists a function $\tilde{f}_e: \mathbb{R}^{\bar{\sigma}} \rightarrow \mathbb{R}^{(\sigma-\bar{\sigma})}$ such that $\tilde{e}=\tilde{f}_e(\varsigma(p))$. Moreover, there exists a function $\bar{f}_e: \mathbb{R}^{\bar{\sigma}} \rightarrow\mathbb{R}^{\bar{\sigma}}$ such that $\bar{e}=\bar{f}_e(\varsigma(p))$.
In the same manner, for the case of $\mathcal{E} = \emptyset$ in $\mathbb{R}^3$, we can define a vector $\varsigma: p\rightarrow\mathbb{R}^{\bar{\sigma}}$ with $\bar{\sigma}=3n-7$ such that
\begin{equation}
\varsigma(p)=
\begin{bmatrix}
 \check{\bar{z}}_{31}^{(1)} &
 \frac{\left(\check{\bar{z}}_{31}^{(2)} \right)^2+\left(\check{\bar{z}}_{31}^{(3)} \right)^2}{\sqrt{\left(\check{\bar{z}}_{31}^{(2)} \right)^2+\left(\check{\bar{z}}_{31}^{(3)} \right)^2}}  &
 \left(\bar{S} z_{41}\right)^\top&
 \cdots &
 \left(\bar{S} z_{n1}\right)^\top
\end{bmatrix}^\top.
\end{equation}Then, with the fact in Lemma \ref{Lem:poly_sub01} in Appendix, there exist $\tilde{e}=\tilde{f}_e(\varsigma(p))$ and $\bar{e}=\bar{f}_e(\varsigma(p))$.
The rest of this proof is proved in the same way as  the 2-dimensional case.
\end{proof}

We denote $\bar{R}_W \in \mathbb{R}^{\bar{\sigma}\times dn}$ as the weak rigidity matrix for the sub-framework $(\bar{\mathcal{G}},\bar{\mathcal{A}},p)$, and $\tilde{R}_W \in \mathbb{R}^{\tilde{\sigma}\times dn}$ as the weak rigidity matrix for the sub-framework $(\tilde{\mathcal{G}},\tilde{\mathcal{A}},p)$. Then, it holds that $\bar{R}_W=\bar{\mbf{P}}R_{W}$ and $\tilde{R}_W=\tilde{\mbf{P}}R_{W}$.
From the fact that $\bar{e}=\bar{\mbf{P}}e$ and $e=\bar{\mbf{P}}^\top\bar{e}+\tilde{\mbf{P}}^\top\tilde{e}$, we have
\begin{align}
\dot{\bar{e}}=\bar{\mbf{P}}\dot{e}=\bar{\mbf{P}}\frac{\partial e}{\partial p}\dot{p} & = -\bar{\mbf{P}}R_{W}R_{W}^\top e \nonumber \\
& = -\bar{\mbf{P}}R_{W}R_{W}^\top (\bar{\mbf{P}}^\top\bar{e}+\tilde{\mbf{P}}^\top\tilde{e}) \nonumber \\
& = -\bar{R}_W\bar{R}_W^\top\bar{e} -\bar{R}_W\tilde{R}_W^\top\tilde{e}. \label{perturb_eq}
\end{align}
From Lemma \ref{Lem:exist_smf}, the equality \eqref{perturb_eq} can be rewritten as
\begin{equation}
\dot{\bar{e}}=-\bar{R}_W\bar{R}_W^\top\bar{e} -\bar{R}_W\tilde{R}_W^\top f(\bar{e}), \label{perturb_eq02}
\end{equation}
which locally holds only around the desired formation. 
It also holds that $\tilde{R}_W=\frac{\partial \tilde{e}}{\partial p}=\frac{\partial \tilde{e}}{\partial \bar{e}}\frac{\partial \bar{e}}{\partial p}=\frac{\partial f(\bar{e})}{\partial \bar{e}}\frac{\partial \bar{e}}{\partial p}=\frac{\partial f(\bar{e})}{\partial \bar{e}}\bar{R}_W$.
Therefore, we can consider the error system \eqref{perturb_eq02} as a perturbed system. We then reach the following theorem.

\begin{theorem}\label{Thm:stability_non_minimal}
Under the gradient flow law \eqref{control_law01}, the perturbed error system \eqref{perturb_eq02}  for the non-minimally GIWR formation has an exponential stable equilibrium at the origin.
\end{theorem}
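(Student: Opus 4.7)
The plan is to apply Lyapunov's indirect method to the perturbed system \eqref{perturb_eq02}. Denote its right-hand side by $g(\bar{e}) \defeq -\bar{R}_W\bar{R}_W^\top\bar{e} - \bar{R}_W\tilde{R}_W^\top f(\bar{e})$, regarded as a smooth function of $\bar{e}$ via the local parametrisation $p=p(\bar{e})$ guaranteed by Lemma~\ref{Lem:exist_smf}. I first compute the Jacobian $Dg(0)$ at the equilibrium. Since $\bar{e}=0$ at the equilibrium and $f(0)=0$, every term produced by differentiating the matrix-valued maps $\bar{R}_W(\bar{e})$ and $\tilde{R}_W(\bar{e})$ is multiplied by the vanishing factor $\bar{e}$ or $f(\bar{e})$ and therefore drops out. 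Only the derivatives of the explicit factors $\bar{e}$ and $f(\bar{e})$ contribute, yielding
\[
Dg(0) = -\bar{R}_W(q)\bar{R}_W(q)^\top - \bar{R}_W(q)\tilde{R}_W(q)^\top\,\left.\frac{\partial f}{\partial \bar{e}}\right|_{\bar{e}=0}.
\]

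Next I use the identity $\tilde{R}_W = \frac{\partial f}{\partial \bar{e}}\bar{R}_W$ recorded immediately after \eqref{perturb_eq02}, which lets me rewrite $Dg(0) = -A\bigl(I + B_0^\top B_0\bigr)$, where $A \defeq \bar{R}_W(q)\bar{R}_W(q)^\top$ and $B_0 \defeq \left.\partial f/\partial \bar{e}\right|_{\bar{e}=0}$. Because the sub-framework $(\bar{\mathcal{G}},\bar{\mathcal{A}},q)$ is minimally GIWR, $\bar{R}_W(q)$ has full row rank and hence $A$ is symmetric positive definite; the factor $I+B_0^\top B_0$ is obviously symmetric positive definite as well. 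Although the product $A(I+B_0^\top B_0)$ is not symmetric in general, it is similar via $A^{1/2}$ to the symmetric positive definite matrix $A^{1/2}(I+B_0^\top B_0)A^{1/2}$, so it has a strictly positive real spectrum. Consequently $Dg(0)$ is Hurwitz.

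Lyapunov's indirect method (e.g.~\cite[Thm.~4.7]{khalil2002nonlinear}) then yields local exponential stability of the equilibrium $\bar{e}=0$ of \eqref{perturb_eq02}. Smoothness of $f$ together with $f(0)=0$ forces $\|\tilde{e}(t)\|=\|f(\bar{e}(t))\|$ to decay exponentially as well, so that $e=\bar{\mbf{P}}^\top\bar{e}+\tilde{\mbf{P}}^\top\tilde{e}$ also converges exponentially to zero near the desired formation. The main obstacle is the Jacobian computation itself: one must carefully bookkeep all four $\bar{e}$-dependencies present in $g(\bar{e})$ (the two matrix-valued maps $\bar{R}_W,\tilde{R}_W$ and the two explicit factors $\bar{e}$ and $f(\bar{e})$) and verify that the contributions from differentiating $\bar{R}_W(\bar{e})$ and $\tilde{R}_W(\bar{e})$ genuinely vanish at $\bar{e}=0$ because they are killed by $\bar{e}=0$ or $f(0)=0$. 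Once this reduction is in place, the remaining step is the standard linear-algebra fact that the product of two symmetric positive definite matrices has positive spectrum.
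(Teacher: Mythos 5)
Your proposal is correct and is essentially the argument the paper outsources: the paper's proof consists of the identity $\tilde{R}_W = F\bar{R}_W$ plus a pointer to Theorem 3 of Sun et al.\ \cite{sun2016exponential}, and the linearization you carry out --- Jacobian $-\bar{R}_W\bar{R}_W^\top\bigl(I + B_0^\top B_0\bigr)$ at the origin, similarity via $A^{1/2}$ to a symmetric positive definite matrix, hence Hurwitz --- is exactly the mechanism behind that cited result. Your choice of the indirect method over a direct estimate with $V=\tfrac{1}{2}\|\bar{e}\|^2$ is also the right one: the cross term $-\bar{e}^\top\bar{R}_W\tilde{R}_W^\top f(\bar{e})$ need not have a sign, so the naive Lyapunov computation from Theorem~\ref{Thm:exp_min} does not carry over.

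One point needs repair. Lemma~\ref{Lem:exist_smf} gives $\tilde{e}=f(\bar{e})$; it does \emph{not} give a local parametrisation $p=p(\bar{e})$, and no such parametrisation can exist since $\bar{e}$ is invariant under rigid motions of $p$. What makes \eqref{perturb_eq02} a well-posed autonomous ODE in $\bar{e}$ (and hence makes $Dg(0)$ meaningful) is the weaker but sufficient fact that the coefficient matrices $\bar{R}_W\bar{R}_W^\top$ and $\bar{R}_W\tilde{R}_W^\top$ are invariant under rigid motions (and, when $\mathcal{E}=\emptyset$, scalings) of $p$, while the GWR of the minimally GIWR sub-framework guarantees that, near $q$, the configuration is determined by $\bar{e}$ up to exactly such motions; together these make the coefficients locally well-defined smooth functions of $\bar{e}$. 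With that substitution your bookkeeping of the four $\bar{e}$-dependencies, and the conclusion that the terms from differentiating the matrix-valued maps are annihilated by $\bar{e}=0$ and $f(0)=0$, goes through as written.
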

\begin{proof}
Note that $\tilde{R}_W=\frac{\partial \tilde{e}}{\partial p}=\frac{\partial f}{\partial \bar{e}}\frac{\partial \bar{e}}{\partial p}=F\bar{R}_W$, where $F = \frac{\partial f}{\partial \bar{e}}$
$g_k(e_k)$. We define a neighborhood set $\Psi$ around $\bar{e}=0$ as $\Psi=\set{\bar{e}\in\mathbb{R}^{\bar{\sigma}}\given \norm{\bar{e}}^2<\epsilon}$ for $\epsilon>0$. Then, the remainder of this proof is similar to Theorem 3 in Sun et al. (2016)\cite{sun2016exponential}. 
\end{proof}
\section{Application to formation control: almost global convergence of $3$-agent formations in $\mathbb{R}^2$}
\label{Sec:Formation_control_almost}
This section aims to provide analysis for almost global stability on special cases of minimally GIWR $3$-agent formations in $\mathbb{R}^2$.
In this section, we also use the control system \eqref{control_law01} as discussed in the subsection \ref{subsec:eq_motion}.
We first classify all equilibrium points to explore the stability of the system \eqref{control_law01}
with a set $\mathcal{P}$ composed of all equilibrium points defined as $\mathcal{P} = \set{p \in \mathbb{R}^{2n} \given R_{W}^\top e=0} $ as follows.
\begin{align}
\mathcal{P}^* &= \set{p \in \mathbb{R}^{2n} \given e=0}, \\
\mathcal{P}_i &= \set{p \in \mathbb{R}^{2n} \given R_{W}^\top e=0, e\neq0},
\end{align} 
where $\mathcal{P}^*$ and $\mathcal{P}_i$ denote the sets for desired equilibria and incorrect equilibria, respectively.
Both of $\mathcal{P}^*$ and $\mathcal{P}_i$ constitute the set of all equilibria, i.e., $\mathcal{P}=\mathcal{P}^* \cup \mathcal{P}_i$.
An equilibrium point $\bar{p}=[\bar{p}^\top_{1},...,\bar{p}^\top_{n}]^\top \in \mathbb{R}^{2n}$ is called \myemph{incorrect equilibrium point} if $\bar{p}$ belongs to $\mathcal{P}_i$.

\subsection{Analysis of the incorrect equilibria} \label{Subsec:analysis_incorrect}
We show in this subsection that the system (\ref{control_law01}) at any incorrect equilibrium point $\bar{p}$ is unstable. We first explore what cases occur at the incorrect equilibria.

\begin{lemma}\label{incorrect_collinear}
In the case of three-agent formations, incorrect equilibria take place only when the three agents are collinear.
\end{lemma}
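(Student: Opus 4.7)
The plan is to prove this by contrapositive: I would show that if the three agents at $\bar p$ are \emph{not} collinear, then $\bar p$ cannot lie in $\mathcal{P}_i$. Recall that for $n=3$ and $d=2$ the minimally GIWR hypothesis on the constraint set fixes $\bar\sigma=2n-3=3$ when $\mathcal{E}\neq\emptyset$ and $\bar\sigma=2n-4=2$ when $\mathcal{E}=\emptyset$, and $R_W(\bar p)$ has exactly this many rows. Thus the incorrect-equilibrium condition $R_W^\top(\bar p)\,e(\bar p)=0$ with $e(\bar p)\neq 0$ is equivalent to the existence of a nonzero vector in $\vnull(R_W^\top(\bar p))$, which in turn forces $\rank(R_W(\bar p))<\bar\sigma$. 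By Theorem~\ref{Thm:Inf_Rank}, this is the same as saying that $(\mathcal{G},\mathcal{A},\bar p)$ fails to be GIWR.

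The core step is therefore to show that three non-collinear points in $\mathbb{R}^2$ always give $\rank(R_W(\bar p))=\bar\sigma$ for each of the minimally GIWR constraint patterns available at $n=3$, namely three distances; two distances and one angle; one distance and two angles; or two angles when $\mathcal{E}=\emptyset$. I would proceed in two equivalent ways. The first is direct: write the rows of $R_W(\bar p)$ out in terms of $z_{ij}$ and the angle-gradients from Section~\ref{Sec:Weak_rigidity} and compute the relevant $\bar\sigma\times\bar\sigma$ minor. For the distance-only rows this gives the standard $\pm 2\,\mathrm{Area}(\triangle \bar p_1\bar p_2\bar p_3)$, which is nonzero exactly off the collinear locus; for the angle rows the projectors $P_{z_{ki}}$ degenerate precisely when $z_{ki}$ and $z_{kj}$ become parallel, i.e.\ again only in the collinear case. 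The second, cleaner route is to invoke Proposition~\ref{Pro:generic_weak} with the desired minimally GIWR formation playing the role of a regular point of $F_W$: since non-collinearity of $\bar p_1,\bar p_2,\bar p_3$ says that the affine span of $\bar p$ is $\mathbb{R}^2$, Lemma~\ref{Lem:generic} combined with Proposition~\ref{Pro:generic_weak} identifies $\bar p$ as a regular point also, hence $\rank(R_W(\bar p))=\bar\sigma$ and $(\mathcal{G},\mathcal{A},\bar p)$ is GIWR. Combining this with the reduction of the first paragraph yields a contradiction with $e(\bar p)\neq 0$, which is the desired contrapositive.

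The main obstacle will be the verification in the mixed distance/angle patterns. While the distance-only $3\times 3$ minor is classical, the angle-constraint rows carry denominators $\norm{z_{ki}}\norm{z_{kj}}$ and projector-type vectors, so care is needed to confirm that the cross-terms of the resulting minor vanish only on the collinear locus and not on some additional hypersurface. Using the genericity route bypasses this computation but shifts the burden to ensuring that the hypotheses of Proposition~\ref{Pro:generic_weak} apply to the specific (three-agent, minimally GIWR) desired formation; this is straightforward because a minimally GIWR formation is GIWR by definition and hence a regular point of $F_W$ by Proposition~\ref{Pro:rel_weak_infweak}. Once either route is carried out, the converse direction---that collinear equilibria do occur as incorrect equilibria---is immediate, since collinearity collapses the column span of $R_W$ onto the common line of the three agents and produces a nonzero $e\in\vnull(R_W^\top(\bar p))$.
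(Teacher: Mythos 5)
Your overall strategy (contrapositive: non-collinear $\Rightarrow$ $R_W(\bar p)$ has full row rank $\bar\sigma$ $\Rightarrow$ $\vnull(R_W^\top(\bar p))=\{0\}$ $\Rightarrow$ $e(\bar p)=0$, contradicting $\bar p\in\mathcal{P}_i$) is logically sound and genuinely different from the paper's, which instead reads off the $l$-th block of the equilibrium equation $R_W^\top e=0$ directly: since $\frac{\partial F_W}{\partial p_l}$ is built only from the two vectors ${z'}_{lk_1},{z'}_{lk_2}$ of the two neighbors of $l$, that block is a vanishing nontrivial linear combination of ${z'}_{lk_1}$ and ${z'}_{lk_2}$, forcing the two relative positions to be parallel and hence the three agents collinear. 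The problem is that you never actually establish your key step. The ``cleaner'' genericity route is circular: a regular point of $F_W$ is \emph{defined} as one where $\rank\bigl(\frac{\partial F_W}{\partial p}\bigr)$ attains its maximum, so asserting that non-collinearity of $\bar p_1,\bar p_2,\bar p_3$ ``identifies $\bar p$ as a regular point'' via Lemma~\ref{Lem:generic} and Proposition~\ref{Pro:generic_weak} assumes exactly the rank statement you are trying to prove; those results all take regularity as a hypothesis and never derive it from the affine span being $\mathbb{R}^2$. Nothing cited rules out, a priori, a non-collinear singular point of $F_W$ for a mixed distance/angle pattern.

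That leaves your first route, the explicit $\bar\sigma\times\bar\sigma$ minor computation for each constraint pattern containing angle rows, which you yourself flag as ``the main obstacle'' and do not carry out; until it is done the proof is incomplete. For what it is worth, the computation does go through: e.g.\ for the pattern of Fig.~\ref{Fig:diff_figs_a}, the $p_2$-block of a row dependency $\tau_1\mbf{r}_1+\tau_2\mbf{r}_2+\tau_3\mbf{r}_3=0$ forces $\tau_1 z_{12}$ to cancel against a multiple of $P_{z_{12}}z_{13}$, which off the collinear locus is a nonzero vector orthogonal to $z_{12}$; hence $\tau_1=\tau_3=0$, and then $\tau_2=0$ from the $p_3$-block. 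The other patterns (one distance with two angles, and two angles with $\mathcal{E}=\emptyset$) are similar but must each be checked. The paper's blockwise argument avoids this case analysis at the matrix-rank level by working with the specific vector $e$ rather than with an arbitrary row dependency, which is why it is shorter.
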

\begin{proof}
From the viewpoint of a minimally GIWR formation composed of three agents, there are only three formation cases: the first one is a formation with one angle constraint and two distance constraints; the second one is that with two angle constraints and one distance constraint; the third one is that with only two angle constraints. Each example for the three cases is illustrated in Fig.~\ref{Fig:diff_figs_a}, Fig.~\ref{Fig:diff_figs_b} and Fig.~\ref{Fig:diff_figs_c}, respectively.

Let $\mathcal{N'}_l$ denote a set of neighbors of vertex $l$ by $\mathcal{N'}_l = \set{i,j \in \mathcal{V} \given (l,i) \in \mathcal{E} \lor (l,i,j) \in \mathcal{A}}$. If a framework $(\mathcal{G},\mathcal{A},p)$ with $n=3$ vertices is minimally GIWR, then each agent has exactly two neighbors, i.e., $\card{\mathcal{N'}_l}=2$. 
In the weak rigidity matrix $R_W$, all elements are composed of inter-neighbor relative position vectors, i.e., $\frac{\partial F_W}{\partial p_l}$ consists of ${z'}_{lk_1}^\top$ and ${z'}_{lk_2}^\top$ for $k_1,k_2\in\mathcal{N'}_l$. 
Thus, at the incorrect equilibria, the following form holds:
\begin{align}
{z'}_{lk_1}^\top= c_l {z'}_{lk_2}^\top, \, k_1,k_2\in\mathcal{N'}_l,
\end{align}
where $c_l \in\mathbb{R}$ is a coefficient.
This implies that incorrect equilibria take place only when the three agents are collinear for 3-agent formations in $\mathbb{R}^2$. 

We next show an example with a formation in Fig.~\ref{Fig:diff_figs_a}.
For the case of the formation with one angle constraint and two distance constraints as shown in Fig.~\ref{Fig:diff_figs_a}, the equation (\ref{control_law01}) can be written as
\begin{subequations}
\begin{align}
\dot{p}_1 &= -2z_{12}e_{12}-2z_{13}e_{13}-\alpha^\top e^1_{23}, \label{collinear_01} \\ 
\dot{p}_2 &= 2z_{12}e_{12}-\beta^\top e^1_{23}, \label{collinear_02} \\ 
\dot{p}_3 &= 2z_{13}e_{13}-\gamma^\top e^1_{23}, \label{collinear_03}
\end{align}
\end{subequations} where $e_{ij}=\norm{z_{g_{ij}}}^2-\norm{z_{g_{ij}}^*}^2$, $(i,j) \in \mathcal{E}$, $e^1_{23}=A_{h_{123}}-A_{h_{123}}^*$, $\alpha=\frac{\partial}{\partial p_1}\cos\theta^1_{23}$, $\beta=\frac{\partial}{\partial p_2}\cos\theta^1_{23}$ and $\gamma=\frac{\partial}{\partial p_3}\cos\theta^1_{23}$. 
In the incorrect equilibrium set $\mathcal{P}_i$, the equation (\ref{collinear_03}) is calculated as
\begin{equation}
z_{12}=\left.\left(\frac{\norm{z_{12}}}{\norm{z_{13}}}\cos\theta^1_{23}-2\norm{z_{12}}\norm{z_{13}}\frac{e_{13}}{e^1_{23}} \right) z_{13} \right|_{p \in \mathcal{P}_i} \label{Eq:collinear_04}
\end{equation}
It follows from \eqref{Eq:collinear_04} that $p_1$, $p_2$ and $p_3$ must be collinear.
The equations (\ref{collinear_01}) and (\ref{collinear_02}) also give us similar results. Therefore, the three agents must be collinear.
The formation shape of the three agents falls into one of three cases as depicted in Fig.~\ref{Fig:formation_forms}.
Two cases illustrated in Fig.~\ref{Fig:diff_figs_b} and Fig.~\ref{Fig:diff_figs_c} also give us similar results to the case of Fig.~\ref{Fig:diff_figs_a}.
\end{proof}

\begin{figure}[]
\centering
\subfigure[]{ 
\begin{tikzpicture}[scale=1]
\node[place] (node2) at (-1,0) [label=above:$2$] {};
\node[place] (node1) at (0,0) [label=above:$1$] {};
\node[place] (node3) at (1,0) [label=above:$3$] {};

\draw[lineUD] (node1)  -- node {}(node2);
\draw[lineUD] (node1)  -- node {}(node3);
\end{tikzpicture}
} \qquad \qquad
\subfigure[]{ 
\begin{tikzpicture}[scale=1]
\node[place] (node1) at (-1,0) [label=above:$1$] {};
\node[place] (node2) at (0,0) [label=above:$2$] {};
\node[place] (node3) at (1,0) [label=above:$3$] {};

\draw[lineUD] (node1)  -- node {}(node2);
\draw[lineUD] (node2)  -- node {}(node3);
\end{tikzpicture}
}\qquad \qquad
\subfigure[]{ 
\begin{tikzpicture}[scale=1]
\node[place] (node1) at (-1,0) [label=above:$1$] {};
\node[place] (node3) at (0,0) [label=above:$3$] {};
\node[place] (node2) at (1,0) [label=above:$2$] {};

\draw[lineUD] (node1)  -- node {}(node3);
\draw[lineUD] (node2)  -- node {}(node3);
\end{tikzpicture}
} 
\caption{Three formation forms which can occur at the incorrect equilibria.} \label{Fig:formation_forms}
\end{figure}
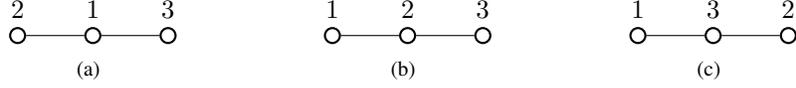

Next, to analyze the stability at the incorrect equilibria, we linearize the system (\ref{control_law01}). One can observe the following negative Jacobian $J(p)$ of the system (\ref{control_law01}) with respect to $p$:
\begin{align}
J(p)=-\frac{\partial}{\partial p}\dot{p} 
=R_{W}(p)^\top R_{W}(p)+E(p)\otimes I_2
+ \sum_{(k,i,j) \in \mathcal{A}} e_{A_h}\bigg( (I_3\otimes p_1) \frac{\partial}{\partial p} C_1  + (I_3\otimes p_2) \frac{\partial}{\partial p} C_2
+ (I_3\otimes p_3) \frac{\partial}{\partial p} C_3\bigg), \label{negative_jaco}
\end{align}
where $p=\begin{bmatrix}p_{1}^\top & p_{2}^\top & p_{3}^\top\end{bmatrix}^\top \in \mathbb{R}^{6}$, $e_{A_h}=A_{h_{kij}}-A_{h_{kij}}^*$, and $C_l \in \mathbb{R}^3$ for $l \in \{1,2,3\}$ denotes a vector composed of entries of $l$-th column associated with $e_{A_h}$ in $E(p)$ (See an example (17) in Kwon et al. (2018)\cite{kwon2018infinitesimal}).
If $J(p)$ has at least one negative eigenvalue at the incorrect equilibrium point $\bar{p}$, then the system at $\bar{p}$  is unstable.
In order to check this fact, we first reorder columns of $J(p)$, which does not have an effect on any eigenvalue of $J(p)$.
We make use of a permutation matrix $T$ which reorders columns of matrix such that
\begin{align}
R_WT&=\begin{bmatrix}
R_x & R_y
\end{bmatrix}=\bar{R}, \nonumber \\
P_lT&=\begin{bmatrix}
P_{lx} & P_{ly}
\end{bmatrix}=\bar{P}_l, \nonumber \\
\frac{\partial}{\partial p} C_lT&=\begin{bmatrix}
C_{lx} & C_{ly}
\end{bmatrix}=\bar{C}_l, \label{Permu_reorder}
\end{align}
where $P_l= (I_3\otimes p_l^\top)\in \mathbb{R}^{3 \times 6}$ for $l \in \{1,2,3\}$.
In \eqref{Permu_reorder}, 
$R_u \in \mathbb{R}^{\sigma\times 3}$, $P_{lu} \in \mathbb{R}^{3 \times 3}$ and $C_{lu} \in \mathbb{R}^{3 \times 3}$ for $u = x,y$ denote matrices whose columns are composed of the columns of coordinate $u$ in the matrix $R_W$, $P_l$ and $\frac{\partial}{\partial p}C_l$, respectively. The formation is minimally GIWR, thus $\sigma=3$. It is remarkable that $TT^\top=I$ holds since $T$ is a permutation matrix. 
With the permutation matrix $T$, the permutated matrix $\bar{J}(p)$ is given by
\begin{align}
\bar{J}(p) = T^\top J(p) T   
=& \bar{R}^\top \bar{R} + I_2\otimes E(p)
+ \sum_{(k,i,j) \in \mathcal{A}} \left( \bar{P}_1^\top \bar{C}_1+\bar{P}_2^\top \bar{C}_2+\bar{P}_3^\top \bar{C}_3 \right)e_{A_h} \nonumber \\
=& \begin{bmatrix}
\bar{J}_{11} & \bar{J}_{12} \\
\bar{J}_{21} & \bar{J}_{22}
\end{bmatrix},
\end{align}
where
\begin{align}
\bar{J}_{11} =& R_x^\top R_x+E(p) +\sum_{(k,i,j) \in \mathcal{A}} (P_{1x}C_{1x}+P_{2x}C_{2x}
+P_{3x}C_{3x})e_{A_h}, \nonumber \\
\bar{J}_{12} =& R_x^\top R_y+\sum_{(k,i,j) \in \mathcal{A}} \left(P_{1x}C_{1y}+P_{2x}C_{2y}+P_{3x}C_{3y}\right)e_{A_h}, \nonumber \\
\bar{J}_{21} =& R_y^\top R_x+\sum_{(k,i,j) \in \mathcal{A}} \left(P_{1y}C_{1x}+P_{2y}C_{2x}+P_{3y}C_{3x}\right)e_{A_h}, \nonumber \\
\bar{J}_{22} =& R_y^\top R_y+E(p)+\sum_{(k,i,j) \in \mathcal{A}} (P_{1y}C_{1y}+P_{2y}C_{2y}
+P_{3y}C_{3y})e_{A_h}. \nonumber
\end{align}
Note that the stability of an equilibrium point is independent of a rigid-body translation, a rigid-body rotation and a scaling of an entire framework since relative distances and subtended angles only matter. Therefore, without loss of generality, we suppose that $\bar{p}$ lies on the x-axis since they are collinear. Then, we have $R_y=0$, $P_{1y}=0$, $C_{1y}=0$, $P_{2y}=0$, $C_{2y}=0$, $P_{3y}=0$ and $C_{3y}=0$. Then, $\bar{J}(\bar{p})$ is of the form 
\begin{align}
\bar{J}(\bar{p}) = \begin{bmatrix}
\bar{J}_{11}(\bar{p}) & 0 \\
0 & E(\bar{p})
\end{bmatrix}. \label{J_incor}
\end{align}
The following results show that the system (\ref{control_law01}) at $\bar{p}$ is unstable.

\begin{lemma}\label{incorrect_negative_eigen}
Let $\bar{p}$ be in the incorrect equilibrium set $\mathcal{P}_i$. Then, $E(\bar{p})$ has at least one negative eigenvalue.
\end{lemma}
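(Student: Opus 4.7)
The plan is to show that $E(\bar p)$ has rank at most one and strictly negative trace, so its sole nonzero eigenvalue is negative. By Lemma~\ref{incorrect_collinear} the three agents at $\bar p$ are collinear; place them on the $x$-axis without loss of generality. As observed in \eqref{J_incor}, the permuted Jacobian at such $\bar p$ is block diagonal with lower block $\bar J_{22}(\bar p)=E(\bar p)$, so it suffices to exhibit a negative eigenvalue of this $3\times 3$ symmetric matrix.

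Two null vectors of $E(\bar p)$ are immediate. Centroid invariance (Lemma~\ref{Lem:grad_law_properties}-\eqref{Lem:cent_scale_inv}) applied to $\dot p=-(E(p)\otimes I_2)p$ forces the columns of $E(p)$ to sum to zero for every $p$, so $E(\bar p)\mathds{1}_3=0$ by symmetry. The equilibrium identity $(E(\bar p)\otimes I_2)\bar p=0$ is equivalent to $E(\bar p)P=0$ with $P=[p_1\ p_2\ p_3]^\top\in\mathbb{R}^{3\times 2}$; since $\bar p$ lies on the $x$-axis, the only nonzero column of $P$ is $\mathbf x=(x_1,x_2,x_3)^\top$, so $E(\bar p)\mathbf x=0$. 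The three $x_i$ are distinct, so $\mathbf x$ and $\mathds{1}_3$ are linearly independent, whence $\mathrm{rank}(E(\bar p))\le 1$ and the sole possibly nonzero eigenvalue of $E(\bar p)$ equals $\mathrm{tr}(E(\bar p))$.

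To compute the trace I exploit that every cosine $\cos\theta^k_{ij}$ attains $\pm 1$ at a collinear configuration and hence is at an extremum, giving $\partial\cos\theta^k_{ij}/\partial p_l(\bar p)=0$ for every vertex $l$. Substituting this into $R_W^\top(\bar p)e(\bar p)=0$ decouples the agent equations and forces $e_{ij}=0$ for every $(i,j)\in\mathcal E$ (exactly as done for the one-angle topology within the proof of Lemma~\ref{incorrect_collinear}, and analogously for the topologies of Fig.~\ref{Fig:diff_figs_b} and Fig.~\ref{Fig:diff_figs_c}). Only the angle errors $e^k_{ij}=\pm 1-\cos(\theta^k_{ij})^*$ remain, and they are strictly nonzero because the desired angles satisfy $(\theta^k_{ij})^*\in(0,\pi)$. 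Reading the diagonal contribution of each angle constraint off the closed-form expression for $\alpha^\top$ from the proof of Lemma~\ref{incorrect_collinear} yields
\[
\mathrm{tr}\bigl(E(\bar p)\bigr)=\sum_{(k,i,j)\in\mathcal A}c_{kij}(\bar p)\,e^k_{ij},\quad c_{kij}(\bar p)=\frac{2}{\|z_{ki}\|\|z_{kj}\|}-2\cos\theta^k_{ij}\left(\frac{1}{\|z_{ki}\|^{2}}+\frac{1}{\|z_{kj}\|^{2}}\right).
\]

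The sign check is then uniform over the three topologies of Fig.~\ref{Fig:diff_figs_a}--\ref{Fig:diff_figs_c} and the three collinear geometries of Fig.~\ref{Fig:formation_forms}. If the apex $k$ of an angle constraint is the middle vertex of the collinear triple then $\cos\theta^k_{ij}=-1$, making $c_{kij}(\bar p)>0$ while $e^k_{ij}=-1-\cos(\theta^k_{ij})^*<0$; otherwise $\cos\theta^k_{ij}=+1$, whence $c_{kij}(\bar p)<0$ by the elementary inequality $\tfrac{1}{ab}<\tfrac{1}{a^2}+\tfrac{1}{b^2}$, while $e^k_{ij}=1-\cos(\theta^k_{ij})^*>0$. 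Either way the product $c_{kij}(\bar p)\,e^k_{ij}$ is strictly negative, so summing over $\mathcal A$ gives $\mathrm{tr}(E(\bar p))<0$ and hence the required negative eigenvalue. The main obstacle is the agent-by-agent reduction that forces the distance errors to vanish in each of the three topologies; once that is in hand, the sign analysis collapses to a single middle-versus-endpoint dichotomy for every angle constraint.
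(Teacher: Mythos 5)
Your proof is correct, but it takes a genuinely different route from the paper's. The paper argues by contradiction: assuming $E(\bar p)\succeq 0$, it evaluates the quadratic form $p^{*\top}[E(\bar p)\otimes I_d]p^*$ at the \emph{desired} configuration, subtracts $\bar p^\top[E(\bar p)\otimes I_d]\bar p=0$, and after an algebraic rearrangement shows this quantity is strictly negative — a computation that works even without knowing the distance errors vanish at $\bar p$. You instead pin down the spectrum directly: the two null vectors $\mathds{1}_3$ and $\mathbf x$ give $\rank(E(\bar p))\le 1$, so the single possibly nonzero eigenvalue is the trace, and the trace is a sum over $\mathcal A$ of terms $c_{kij}\,e^k_{ij}$ each of which is negative by the middle-vertex/endpoint dichotomy. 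Your trace coefficient $c_{kij}=\alpha_{\bar p_k}+\beta_{\bar p_i}+\gamma_{\bar p_j}$ agrees with the paper's explicit coefficients, your observation that the cosine gradients vanish at collinearity (hence $e_{ij}=0$ on $\mathcal E$ at any incorrect equilibrium) is correct and checks directly via the projectors $P_{z_{ki}}z_{kj}=0$, and the sign analysis is right. What your approach buys is a stronger conclusion (the full spectrum: two zeros and exactly one negative eigenvalue) and a more transparent sign argument; what it costs is the extra structural input about collinear critical points of the cosines, which the paper's route does not need. One point to tighten: centroid stationarity only gives $((\mathds{1}_3^\top E(\bar p))\otimes I_2)\bar p=0$, a single vector equation, which does not by itself force $\mathds{1}_3^\top E=0$. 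The correct justification is translation invariance of the coefficients — equivalently, the paper's own identities $\alpha_{\bar p_k}+\alpha_{\bar p_i}+\alpha_{\bar p_j}=0$ (and likewise for $\beta,\gamma$) together with the Laplacian structure of the distance terms give zero row sums, and symmetry of $E$ does the rest. With that repair the argument is complete.
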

\begin{proof}
We first define $\alpha,\beta$ and $\gamma$ as $\alpha=\frac{\partial}{\partial p_k}\cos\theta^k_{ij}$, $\beta=\frac{\partial}{\partial p_i}\cos\theta^k_{ij}$ and $\gamma=\frac{\partial}{\partial p_j}\cos\theta^k_{ij}$, and let
 $\alpha_{p_k}$, $\alpha_{p_i}$ and $\alpha_{p_j}$ denote coefficients of $p_k$, $p_i$ and $p_j$ in $\alpha$, respectively. Similarly, $\beta_{p_k}$, $\beta_{p_i}$, $\beta_{p_j}$, $\gamma_{p_k}$, $\gamma_{p_i}$ and $\gamma_{p_j}$ are denoted.
Then, from the structure of the matrix $E$, we can have the following equation when $\mathcal{E} \neq \emptyset$ for a configuration $\hat{p}=[\hat{p}_{1}^\top,...,\hat{p}_{n}^\top]^\top \in \mathbb{R}^{2n}$.
\begin{align} \label{dege_unstable_eq01}
&\hat{p}^\top[E(\bar{p})\otimes I_d]\hat{p}  \nonumber \\
&= 2\sum_{(i,j)\in\mathcal{E}}e_{ij}(\bar{p})\norm{\hat{p}_i-\hat{p}_j}^2  \nonumber \\
&\quad+\sum_{(k,i,j) \in \mathcal{A}}e_{A_h}(\bar{p}) \big(\hat{p}_k^\top\hat{p}_k \alpha_{\bar{p}_k} +\hat{p}_k^\top\hat{p}_i \alpha_{\bar{p}_i}+\hat{p}_k^\top\hat{p}_j \alpha_{\bar{p}_j}
+ \hat{p}_i^\top\hat{p}_k \beta_{\bar{p}_k} +\hat{p}_i^\top\hat{p}_i \beta_{\bar{p}_i}+\hat{p}_i^\top\hat{p}_j \beta_{\bar{p}_j}
+ \hat{p}_j^\top\hat{p}_k \gamma_{\bar{p}_k} +\hat{p}_j^\top\hat{p}_i \gamma_{\bar{p}_i}+\hat{p}_j^\top\hat{p}_j \gamma_{\bar{p}_j}\big)\nonumber \\
&=2\sum_{(i,j)\in\mathcal{E}}e_{ij}(\bar{p})\norm{\hat{p}_i-\hat{p}_j}^2 - \sum_{(k,i,j) \in \mathcal{A}} e_{A_h}(\bar{p}) \big( \norm{\hat{p}_k-\hat{p}_i}^2 \beta_{\bar{p}_k} 
+\norm{\hat{p}_k-\hat{p}_j}^2\alpha_{\bar{p}_j}+\norm{\hat{p}_i-\hat{p}_j}^2\gamma_{\bar{p}_i} \big),
\end{align}
where $e_{ij}(\bar{p})=\norm{z(\bar{p})_{ij}}^2-\norm{z^*_{ij}}^2$, $e_{A_h}(\bar{p})=\left.A_{h_{kij}}\right|_{p=\bar{p}}-A_{h_{kij}}^*$,
\begin{align}
\beta_{\bar{p}_k}&=\frac{-1}{\norm{\bar{z}_{ki}}\norm{\bar{z}_{kj}}}+\left(\frac{\norm{\bar{z}_{ki}}^{2} + \norm{\bar{z}_{kj}}^{2} - \norm{\bar{z}_{ij}}^{2}}{2\norm{\bar{z}_{ki}}\norm{\bar{z}_{kj}}}\right)\frac{1}{\norm{\bar{z}_{ki}}^2}, \nonumber\\
\alpha_{\bar{p}_j}&=\frac{-1}{\norm{\bar{z}_{ki}}\norm{\bar{z}_{kj}}}+\left(\frac{\norm{\bar{z}_{ki}}^{2} + \norm{\bar{z}_{kj}}^{2} - \norm{\bar{z}_{ij}}^{2}}{2\norm{\bar{z}_{ki}}\norm{\bar{z}_{kj}}}\right)\frac{1}{\norm{\bar{z}_{kj}}^2}, \nonumber\\
\gamma_{\bar{p}_i}&=\frac{1}{\norm{\bar{z}_{ki}}\norm{\bar{z}_{kj}}}, \nonumber
\end{align}
$\bar{z}_{ij} = \bar{p}_{i} - \bar{p}_{j}$
and it holds that $\alpha_{\bar{p}_i}=\beta_{\bar{p}_k}$,$\alpha_{\bar{p}_j}=\gamma_{\bar{p}_k}$ and $\beta_{\bar{p}_j}=\gamma_{\bar{p}_i}$, and it also holds that $\alpha_{\bar{p}_k}+\alpha_{\bar{p}_i}+\alpha_{\bar{p}_j}=0$, $\beta_{\bar{p}_k}+\beta_{\bar{p}_i}+\beta_{\bar{p}_j}=0$ and $\gamma_{\bar{p}_k}+\gamma_{\bar{p}_i}+\gamma_{\bar{p}_j}=0$. In the case of $\mathcal{E} = \emptyset$, we have 
\begin{align} \label{dege_unstable_eq02}
\hat{p}&^\top[E(\bar{p})\otimes I_d]\hat{p} \nonumber \\
=&- \sum_{(k,i,j) \in \mathcal{A}} e_{A_h}(\bar{p}) \big( \norm{\hat{p}_k-\hat{p}_i}^2 \beta_{\bar{p}_k} +\norm{\hat{p}_k-\hat{p}_j}^2\alpha_{\bar{p}_j} 
+\norm{\hat{p}_i-\hat{p}_j}^2\gamma_{\bar{p}_i} \big).
\end{align}

Suppose that $E(\bar{p})$ is positive semidefinite. Then, $\hat{p}^\top[E(\bar{p})\otimes I_d]\hat{p} \geq 0$ for any configuration $\hat{p} \in \mathbb{R}^{2n}$.
Consider the desired configuration $p^*=[p^{*\top}_{1},...,p^{*\top}_{n}]^\top \in \mathbb{R}^{2n}$ in $\mathcal{P}^*$. With the fact that the equality (\ref{dege_unstable_eq01}) and $\bar{p}^\top[E(\bar{p})\otimes I_d]\bar{p}=0$, the following equation holds.
\begin{align} \label{dege_unstable_eq03}
p&^{*\top}[E(\bar{p})\otimes I_d]p^* \nonumber \\
=& p^{*\top}[E(\bar{p})\otimes I_d]p^* - \bar{p}^\top[E(\bar{p})\otimes I_d]\bar{p}\nonumber \\
=&2\sum_{(i,j)\in\mathcal{E}}e_{ij}(\bar{p})\norm{p^*_i-p^*_j}^2 - 2\sum_{(i,j)\in\mathcal{E}}e_{ij}(\bar{p})\norm{\bar{p}_i-\bar{p}_j}^2 
-\sum_{(k,i,j) \in \mathcal{A}} e_{A_h}(\bar{p}) \left( \norm{z^*_{ki}}^2 \beta_{\bar{p}_k} + \norm{z^*_{kj}}^2\alpha_{\bar{p}_j}+\norm{z^*_{ij}}^2\gamma_{\bar{p}_i} \right) \nonumber \\
&+\sum_{(k,i,j) \in \mathcal{A}} e_{A_h}(\bar{p}) \left( \norm{\bar{z}_{ki}}^2 \beta_{\bar{p}_k} + \norm{\bar{z}_{kj}}^2\alpha_{\bar{p}_j}+\norm{\bar{z}_{ij}}^2\gamma_{\bar{p}_i} \right) \nonumber \\
=&2\sum_{(i,j)\in\mathcal{E}}e_{ij}(\bar{p})\norm{p^*_i-p^*_j}^2 - 2\sum_{(i,j)\in\mathcal{E}}e_{ij}(\bar{p})\norm{\bar{p}_i-\bar{p}_j}^2 
-\sum_{(k,i,j) \in \mathcal{A}} e_{A_h}(\bar{p}) \left( \norm{z^*_{ki}}^2 \beta_{\bar{p}_k} + \norm{z^*_{kj}}^2\alpha_{\bar{p}_j}+\norm{z^*_{ij}}^2\gamma_{\bar{p}_i} \right) \nonumber \\
&+ \sum_{(k,i,j) \in \mathcal{A}} e_{A_h}(\bar{p}) \frac{\norm{\bar{z}_{ki}}^{2} + \norm{\bar{z}_{kj}}^{2} - \norm{\bar{z}_{ij}}^{2}}{2\norm{\bar{z}_{ki}}\norm{\bar{z}_{kj}}}\left(\frac{2\norm{z^*_{ki}}\norm{z^*_{kj}}}{\norm{\bar{z}_{ki}}\norm{\bar{z}_{kj}}}\right)
- \sum_{(k,i,j) \in \mathcal{A}} e_{A_h}(\bar{p}) \frac{\norm{\bar{z}_{ki}}^{2} + \norm{\bar{z}_{kj}}^{2} - \norm{\bar{z}_{ij}}^{2}}{2\norm{\bar{z}_{ki}}\norm{\bar{z}_{kj}}}\left(\frac{2\norm{z^*_{ki}}\norm{z^*_{kj}}}{\norm{\bar{z}_{ki}}\norm{\bar{z}_{kj}}}\right) \nonumber \\ %
=&- 2\sum_{(i,j)\in\mathcal{E}} \card{e_{ij}(\bar{p})}^2 - \sum_{(k,i,j) \in \mathcal{A}}\card{e_{A_h}(\bar{p})}^2\left(\frac{2\norm{z^*_{ki}}\norm{z^*_{kj}}}{\norm{\bar{z}_{ki}}\norm{\bar{z}_{kj}}}\right) \nonumber \\
&+ \sum_{(k,i,j) \in \mathcal{A}} e_{A_h}(\bar{p}) \frac{\norm{\bar{z}_{ki}}^{2} + \norm{\bar{z}_{kj}}^{2} - \norm{\bar{z}_{ij}}^{2}}{2\norm{\bar{z}_{ki}}\norm{\bar{z}_{kj}}}\bigg(\frac{2\norm{z^*_{ki}}\norm{z^*_{kj}}}{\norm{\bar{z}_{ki}}\norm{\bar{z}_{kj}}} -\frac{\norm{z^*_{ki}}^2}{\norm{\bar{z}_{ki}}^2}-\frac{\norm{z^*_{kj}}^2}{\norm{\bar{z}_{kj}}^2} \bigg),
\end{align}
where $z^*_{ij} = p^*_{i} - p^*_{j}$ and it holds that
$\norm{\bar{z}_{ik}}^2 \beta_{\bar{p}_k} + \norm{\bar{z}_{jk}}^2\alpha_{\bar{p}_j}+\norm{\bar{z}_{ij}}^2\gamma_{\bar{p}_i}=0$. From Lemma \ref{incorrect_collinear}, the incorrect equilibrium point $\bar{p}$ lies on the 1-dimensional space. Thus, $\left(\frac{\norm{\bar{z}_{ki}}^{2} + \norm{\bar{z}_{kj}}^{2} - \norm{\bar{z}_{ij}}^{2}}{2\norm{\bar{z}_{ki}}\norm{\bar{z}_{kj}}}\right)^2 =\left.\left(\cos\theta^k_{ij}\right)^2\right|_{p=\bar{p}}= 1$, which implies that
\begin{align}
e_{A_h}(\bar{p}) \left(\frac{\norm{\bar{z}_{ki}}^{2} + \norm{\bar{z}_{kj}}^{2} - \norm{\bar{z}_{ij}}^{2}}{2\norm{\bar{z}_{ki}}\norm{\bar{z}_{kj}}}\right)
=1-\left(\left.\cos\theta^k_{ij}\right)\right|_{p=p^*} \left(\left.\cos\theta^k_{ij}\right)\right|_{p=\bar{p}} \geq 0.
\end{align}
It also holds that  $\left(\frac{2\norm{z^*_{ki}}\norm{z^*_{kj}}}{\norm{\bar{z}_{ki}}\norm{\bar{z}_{kj}}} -\frac{\norm{z^*_{ki}}^2}{\norm{\bar{z}_{ki}}^2}-\frac{\norm{z^*_{kj}}^2}{\norm{\bar{z}_{kj}}^2} \right) =-\left( \frac{\norm{z^*_{ki}}}{\norm{\bar{z}_{ki}}}-\frac{\norm{z^*_{kj}}}{\norm{\bar{z}_{kj}}}\right)^2 \leq 0$.
Therefore, we have $p^{*\top}[E(\bar{p})\otimes I_d]p^* < 0 $ when $\mathcal{E} \neq \emptyset$. Similarly, when $\mathcal{E} = \emptyset$, it also holds that $p^{*\top}[E(\bar{p})\otimes I_d]p^* < 0 $. However,
this conflicts with $\hat{p}^\top[E(\bar{p})\otimes I_d]\hat{p} \geq 0$ for any configuration $\hat{p}$. Hence, we have the statement.
\end{proof}

\begin{theorem}\label{Thm:incorrect_unstable}
The system (\ref{control_law01}) at any incorrect equilibrium point $\bar{p}$ is unstable. 
\end{theorem}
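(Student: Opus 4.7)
The plan is to leverage the block-diagonal structure of $\bar{J}(\bar{p})$ derived just before the statement, together with Lemmas \ref{incorrect_collinear} and \ref{incorrect_negative_eigen}, and then appeal to Lyapunov's indirect method.

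First, I would recall that by Lemma \ref{incorrect_collinear}, any incorrect equilibrium $\bar{p}$ is necessarily collinear, and (modulo a rigid-body transformation, which preserves stability because the control law is translation-, rotation- and scaling-invariant via Lemma \ref{Lem:grad_law_properties}) we may place $\bar{p}$ on the $x$-axis. With this placement, the computation in \eqref{J_incor} shows that after the permutation $T$,
\begin{equation*}
\bar{J}(\bar{p}) \;=\; T^\top J(\bar{p}) T \;=\; \begin{bmatrix} \bar{J}_{11}(\bar{p}) & 0 \\ 0 & E(\bar{p}) \end{bmatrix},
\end{equation*}
so the spectrum of $\bar{J}(\bar{p})$ is the union of the spectra of $\bar{J}_{11}(\bar{p})$ and $E(\bar{p})$. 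Since $T$ is an orthogonal similarity, $J(\bar{p})$ has the same eigenvalues as $\bar{J}(\bar{p})$.

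Next, I would invoke Lemma \ref{incorrect_negative_eigen}, which asserts that $E(\bar{p})$ has at least one negative eigenvalue. By the block-diagonal structure, $J(\bar{p})$ inherits this negative eigenvalue. Since the Jacobian of the closed-loop dynamics at $\bar{p}$ is exactly $-J(\bar{p})$ by the definition in \eqref{negative_jaco}, the linearization of \eqref{control_law01} at $\bar{p}$ possesses an eigenvalue with strictly positive real part.

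Finally, I would conclude via Lyapunov's indirect (first) method (see e.g. \cite{khalil2002nonlinear}) that $\bar{p}$ is an unstable equilibrium of \eqref{control_law01}. The main obstacle was really already absorbed into Lemma \ref{incorrect_negative_eigen}, namely producing the configuration $p^{*}$ whose quadratic form against $E(\bar{p})\otimes I_d$ is strictly negative; given that lemma and the block decomposition \eqref{J_incor}, the present theorem reduces to a short spectral argument. One subtle point worth stating explicitly is that the extra curvature terms coming from the angle error $e_{A_h}$ (the summations in \eqref{negative_jaco} involving $\frac{\partial}{\partial p}C_l$) do not contribute to the lower-right block once $\bar{p}$ is placed on the $x$-axis, because the $y$-components $P_{ly}$ all vanish; this is precisely what reduces the lower-right block to $E(\bar{p})$ and permits the clean invocation of Lemma \ref{incorrect_negative_eigen}.
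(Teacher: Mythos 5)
Your proposal is correct and follows essentially the same route as the paper: it uses the block-diagonal form \eqref{J_incor} obtained by placing the collinear equilibrium on the $x$-axis, pulls the negative eigenvalue of $E(\bar{p})$ from Lemma \ref{incorrect_negative_eigen} into $J(\bar{p})$, and concludes instability from the resulting positive eigenvalue of the linearization $-J(\bar{p})$. Your write-up is in fact slightly more careful than the paper's in spelling out the sign convention for the Jacobian and the appeal to Lyapunov's indirect method.
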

\begin{proof}
Since $\bar{J}(\bar{p})$ is of the form \eqref{J_incor}, if $E(\bar{p})$ has at least one negative eigenvalue then $\bar{J}(\bar{p})$ also has at least one negative eigenvalue.
From Lemma \ref{incorrect_negative_eigen}, we know that $E(\bar{p})$ has at least one negative eigenvalue and the matrix $\bar{J}(\bar{p})$ also does. Since eigenvalues of $\bar{J}(\bar{p})$ and $J(\bar{p})$ are the same, $J(\bar{p})$ also has at least one negative eigenvalue. Hence, the system (\ref{control_law01}) at any incorrect equilibrium point $\bar{p}$ is unstable. 
\end{proof}
\subsection{Almost global stability on $3$-agent formation in $\mathbb{R}^2$}
This subsection shows that if a configuration $p$ does not belong to $\mathcal{P}_i$ then $p$ does not approach $\mathcal{P}_i$ as time goes on. Finally, this subsection provides the main result of the almost global stability on $3$-agent formations in $\mathbb{R}^2$.

\begin{lemma}\label{Lem:incorrect_not_approching}
Let $p(0)$ denote an initial formation. If $p(0)$ given by the gradient flow law (\ref{control_law01}) does not belong to the set of incorrect equilibria, $\mathcal{P}_i$, then $p(t)$ does not approach $\mathcal{P}_i$ for any time $t \geq 0$.
\end{lemma}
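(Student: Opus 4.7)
The claim to prove is that any trajectory launched from $p(0)\notin\mathcal{P}_i$ never enters $\mathcal{P}_i$ at any later time. The natural tool is uniqueness of solutions of (\ref{control_law01}) in both forward and backward time, which is available because the right-hand side of (\ref{control_law01}) is Lipschitz continuous on the collision-free configuration space, as already recorded in Section~\ref{Sec:Formation_control_local_stab.}.

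The plan is a short contradiction argument. First, observe that every $\bar p\in\mathcal{P}_i$ is by definition a rest point of the gradient flow, so the constant function $q(t)\equiv\bar p$ is itself a solution of (\ref{control_law01}). Second, suppose toward a contradiction that there exists some $t_0>0$ with $p(t_0)=\bar p\in\mathcal{P}_i$. Applying the Picard--Lindel\"of uniqueness theorem backward in time to the data $(t_0,\bar p)$, the two solutions $p(\cdot)$ and $q(\cdot)\equiv\bar p$ must coincide on the whole interval $[0,t_0]$ on which both are defined and the dynamics remains smooth. In particular $p(0)=\bar p\in\mathcal{P}_i$, contradicting the hypothesis $p(0)\notin\mathcal{P}_i$. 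Hence $p(t)\notin\mathcal{P}_i$ for every $t\ge 0$.

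The main obstacle is technical: to apply uniqueness on the full interval $[0,t_0]$, the vector field $-R_W^\top(p)\,e(p)$ must remain smooth there, i.e., no two agents may collide (otherwise some $\|z_g\|\to 0$ makes angle-related terms in $R_W$ singular). This is exactly the content of the collision-avoidance property in Lemma~\ref{Lem:grad_law_properties}-\ref{Lem:col_avo}, which is guaranteed by the standing assumption on the initial separation of the agents. With that provision in place, the contradiction argument goes through with no further work.
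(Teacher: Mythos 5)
There is a genuine gap: you have proved a different (and strictly weaker) statement than the lemma asserts. Your backward-uniqueness argument shows only that $p(t)\neq\bar p$ for every \emph{finite} $t$ and every $\bar p\in\mathcal{P}_i$, i.e., the trajectory never \emph{reaches} an incorrect equilibrium in finite time. But the lemma says $p(t)$ does not \emph{approach} $\mathcal{P}_i$, and the way it is invoked in the proof of Theorem~\ref{Thm:stability_system} makes clear that the required content is asymptotic: the LaSalle-type argument there yields convergence of the trajectory to the full equilibrium set $\mathcal{P}=\mathcal{P}^*\cup\mathcal{P}_i$, and this lemma is what rules out convergence to $\mathcal{P}_i$ so that one may conclude $e\to 0$. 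A gradient-flow trajectory generically converges to an equilibrium only as $t\to\infty$ without ever reaching it at finite time, so Picard--Lindel\"of uniqueness (forward or backward) says nothing about whether $p(t)\to\bar p\in\mathcal{P}_i$ in the limit. Your argument therefore cannot close the case that actually matters.

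The paper's proof attacks the asymptotic question with two ingredients your proposal does not use. First, by Lemma~\ref{incorrect_collinear} every point of $\mathcal{P}_i$ is a collinear configuration, i.e., $C_{\bar p}$ is rank deficient, while Lemma~\ref{Lem:grad_law_properties}-\eqref{Lem:grad_law_properties_Cp} shows $\rank(C_p(t))$ is preserved along the flow, so a non-collinear initial condition stays non-collinear and the collinear set is invariant. Second, by Theorem~\ref{Thm:incorrect_unstable} the negative Jacobian $J(\bar p)$ has a negative eigenvalue at every $\bar p\in\mathcal{P}_i$, i.e., each incorrect equilibrium has an unstable direction (transverse to the collinear manifold). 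Combining these, as in Theorem~2 of Sun et al.\ (2015), trajectories starting off the invariant collinear set cannot converge to the unstable equilibria lying inside it. To repair your proof you would need to add this stable-manifold/instability argument; the finite-time non-reachability you established is true but is not the content of the lemma.
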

\begin{proof}
For a 3-agent formation in $\mathbb{R}^2$, an incorrect equilibrium point $\bar{p}$ always lies on a hyperplane, i.e., $\rank(C_{\bar{p}}(t))<d$ from Lemma \ref{incorrect_collinear}. Additionally, the linearized version of the system \eqref{control_law01}, i.e., negative Jacobian $J(p)$, at an incorrect equilibrium point $\bar{p}$ has at least one negative eigenvalue from Theorem \ref{Thm:incorrect_unstable}. 
Hence, this property is proved straightforward by a similar approach to the proof of Theorem 2 in Sun et al. (2015)\cite{sun2015rigid}.
\end{proof}

\begin{theorem}\label{Thm:stability_system}
If a framework $(\mathcal{G},\mathcal{A},p(0))$ with $n=3$ is minimally GIWR and $p(0)$ is not in the incorrect equilibrium set $\mathcal{P}_i$ in $\mathbb{R}^{2}$, then $p(0)$ exponentially converges to a point in the desired equilibrium set $\mathcal{P}^*$.
\end{theorem}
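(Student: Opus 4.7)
The proof plan is to combine Lyapunov monotonicity, LaSalle's invariance principle, the non-attraction property of the incorrect equilibria, and the local exponential result already available in Theorem~\ref{Thm:exp_min}.

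First I would use the hypothesis that $(\mathcal{G},\mathcal{A},p(0))$ is minimally GIWR to set up rank-preservation of the weak rigidity matrix along the trajectory. For $n=3$ in $\mathbb{R}^2$, minimal GIWR forces $\rank R_W(p(0))=2n-3=3$, which is incompatible with the three agents being collinear; hence $C_p(0)\in\mathbb{R}^{2\times 3}$ has full row rank. Lemma~\ref{Lem:grad_law_properties}\eqref{Lem:rank_preserv_d+1} then ensures that $(\mathcal{G},\mathcal{A},p(t))$ stays minimally GIWR for every $t\ge 0$, so that $R_W(p(t))R_W^\top(p(t))$ remains positive definite along the trajectory.

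Next I would verify that the trajectory lies in a compact forward-invariant set. With $V(e)=\tfrac12 e^\top e$ as the Lyapunov function, \eqref{eq:dot_potential_fn} gives $\dot V=-\|R_W^\top e\|^2\le 0$, so $V(e(t))$ is non-increasing. This keeps every $\|z_{ij}(t)\|$ and every cosine $A_h(t)$ uniformly bounded; combined with the centroid invariance from Lemma~\ref{Lem:grad_law_properties}\eqref{Lem:cent_scale_inv}, it confines $p(t)$ to a compact set $\Omega_0\subset\mathbb{R}^6$. LaSalle's invariance principle then implies that $p(t)$ approaches the largest invariant subset of $\{p\in\Omega_0:R_W^\top(p)e(p)=0\}=\mathcal{P}\cap\Omega_0=(\mathcal{P}^*\cup\mathcal{P}_i)\cap\Omega_0$. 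Since $p(0)\notin\mathcal{P}_i$, Lemma~\ref{Lem:incorrect_not_approching} rules out any accumulation on $\mathcal{P}_i$; because the centroid is fixed and $\mathcal{P}^*$ modulo rigid motion is $0$-dimensional, the $\omega$-limit set reduces to a single point $p^\star\in\mathcal{P}^*$, i.e.\ $p(t)\to p^\star$.

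Finally I would upgrade asymptotic to exponential convergence. Once $p(t)$ enters the sub-level set $\Psi$ around $p^\star$ used in the proof of Theorem~\ref{Thm:exp_min} (which happens in finite time by the convergence $p(t)\to p^\star$), minimal GIWR at $p^\star$ (inherited by a neighborhood) yields a uniform lower bound $\lambda>0$ on the smallest eigenvalue of $R_W R_W^\top$ in $\Psi$, and the Gronwall argument in Theorem~\ref{Thm:exp_min} directly gives $V(e(t))\le V(e(t_0))\exp(-2\lambda(t-t_0))$ from that finite entry time $t_0$ onward, hence exponential convergence of $p(0)$ to $p^\star\in\mathcal{P}^*$.

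The main obstacle I anticipate is the step passing from ``the trajectory is bounded and $\omega$-limits lie in $\mathcal{P}^*\cup\mathcal{P}_i$, with no accumulation on $\mathcal{P}_i$'' to the sharper statement ``$p(t)$ converges to a single point of $\mathcal{P}^*$''. This requires ruling out pathological oscillations between different components of $\mathcal{P}^*$, for which I would lean on centroid invariance reducing the desired equilibrium set (after quotienting by $SE(2)$) to a discrete set, together with connectedness of the $\omega$-limit set of a bounded trajectory of an autonomous ODE; the non-attraction result of Lemma~\ref{Lem:incorrect_not_approching} then forces that connected invariant set to be a single point of $\mathcal{P}^*$, closing the gap.
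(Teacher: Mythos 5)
Your proposal is correct and follows essentially the same route as the paper: the Lyapunov function $V=\tfrac12 e^\top e$, exclusion of the incorrect equilibria via Theorem~\ref{Thm:incorrect_unstable} and Lemma~\ref{Lem:incorrect_not_approching}, rank preservation from Lemma~\ref{Lem:grad_law_properties}-\eqref{Lem:rank_preserv_d+1} to keep $R_WR_W^\top$ positive definite, and a Gronwall estimate for the exponential rate. Your two-stage treatment (LaSalle plus a uniform eigenvalue bound obtained only after a finite entry time into a neighborhood of the limit) is in fact slightly more careful than the paper's direct claim that $\lambda$ is ``the minimum eigenvalue of $R_WR_W^\top$ along this trajectory,'' but it is a refinement of the same argument rather than a different one.
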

\begin{proof}
We define a Lyapunov function candidate as
$
V(e)=\frac{1}{2}e^\top e$. Notice that $V(e) \geq 0$  with $V(e)=0$ if and only if $e = 0$ and $V$ is radially unbounded.
The time derivative of $V(e)$ along a trajectory of ${e}$ is calculated as 
\begin{align}
\dot{V} = e^\top \dot{e} 
=-e^\top R_W R_W^\top e
= - \norm{R_W^\top e}^2. \label{eq:dot_potential_fn}
\end{align} 
We know that $\dot{V} \leq 0$, $\dot{V}$ is equal to zero if and only if $R_{W}^\top e=0$. From Theorem \ref{Thm:incorrect_unstable}, Lemma \ref{Lem:incorrect_not_approching} and the assumption that $p(0) \notin \mathcal{P}_i$, it follows that $e \to 0$ asymptotically fast and the error system (\ref{error_dym}) has an asymptotically stable equilibrium at the origin.

From $p(0) \notin \mathcal{P}_i$, the initial positions do not lie on the 1-dimensional space, i.e., $C_p(0)$ is of full row rank. Then, from Lemma \ref{Lem:grad_law_properties}-(\ref{Lem:rank_preserv_d+1}), $\rank\left(R_W(p(0))\right)=\rank\left(R_W(p(t))\right)$ for all $t\geq0$ in $\mathbb{R}^{d}$.
It follows from $p(0) \notin \mathcal{P}_i$ and Lemma \ref{Lem:grad_law_properties}-(\ref{Lem:rank_preserv_d+1}) that $R_W R_W^\top$ is positive definite for all $t \geq 0$. Henceforth, the equation (\ref{eq:dot_potential_fn}) satisfies
\begin{align}
\dot{V} \leq  -\lambda(R_W R_W^\top) \norm{e}^2, \nonumber
\end{align}
where $\lambda$ denotes the minimum eigenvalue of $R_W R_W^\top$ along this trajectory. 
Moreover, since $V=\frac{1}{2}e^\top e$, the following inequality holds.
\begin{equation}
\dot{V}(e) \leq  -2\lambda V(e),  \label{eq:V_leq_fn02}
\end{equation}
and it follows that $V(e(t))\leq V(e(0))\text{exp}(-2\lambda t)$ by Gronwall-Bellman Inequality \cite[Lemma A.1]{khalil2002nonlinear}.
Therefore, $e \to 0$ exponentially fast and the error system (\ref{error_dym}) has an exponentially stable equilibrium at the origin, which in turn implies that $p \to p^*$ for all initial positions outside the set $\mathcal{P}_i$, where $p^*$ is the desired formation. Hence, we conclude that the formation control system \eqref{control_law01} almost globally exponentially converges to the desired formation in $\mathcal{P}^*$.
\end{proof}

\section{Simulation examples}
\label{Sec:Simul}
We provide four examples to support our main results. We first define a squared distance error and a cosine error as $e_{ij}=\norm{z_{g_{ij}}}^2-\norm{z_{g_{ij}}^*}^2$, $(i,j) \in \mathcal{E}$ and $e^k_{ij}=A_{h_{kij}}-A_{h_{kij}}^*$, respectively.
\begin{figure}[]
\centering
\subfigure[Trajectories of six agents from initial formation to final formation.]{
\qquad\qquad \includegraphics[height = 6.0cm]{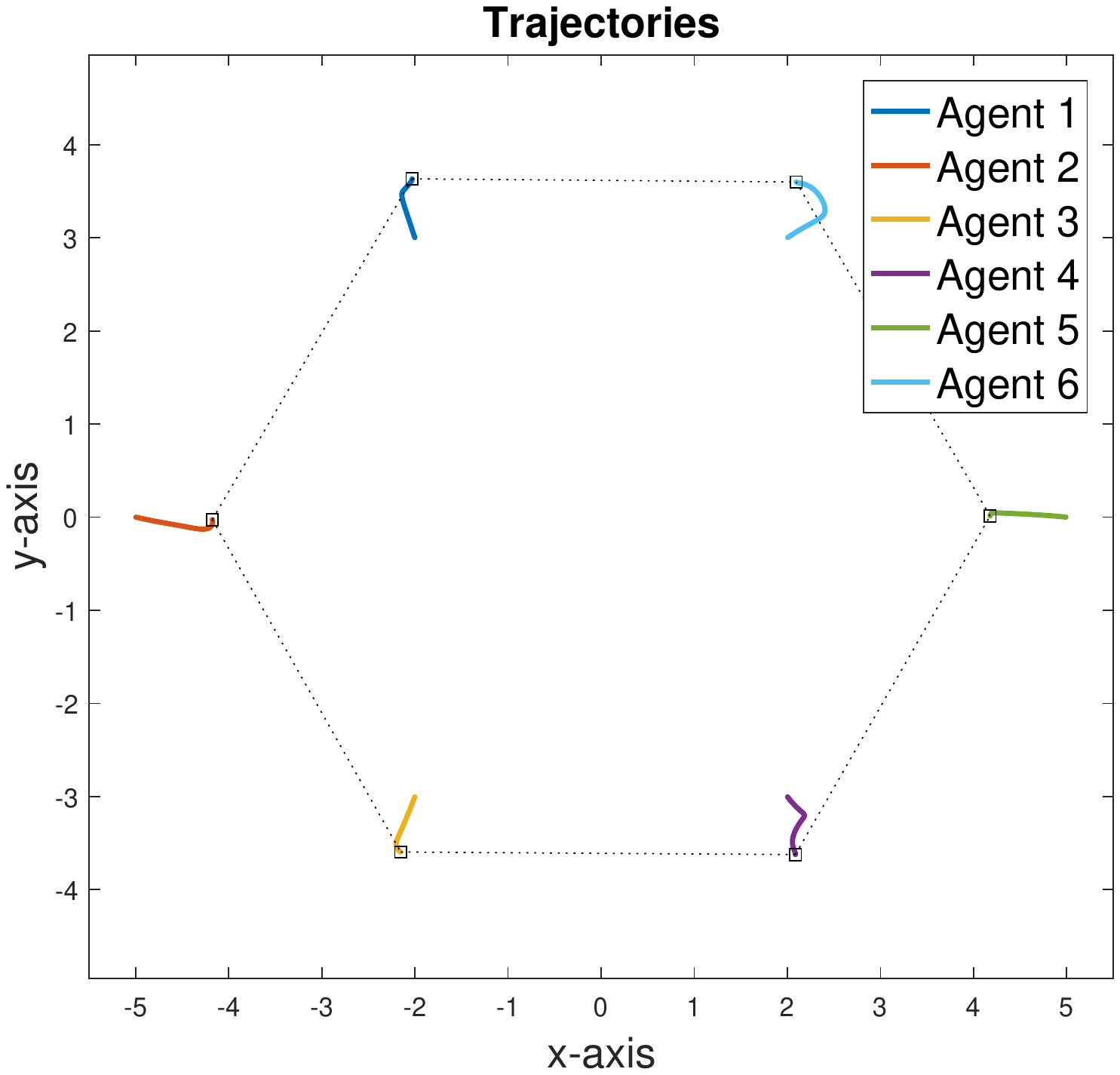} \qquad\qquad
\label{Fig:simul_1}
}\qquad
\subfigure[Exponential convergence of the errors. For high convergence rate of the errors related to the angle constraints, we add proportional gains to the errors.]{
\qquad\includegraphics[height = 6.0cm]{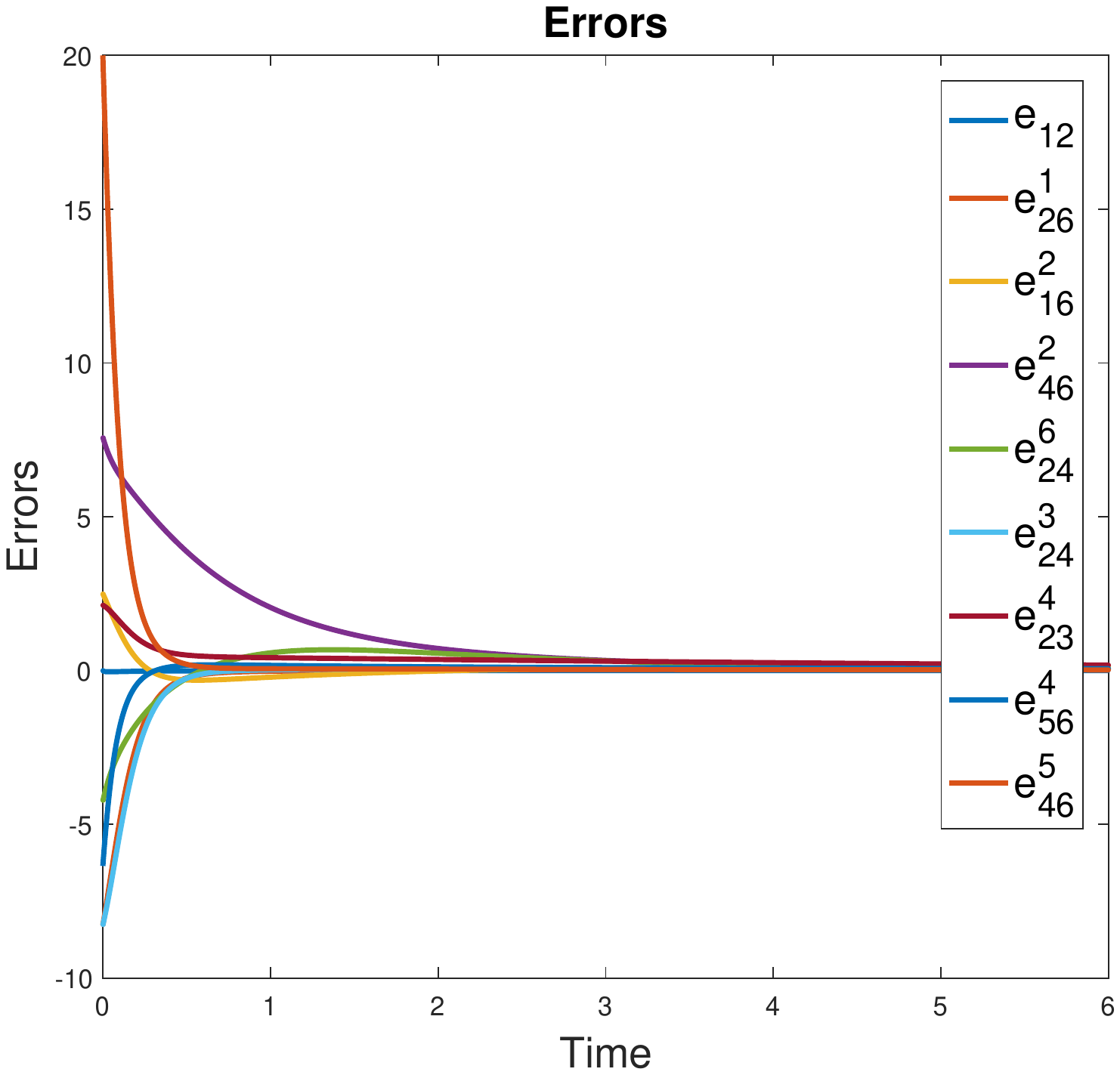}\qquad
\label{Fig:simul_2}
}
\caption{Simulation 1: 6-agent formation control with one distance and eight angle constraints in $\mathbb{R}^2$ under the system \eqref{control_law01}.}
\label{Fig:simul_12}
\end{figure} %
\begin{figure}[]
\centering
\subfigure[Trajectories of five agents from initial formation to final formation.]{
\qquad\qquad \includegraphics[height = 6.8cm]{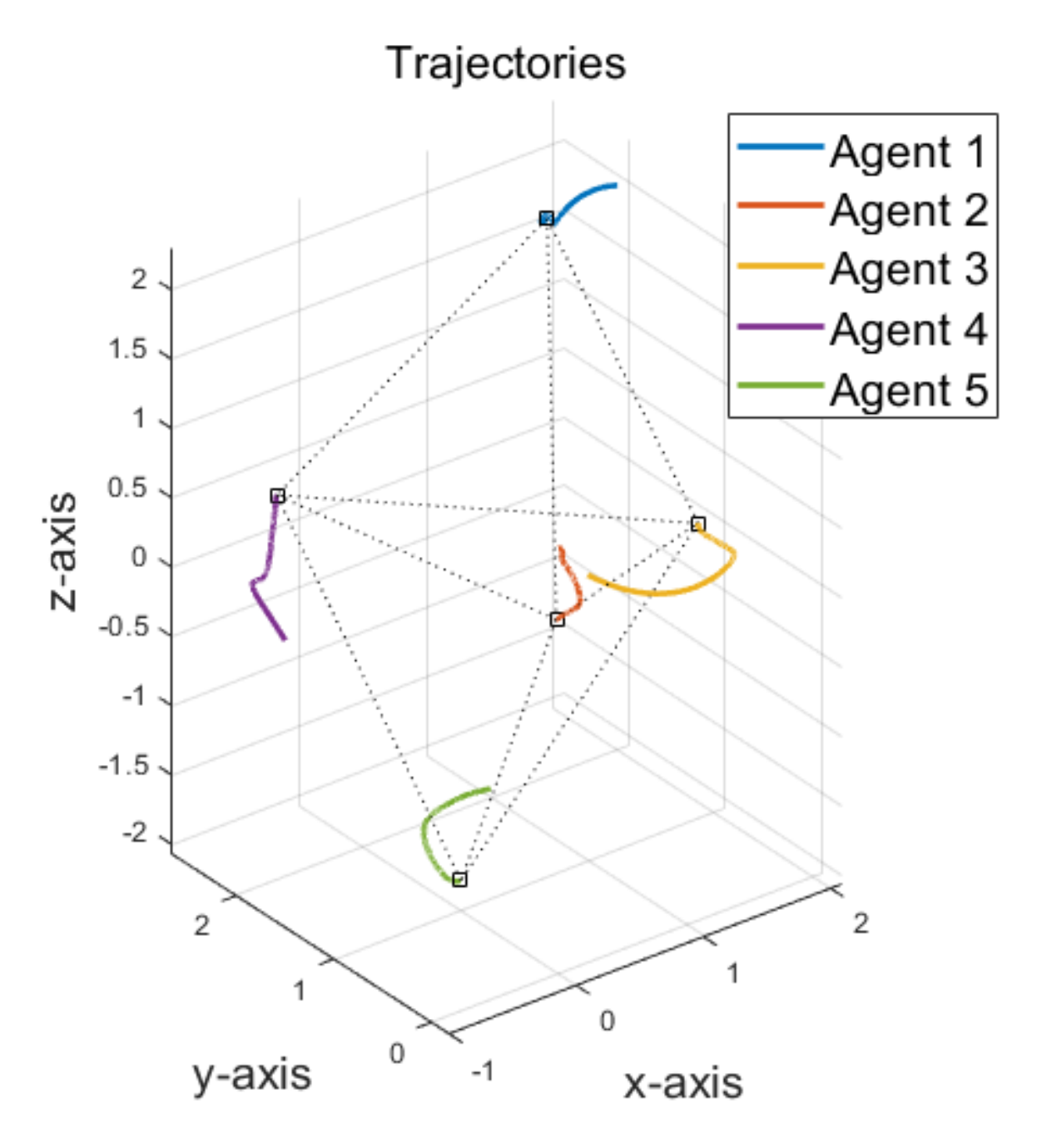} \qquad\qquad
\label{Fig:simul_3d_1}
} \quad
\subfigure[Exponential convergence of the errors. For high convergence rate of the errors related to the angle constraints, we add proportional gains to the errors.]{
\qquad\includegraphics[height = 6.1cm]{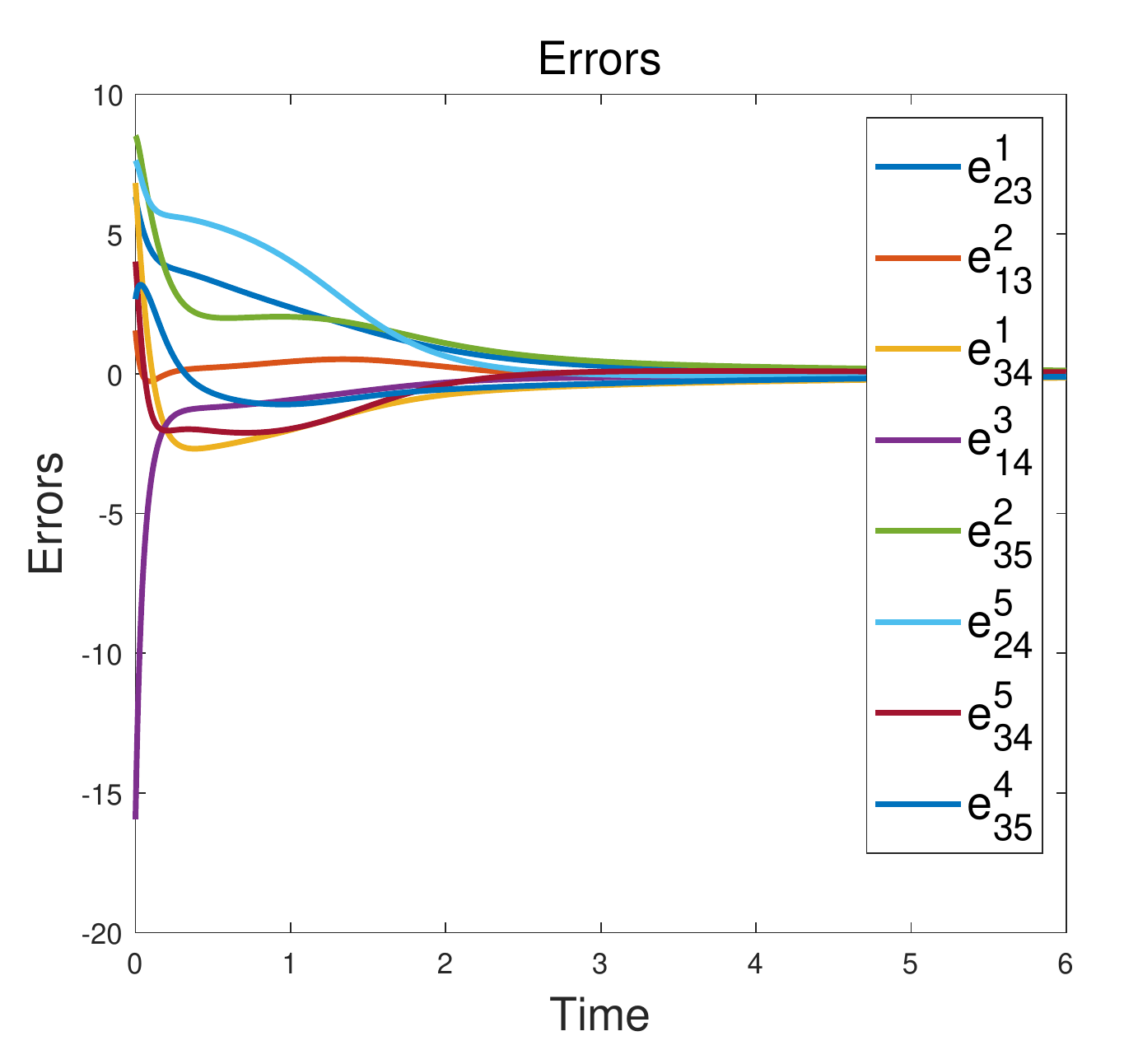}\qquad
\label{Fig:simul_3d_2}
}
\caption{Simulation 1: 5-agent formation control with eight angle constraints in $\mathbb{R}^3$ under the system \eqref{control_law01}.}
\label{Fig:simul_3D}
\end{figure} %

\begin{figure}[]
\centering
\subfigure[Trajectories of three agents from initial formation to final formation.]{\label{Fig:simul_3}
\qquad  \includegraphics[height = 5.8cm]{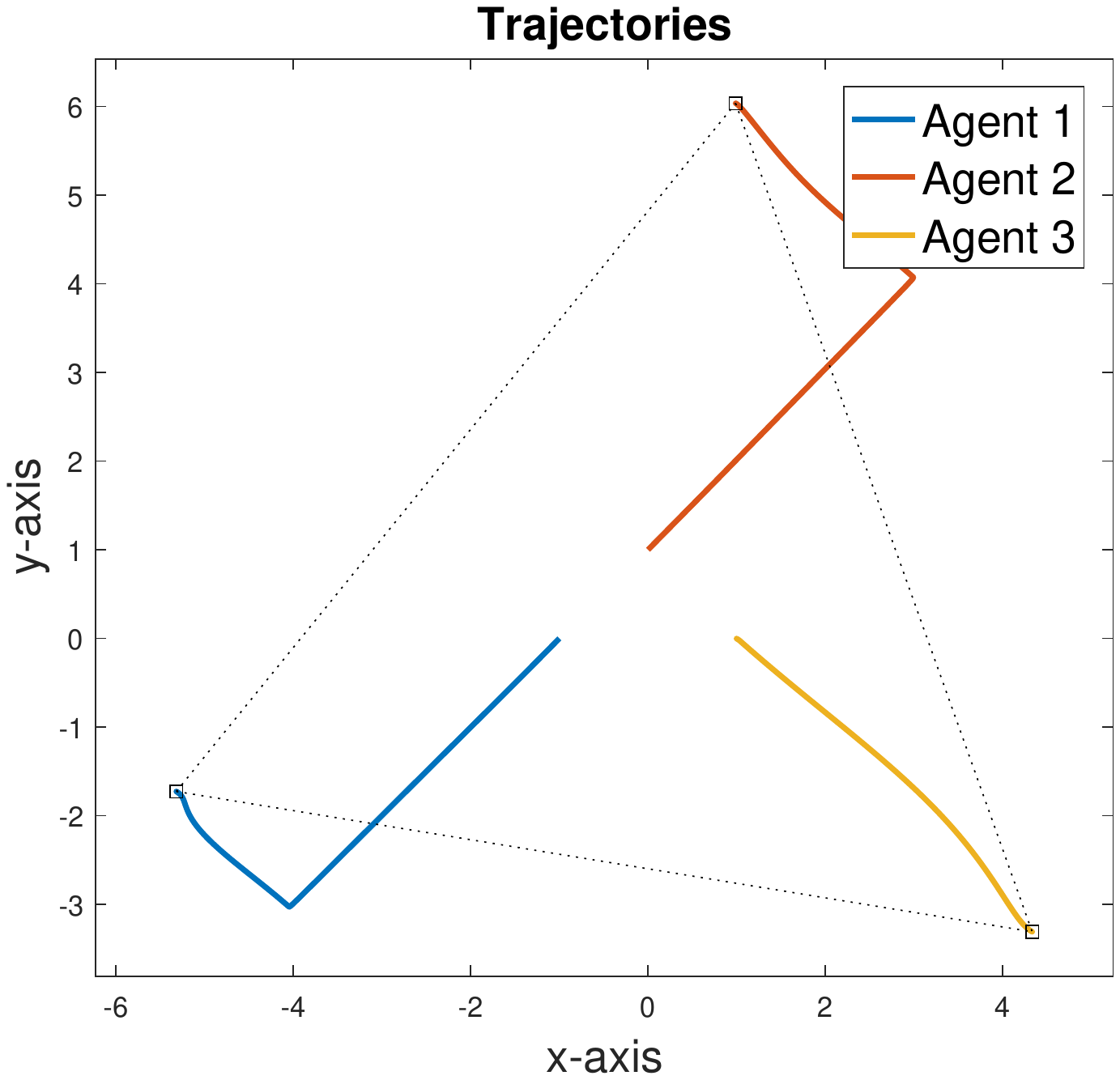}  \qquad 
}\qquad
\subfigure[Exponential convergence of the errors. For high convergence rate of the errors related to the angle constraints, we add proportional gains to the errors.]{\label{Fig:simul_4}
\qquad   \includegraphics[height = 5.8cm]{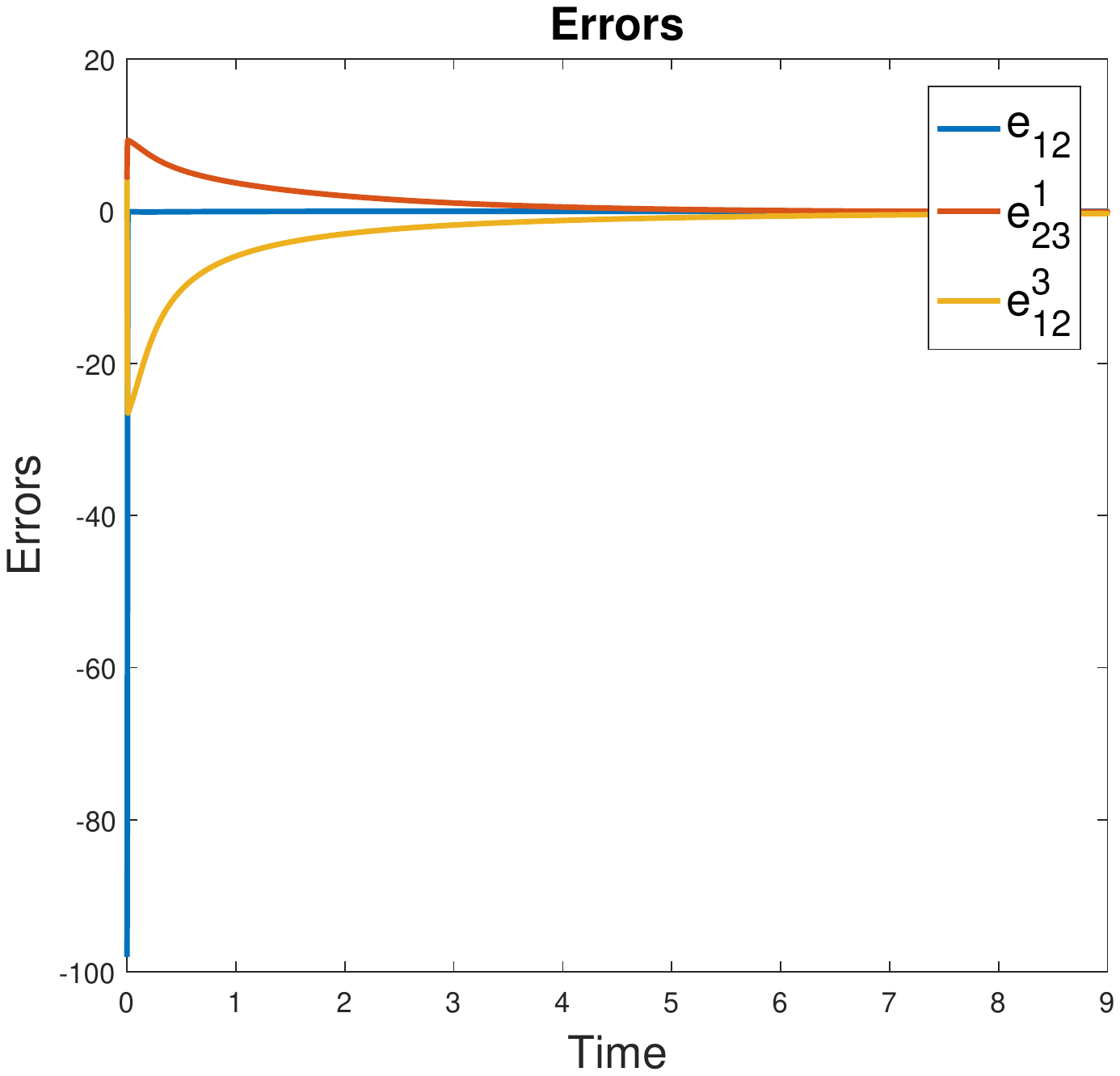}  \qquad  
} 
\caption{Simulation 2: 3-agent formation control with one distance and two angle constraints in $\mathbb{R}^2$ under the system \eqref{control_law01}.}
\label{Fig:simul_34}
\end{figure} %

\begin{figure}[]
\centering
\subfigure[Trajectories of three agents with collinear formation at the beginning.]{\label{Fig:simul_4}
\qquad \, \includegraphics[height = 5.8cm]{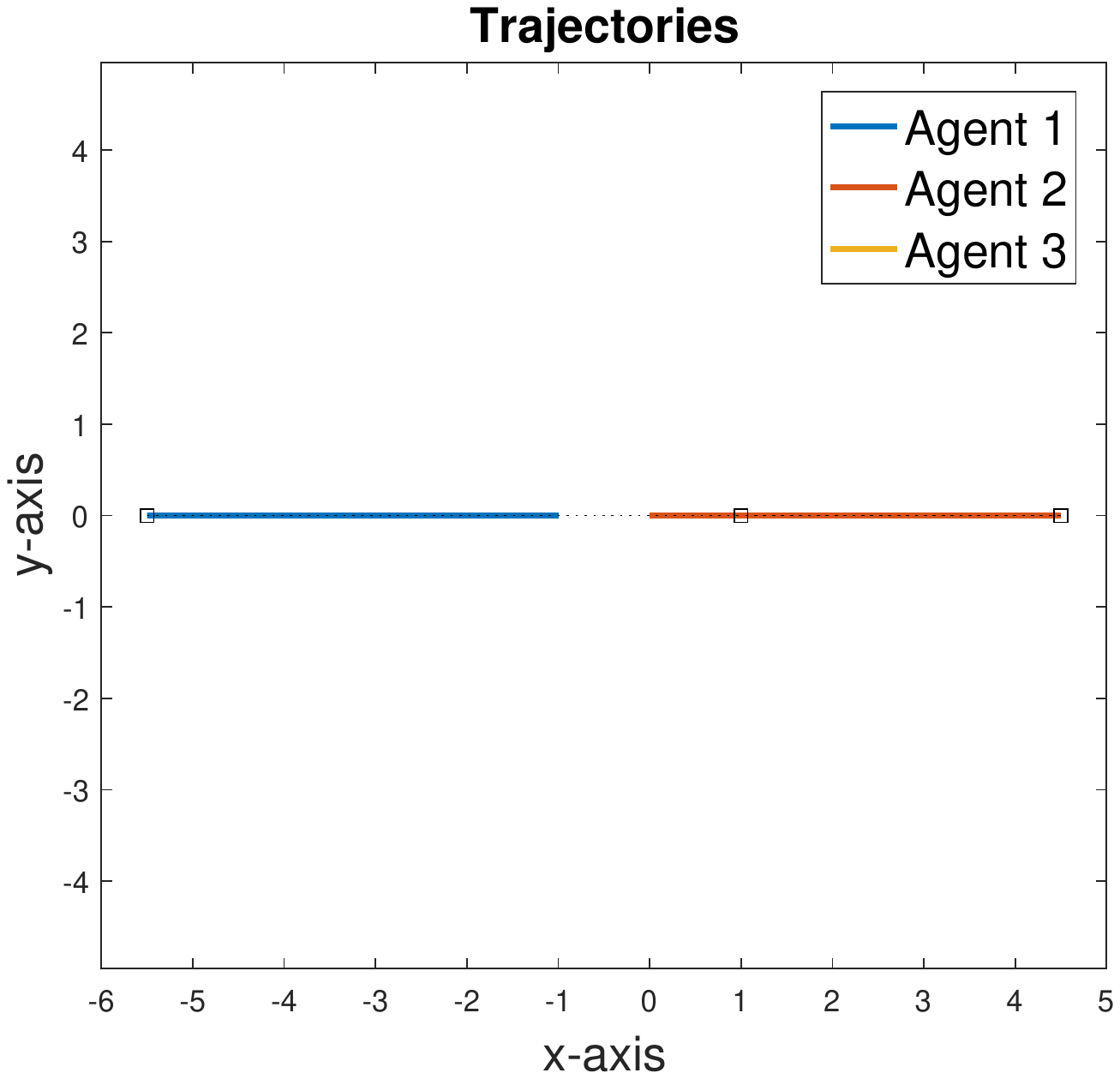} \qquad \,
}\qquad
\subfigure[Convergence of the errors with collinear formation at the beginning.]{\label{Fig:simul_5}
\qquad  \includegraphics[height = 5.8cm]{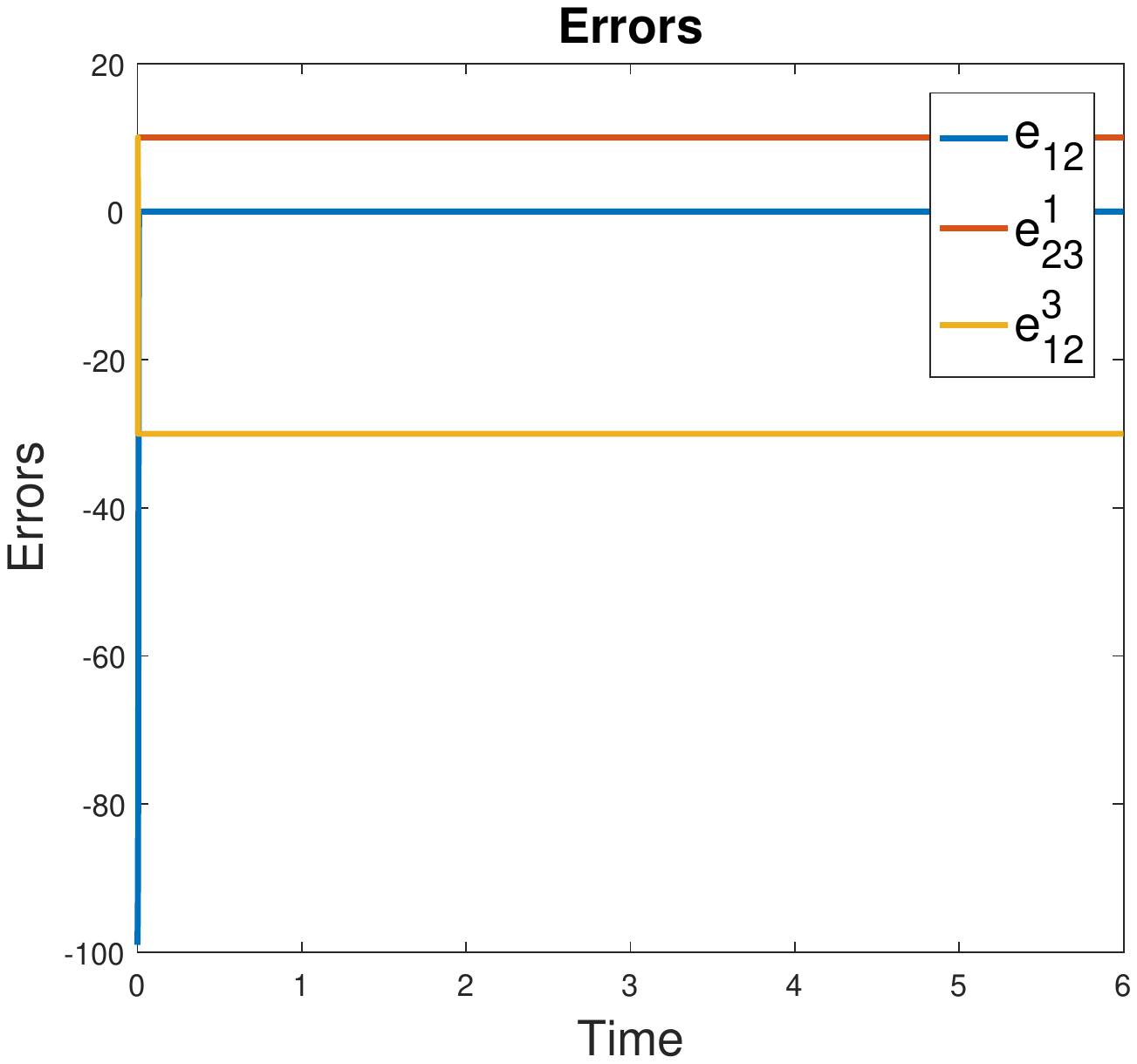} \qquad 
}
\caption{Simulation 3: 3-agent formation control with one distance and two angle constraints under initial collinear-formation in $\mathbb{R}^2$ and the system \eqref{control_law01}.}
\label{Fig:simul_56}
\end{figure} %

For the first simulation, consider a $6$-agent formation control system in $\mathbb{R}^2$ to show that the desired formation shape is locally achieved by the control law as discussed in Section \ref{Sec:Formation_control_local_stab.}. We choose 9 constraints which constitute 1 edge constraint and 8 angle constraints. By using the constraints, the desired formation is given as a minimally GIWR formation, and desired target values are chosen as $\norm{z_{g_{12}}^*}^2=20$, $A_{h_{126}}^*=A_{h_{324}}^*=A_{h_{546}}^*=\cos120^\circ$, $A_{h_{246}}^*=A_{h_{624}}^*=\cos60^\circ$ and $A_{h_{216}}^*=A_{h_{423}}^*=A_{h_{456}}^*=\cos30^\circ$. The local exponential convergence of the $6$-agent formation control system is shown in Fig.~\ref{Fig:simul_12}. In the simulation, the initial formation for each agent are given so that it is GIWR close to the desired formation. 
For the second simulation, Fig.~\ref{Fig:simul_3D} shows a $5$-agent formation control in $\mathbb{R}^3$, where the desired formation is minimally GIWR. Moreover, we choose 8 angle constraints such that $A_{h_{123}}^*=A_{h_{213}}^*=A_{h_{134}}^*=A_{h_{314}}^*=A_{h_{235}}^*=A_{h_{524}}^*=A_{h_{534}}^*=A_{h_{435}}^*=\cos60^\circ$. The local exponential convergence is shown in Fig.~\ref{Fig:simul_3d_2}.
For the third simulation, consider another formation control system such that the desired formation shape is almost globally achieved by the control law as discussed in Section \ref{Sec:Formation_control_almost}. In this simulation, we choose 3 constraints which constitute 1 edge constraint and 2 angle constraints, and set the constraints as $\norm{z_{g_{12}}^*}^2=100$, $A_{h_{123}}^*=A_{h_{312}}^*=\cos60^\circ$. The initial formation is randomly generated except that the initial formation is collinear. Then, the almost globally exponential convergence of the $3$-agent formation control system in $\mathbb{R}^2$ is shown in Fig.~\ref{Fig:simul_34}.
In particular, as the final simulation, if the initial formation is collinear then the formation converges to a point in incorrect equilibria as shown in Fig.~\ref{Fig:simul_56}.

\section{Conclusion} 
\label{Sec:Conclusion}
This paper studied the GWR theory and stability for formation control systems based on the GWR theory in the $2$- and $3$-dimensional spaces. 
Based on the GWR theory, we can determine a rigid formation shape with a set of pure inter-agent distances and angles.
 In particular, with using the rank condition of the weak rigidity matrix, we can conveniently examine whether a formation shape is locally rigid or not. We also showed that both GWR and GIWR for a framework are generic properties, and the GWR theory is necessary for the distance rigidity theory. 
We then applied the GWR theory to the formation control with the gradient flow law. As the first result of its applications,
we proved the locally exponential stability for GIWR formations in the $2$ and $3$-dimensional spaces. 
Finally, for $3$-agent formations in the $2$-dimensional space, we showed the almost globally exponential stability of the formation control system.

\bibliographystyle{IEEEtran}
\bibliography{rigidity2018}
\section*{Appendix}
\begin{lemma}\label{Lem:poly_sub01}
Let us define a vector $\mbf{v}(p)$ as
	\begin{equation}
	\mbf{v}(p)=
	\begin{bmatrix}
	z_{31}^\top &z_{41}^\top &\cdots &z_{n1}^\top
	\end{bmatrix}^\top \in \mathbb{R}^{d(n-2)}.
	\end{equation}
Then, under the control system \eqref{control_law01}, we can calculate $\norm{z_{21}}$ by using the entries in $\mbf{v}(p)$.
\end{lemma}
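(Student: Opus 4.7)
The plan is to combine the invariance of the centroid $p^o$ and the scale $p^s$ under the control law~\eqref{control_law01}---which was established in Lemma~\ref{Lem:grad_law_properties}\eqref{Lem:cent_scale_inv}---with the kinematic identities that relate $p_1,p_2$ to the entries of $\mbf{v}(p)$. Without loss of generality I would choose coordinates so that $p^o=0$; because $p^o$ is time-invariant along every trajectory of \eqref{control_law01}, this choice is consistent throughout the motion.

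From $\sum_{i=1}^n p_i = 0$ together with $p_i = p_1 + z_{i1}$ for $i\geq 2$ (and $z_{11}=0$), the centroid identity gives
\[
p_1 = -\tfrac{1}{n}\bigl(z_{21} + v_s\bigr), \qquad p_2 = \tfrac{1}{n}\bigl((n-1)z_{21} - v_s\bigr),
\]
where $v_s := \sum_{i=3}^n z_{i1}$ is read off directly from $\mbf{v}(p)$. Substituting these, together with $p_i = p_1 + z_{i1}$ for $i\geq 3$, into the scale identity $\sum_{i=1}^n \|p_i\|^2 = n(p^s)^2$ and expanding, the $\|p_1\|^2$-dependent terms collapse cleanly, leaving the single scalar relation
\begin{equation}
(n-1)\|z_{21}\|^2 \;-\; 2\,z_{21}^\top v_s \;=\; n^2(p^s)^2 \;+\; \|v_s\|^2 \;-\; n\sum_{i=3}^n\|z_{i1}\|^2, \label{eq:planscale}
\end{equation}
every term of which---except the cross product $z_{21}^\top v_s$---is already determined by the entries of $\mbf{v}(p)$ and the invariant $p^s$.

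The main obstacle is the cross term $z_{21}^\top v_s$, since a priori it depends on the direction of $z_{21}$ and not only on its magnitude $\|z_{21}\|$. I would eliminate this dependence by passing to the rotational gauge $S(z_{21})$ used in the proof of Lemma~\ref{Lem:exist_smf}: since $S(z_{21})z_{21}=[0,\|z_{21}\|]^\top$, one has
\[
z_{21}^\top v_s \;=\; \bigl[S(z_{21})z_{21}\bigr]^\top \bigl[S(z_{21})v_s\bigr] \;=\; \|z_{21}\|\,\beta,
\]
where $\beta := [S(z_{21})v_s]_2$ is a scalar recoverable from the rotated vectors $S(z_{21})z_{i1}$ (equivalently, from the entries of $\mbf{v}(p)$ once the rotational gauge is fixed). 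Plugging this back into \eqref{eq:planscale} turns it into a quadratic in $\|z_{21}\|$,
\[
(n-1)\|z_{21}\|^2 \;-\; 2\beta\,\|z_{21}\| \;-\; \Bigl[n^2(p^s)^2 + \|v_s\|^2 - n\sum_{i=3}^n\|z_{i1}\|^2\Bigr] \;=\; 0,
\]
whose positive root yields $\|z_{21}\|$ in closed form. The remaining technical step in the write-up, which I expect to be the subtle point, is to verify that the discriminant remains strictly positive along every trajectory of \eqref{control_law01} and that continuity from the initial configuration consistently selects the physical branch; both facts follow from the existence at every time $t\ge 0$ of a unique $\|z_{21}(t)\|>0$ generated by the smooth flow \eqref{control_law01}.
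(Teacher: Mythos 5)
Your proof is correct and follows essentially the same route as the paper's: both exploit the invariance of the centroid and of the scale $p^s$ from Lemma~\ref{Lem:grad_law_properties}-\eqref{Lem:cent_scale_inv}, fix the rotational gauge so that $z_{21}$ lies along a coordinate axis, and reduce the scale identity $\norm{p-\mathds{1}_n\otimes p^o}=const.$ to a scalar equation in the single unknown $\norm{z_{21}}$. Your version is simply more explicit---it writes out the resulting quadratic and flags the cross-term/root-selection subtleties that the paper's terser ``the only variable is $\norm{z_{21}}$'' argument leaves implicit.
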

\begin{proof}
First, without loss of generality, suppose $z_{21}$ is on the $x$-axis.
Then, let us observe such fact
\begin{flalign} 
p-\mathds{1}_n\otimes p^o
	=\begin{bmatrix}
	\frac{n}{n} p_1 - p^o \\
	\frac{n}{n} p_2 - p^o \\
	\vdots\\
	\frac{n}{n} p_n - p^o
	\end{bmatrix}
=\begin{bmatrix}
\frac{(p_1-p_2)+(p_1-p_3)+\cdots+(p_1-p_n)}{n}\\
\frac{(p_2-p_1)+(p_2-p_3)+\cdots+(p_2-p_n)}{n}\\
\vdots\\
\frac{(p_n-p_1)+(p_n-p_2)+\cdots+(p_n-p_{n-1})}{n}
\end{bmatrix}.& \label{eq:appendix01}
\end{flalign}
Since $z_{21}$ is on the $x$-axis, the $x$-axis value of $z_{21}$ is equal to $\norm{z_{21}}$ and further
the variable in $\norm{p-\mathds{1}_n\otimes p^o}$ is only $\norm{z_{21}}$ with the fact that $z_{ij}=z_{i1}-z_{j1}$ 
for all $i,j \in \mathcal{V}$.
Therefore, since $p^s=\norm{p-\mathds{1}_n\otimes p^o}/\sqrt{n}$ is invariant for all $t\geq0$ from Lemma \ref{Lem:grad_law_properties}-\eqref{Lem:cent_scale_inv}, we can calculate $\norm{z_{21}}$ with the value of $\norm{p(0)-\mathds{1}_n\otimes p^o(0)}$ and entries in $\mbf{v}(p)$.
\end{proof}

%
\noindent \textbf{Example 1.}\,
We first denote some notations by $p=[p_1,\,p_2,\,p_3]^\top$, $e_{12}=\norm{z_{12}}^2-\norm{z_{12}^*}^2$, $e^1_{23}=\cos(\theta^1_{23})-\cos((\theta^1_{23})^*)$ and $e^3_{12}=\cos(\theta^3_{12})-\cos((\theta^3_{12})^*)$. Then, based on the framework in Fig.~\ref{Fig:diff_figs_b}, the weak rigidity function is given by $F_W=[\norm{z_{12}}^2,\,\cos(\theta^1_{23}),\, \cos(\theta^3_{12})]^\top$, and the weak rigidity matrix is given by
$$
R_W=\frac{\partial F_W}{\partial p}=\begin{bmatrix}
2z_{12}^\top & -2z_{12}^\top & 0 \\
\frac{\partial}{\partial p_1} \cos(\theta^1_{23}) & \frac{\partial}{\partial p_2} \cos(\theta^1_{23}) & \frac{\partial}{\partial p_3} \cos(\theta^1_{23})\\
\frac{\partial}{\partial p_1} \cos(\theta^3_{12}) & \frac{\partial}{\partial p_2} \cos(\theta^3_{12}) & \frac{\partial}{\partial p_3} \cos(\theta^3_{12})
\end{bmatrix}.
$$
Let us observe the proposed controller $\dot{p}=-R_W^\top e$, where $e=[e_{12},\,e^1_{23},\,e^3_{12}]^\top$. Then, the controllers for each agent are given by 
\begin{align}
\dot{p}_1=&-2z_{12}e_{12}-e_{23}^1 \left(\frac{\partial}{\partial p_1} \cos(\theta^1_{23})\right)^\top-e_{12}^3 \left(\frac{\partial}{\partial p_1} \cos(\theta^3_{12})\right)^\top
\end{align}
\begin{align}
\dot{p}_2=&-2z_{21}e_{12}-e_{23}^1\left(\frac{\partial}{\partial p_2} \cos(\theta^1_{23})\right)^\top -e_{12}^3\left(\frac{\partial}{\partial p_2} \cos(\theta^3_{12})\right)^\top \label{Eq_sub_agent2}\\
\dot{p}_3=&-e_{23}^1\left(\frac{\partial}{\partial p_3} \cos(\theta^1_{23})\right)^\top -e_{12}^3\left(\frac{\partial}{\partial p_3} \cos(\theta^3_{12})\right)^\top.
\end{align}

Consider the controller for agent $2$, where it holds that
\begin{align}
\frac{\partial}{\partial p_2} \cos(\theta^1_{23})
=\frac{\partial}{\partial p_2} \frac{z^\top_{21}}{\norm{z_{21}}} \frac{z_{31}}{\norm{z_{31}}}
=\frac{z^\top_{31}}{\norm{z_{31}}} \frac{1}{\norm{z_{21}}}\left( I_d - \frac{z_{21}z^\top_{21}}{\norm{z_{21}}^2}\right), \label{Eq_sub_01}\\
\frac{\partial}{\partial p_2} \cos(\theta^3_{12})
=\frac{\partial}{\partial p_2} \frac{z^\top_{13}}{\norm{z_{13}}} \frac{z_{23}}{\norm{z_{23}}}
=\frac{z^\top_{13}}{\norm{z_{13}}} \frac{1}{\norm{z_{23}}}\left( I_d - \frac{z_{23}z^\top_{23}}{\norm{z_{23}}^2}\right). \label{Eq_sub_02}
\end{align}Then, it is obvious from \eqref{Eq_sub_01} and \eqref{Eq_sub_02} that we need only two relative positions w.r.t. neighbor agents, i.e., $z_{21}$ and $z_{23}$, for \eqref{Eq_sub_agent2}. Other controllers for agents $1$ and $3$ give similar results. Thus, with the aid of this example, it is easy to see that the general controller for each agent is given by
\begin{align}
\dot{p}_k=&
-\underbrace{2\sum_{j\in\mathcal{N}^d_k}\left(\norm{z_{kj}}^2-\norm{z_{kj}^*}^2\right)(p_k-p_j)}_{(j,k) \in \mathcal{E}} 
- \underbrace{\sum_{i,j\in\mathcal{N}^a_k}\left( \cos{\theta_{ij}^{k}}- \cos{\left(\theta_{ij}^{k}\right)^*}\right) \left(\frac{\partial}{\partial p_k}\cos{\theta_{ij}^{k}}\right)^\top}_{(k,i,j) \in \mathcal{A}} \nonumber \\
&- \underbrace{\sum_{j,k\in\mathcal{N}^a_i}\left( \cos{\theta_{jk}^{i}}- \cos{\left(\theta_{jk}^{i}\right)^*}\right) \left(\frac{\partial}{\partial p_k}\cos{\theta_{jk}^{i}}\right)^\top}_{\text{if }\exists (i,j,k) \in \mathcal{A}}. 
\end{align}
\end{document}